\tikzset{
	main/.style={circle, minimum size = 5mm, thick, draw =black!80, node distance = 10mm},
	connect/.style={-latex, thick},
	box/.style={rectangle, draw=black!100}
}
\DeclarePairedDelimiterX\Basics[1](){ #1}
\DeclarePairedDelimiterX{\infdivx}[2]{(}{)}{#1\;\delimsize\|\;#2}
\DeclareMathOperator*{\argmin}{arg\,min}
\newcommand{\distas}[1]{\mathbin{\overset{#1}{\kern\z@\sim}}}%
\newsavebox{\mybox}\newsavebox{\mysim}
\newcommand{\distras}[1]{%
	\savebox{\mybox}{\hbox{\kern3pt$\scriptstyle#1$\kern3pt}}%
	\savebox{\mysim}{\hbox{$\sim$}}%
	\mathbin{\overset{#1}{\kern\z@\resizebox{\wd\mybox}{\ht\mysim}{$\sim$}}}%
}
\newlist{inparaenum}{enumerate}{2}
\setlist[inparaenum]{nosep}
\setlist[inparaenum,1]{label=\bfseries\arabic*.}
\setlist[inparaenum,2]{label=\arabic{inparaenumi}\emph{\alph*})}
\newtheorem{cor}{Corollary}[section]
\newtheorem{prop}{Proposition}[section]
\newtheorem{assumption}{Assumption}
\newtheorem{definition}{Definition}
\newtheorem{remark}{Remark}
\newtheorem{thm}{Theorem}
\newcommand{\Wc}{\mathcal{W}}
\newcommand{\x}{X^{(1)}}
\newcommand{\zx}{Z^{(1)}}
\newcommand{\y}{X^{(2)}}
\newcommand{\zy}{Z^{(2)}}
\newcommand{\n}{\eta^{(1)}}
\newcommand{\w}{\eta^{(2)}}
\newcommand{\bcF}{\bm{\mathcal{F}}}
\newcommand{\Nc}{\bm{\mathcal{N}}}
\newcommand{\Xb}{\textbf{X}}
\newcommand{\bl}{\Bigg |}
\newcommand{\Ex}{\mathbb{E}}
\newcommand{\Pb}{\mathbb{P}}
\newcommand{\Rb}{\mathbb{R}}
\newcommand{\Sc}{\mathcal{S}}
\newcommand{\Ec}{\mathcal{E}}
\newcommand{\xe}{X^{e}}
\newcommand{\xs}{X^{s}}
\newcommand{\xms}{X^{ms}}
\newcommand{\xmss}{X^{mss}}
\newcommand{\nee}{\eta^{e}}
\newcommand{\ns}{\eta^{s}}
\newcommand{\nms}{\eta^{ms}}
\newcommand{\tX}{\tilde{X}}
\newcommand{\tn}{\tilde{\eta}}
\setlist{nolistsep}
\newcommand{\la}{\langle}
\newcommand{\ra}{\rangle}
\newcommand{\ztp}{Z_T^{\perp}}
\titlespacing{\section}{0pt}{0pt}{0pt}
\begin{document}
	\twocolumn[
	
	\aistatstitle{Near optimal finite time identification of arbitrary linear dynamical systems}
	
\aistatsauthor{ Tuhin Sarkar \And Alexander Rakhlin }

\aistatsaddress{ MIT \And MIT } ]

	\begin{abstract}
We derive finite time error bounds for estimating general linear time-invariant (LTI) systems from a single observed trajectory using the method of least squares. We provide the first analysis of the general case when eigenvalues of the LTI system are arbitrarily distributed in three regimes:  stable, marginally stable, and explosive. Our analysis yields sharp upper bounds for each of these cases separately. We observe that although the underlying process behaves quite differently in each of these three regimes, the systematic analysis of a self--normalized martingale difference term helps bound identification error up to logarithmic factors of the lower bound. On the other hand, we demonstrate that the least squares solution may be statistically inconsistent under certain conditions even when the signal-to-noise ratio is high. 
\end{abstract}

	
	\smallskip

	\section{Introduction}
\label{intro}
Finite time system identification---the problem of estimating the parameters of an unknown dynamical system given a finite time series of its output---is an important problem in the context of time-series analysis, control theory, economics and reinforcement learning. In this work we will focus on obtaining sharp non--asymptotic bounds for \textit{linear} dynamical system identification using the ordinary least squares (OLS) method. Such a system is described by $X_{t+1} = AX_t + \eta_{t+1}$ where $X_t \in \Rb^d$ is the state of the system and $\eta_t$ is the unobserved process noise. The goal is to learn $A$ by observing only $X_t$'s. Our techniques can easily be extended to the more general case when there is a control input $U_t$, \textit{i.e.}, $X_{t+1} = AX_{t} + BU_t + \eta_{t+1}$. In this case $(A, B)$ are unknown, and we can choose $U_t$.

Linear systems are ubiquitous in control theory. For example, proportional-integral-derivative (PID) controller is a popular linear feedback control system found in a variety of devices, from planetary soft landing systems for rockets (see e.g.~\cite{accikmecse2013lossless}) to coffee machines. Further, linear approximations to many non--linear systems have been known to work well in practice. Linear systems also appear as auto--regressive (AR) models in time series analysis and econometrics. Despite its importance, sharp non--asymptotic characterization of identification error in such models was relatively unknown until recently.

In the statistics literature, correlated data is often dealt with using mixing--time arguments (see e.g. \cite{yu1994rates}). 
However, a fundamental limitation of the mixing-time method is that bounds deteriorate when the underlying process mixes slowly. For discrete linear systems, this happens when $\rho(A)$---the spectral radius of $A$---approaches $1$. As a result these methods cannot extend to the case when $\rho(A) \geq 1$. More recently there has been renewed effort in obtaining sharp non--asymptotic error bounds for linear system identification~\cite{faradonbeh2017finite,simchowitz2018learning}. Specifically,~\cite{faradonbeh2017finite} analyzed the case when the system is either stable ($\rho(A) < 1$) or purely explosive ($\rho(A) > 1$). For the case when $\rho(A) < 1$ the techniques in \cite{faradonbeh2017finite} are similar to the standard mixing time arguments and, as a result, suffer from the same limitations. When the system is purely explosive, the authors of \cite{faradonbeh2017finite} show that finite time identification is only possible if the system is regular, \textit{i.e.}, if the geometric multiplicity of eigenvalues greater than unity is one. However, as discussed in~\cite{simchowitz2018learning}, the bounds obtained in~\cite{faradonbeh2017finite} are suboptimal due to a decoupled analysis of the sample covariance, $\sum_{t=1}^T X_tX_t^{\prime}$, and the martingale difference term $\sum_{t=1}^T X_t \eta_{t+1}'$. A second approach, based on Mendelson's small--ball method, was studied in~\cite{simchowitz2018learning}. Such a technique eschewed the need for mixing-time arguments and sharper error bounds for $1 - C/T \leq \rho(A) \leq 1 + C/T$ could be obtained. The authors in~\cite{simchowitz2018learning} argue that a larger signal-to-noise ratio, measured by $\lambda_{\min}(\sum_{t=0}^{T-1} A^{t}A^{t \prime})$, makes it easier to estimate $A$.  Although this intuition is consistent for the case when $\rho(A) \leq 1$, it does not extend to the case when eigenvalues are far outside the unit circle. Since $X_T = \sum_{t=1}^T A^{T-t} \eta_{t}$, the behavior of $X_T$ is dominated by $\{\eta_1, \eta_2, \ldots \}$, \textit{i.e.}, the past, due to exponential scaling by $\{A^{T-1}, A^{T-2},\ldots\}$. As a result, $X_1$ depends strongly on $\{X_2, \ldots, X_T\}$ and standard techniques of creating ``independent'' blocks of covariates fail. 

The problem of system identification has received a lot of attention. Asymptotic results on identification of AR models can be found in~\cite{lai1983asymptotic}. Some of the earlier work on finite time identification in systems theory include~\cite{campi2002finite,vidyasagar2006learning}. A more general setting of the problem considered here is when $X_t$ is observed indirectly via its filtered version, \textit{i.e.}, $Y_t = CX_t$ where $C$ is unknown. The single input single output (SISO) version of this problem, \textit{i.e.}, when $Y_t, U_t$ are numbers, has been studied in~\cite{hardt2016gradient} under the assumption that system is stable. Provable guarantees for system identification in general linear systems was also studied in~\cite{oymak2018non}. However, the analysis there requires that $||A|| < 1$. Generalization bounds for time series forecasting of non--stationary and non--mixing processes have been developed in~\cite{forecasting_mohri}.  

	\section{Contributions}
\label{contributions}
In this paper we offer a new statistical analysis of the ordinary least squares estimator of the dynamics $X_{t+1} = A X_t + \eta_{t+1}$ with no inputs. Unlike previous work, we do not impose any restrictions on the spectral radius of $A$ and provide nearly optimal rates (up to logarithmic factors) for every regime of $\rho(A)$. The contributions of our paper can be summarized as follows
\begin{itemize}
	\item  At the center of our techniques is a systematic analysis of the sample covariance $\sum_{t=1}^T X_t X_t^{\prime}$ and a certain self normalized martingale difference term. Although such a coupled analysis is similar in flavor to~\cite{simchowitz2018learning}, it comes without the overhead of choosing a block size and applies to a general case when covariates grow exponentially in time. 
	 	
	\item Specifically, for the case when $\rho(A) \leq 1$, we recover the optimal finite time identification error rates previously derived in~\cite{simchowitz2018learning}. For the case when all eigenvalues are outside the unit circle, we argue that small ball methods cannot be used. Instead we use anti--concentration arguments discussed in~\cite{faradonbeh2017finite,lai1983asymptotic}. By leveraging subgaussian tail inequalities we sharpen previous error bounds by removing polynomial factors. We also show that this analysis is indeed tight by deriving a matching lower bound. 
	
	\item We provide the first analysis of the general case when eigenvalues of $A$ are arbitrarily distributed in three regimes: stable, marginally stable and explosive. This involves a careful analysis of the noise-covariate cross terms as the underlying process behaves differently in each of these regimes.

	\item We show that when $A$ does not satisfy certain regularity conditions, OLS identification is statistically inconsistent, even when signal-to-noise ratio is high. Our result indicates that consistency of OLS identification depends on the condition number of the sample covariance matrix, rather than the signal-to-noise ratio itself.

\end{itemize}
	
	\section{Notation and Definitions}

A linear time invariant system (LTI) is parametrized by a matrix, $A$, where the observed variable, $X_t$, indexed by $t$ evolves as
\begin{equation}
X_{t+1} = AX_t + \eta_{t+1}. \label{lti}
\end{equation}
Here $\eta_t$ is the noise process. 
Denote by $\rho_i(A)$ the absolute value of the $i^{th}$ eigenvalue of the $d \times d$ matrix $A$. Then 
\[
\rho_{\max}(A) = \rho_1(A) \geq \rho_2(A) \geq \hdots \geq \rho_{d}(A) = \rho_{\min}(A).
\]
Similarly the singular values of $A$ are denoted by $\sigma_i(A)$. For any matrix $M$, $||M||_{\text{op}} = ||M||_2$.
\begin{definition}
	\label{stable}
A stable LTI system is that where $\rho_{\max}(A) < 1$. An explosive LTI system is that where $\rho_{\min}(A) > 1$.
\end{definition}
For simplicity of exposition, we assume that $X_0 = 0$ with probability $1$. All the results can be obtained by assuming $X_0$ to be some bounded vector. 
\begin{definition}
    \label{isotropic}
A random vector $X \in \Rb^{d}$ is called isotropic if for all $x \in \Rb^d$ we have
\[
\Ex \langle X, x \rangle^2 = ||x||^2_2 
\]
\end{definition}
\begin{assumption}
	\label{subgaussian_noise}
	$\{\eta_t\}_{t=1}^{\infty}$ are i.i.d isotropic subgaussian and coordinates of $\eta_t$ are i.i.d. Further, let $f(x)$ be the pdf of each noise coordinate then the essential supremum of $f(\cdot)$ is bounded above by $C < \infty$.  
\end{assumption}

	We will deal with only regular systems, \textit{i.e.}, LTI systems where eigenvalues of $A$ with absolute value greater than unity have geometric multiplicity one. We will show that when $A$ is not regular, OLS is statistically inconsistent.

Define the data matrix $\Xb$ and the noise matrix $E$ as  
\[
\Xb =\begin{bmatrix}
 X_0^{\prime} \\ X^{\prime}_1 \\ \vdots \\X_{T}^{\prime}
\end{bmatrix},
~~~E =\begin{bmatrix}
\eta_1^{\prime} \\ \eta^{\prime}_2 \\ \vdots \\\eta_{T+1}^{\prime},
\end{bmatrix}
\]
where the superscript $a^{\prime}$ denotes the transpose.
Then $\Xb$, $E$ are $(T+1) \times d$ matrices. Consider the OLS solution
\begin{equation*}
\hat{A} = \argmin_{B} \sum_{t=0}^{T}||X_{t+1} - BX_{t}||^2_2.
\end{equation*}
One can show that 
\begin{equation}
\label{error_lse}
A - \hat{A} = ((\Xb' \Xb)^{+} \Xb^{\prime} E)^{\prime}
\end{equation}
where $M^{+}$ is the pseudo inverse of M. We define
\begin{equation*}
Y_T = \Xb^{\prime} \Xb = \sum_{t=0}^{T} X_t X_t^{\prime},~~~~ S_T = \Xb^{\prime} E = \sum_{t=0}^{T} X_t \eta_{t+1}^{\prime}.
\end{equation*}
To analyze the error in estimating $A$, we will aim to bound the norm of $(\Xb^{\prime} \Xb)^{+} \Xb^{\prime}$. 
\begin{table*}
		\begin{center}
		\begin{tabular}{|l|}
		\hline
		$T_{\eta}(\delta) = C\Big(\log{\frac{2}{\delta}} + d \log{5}\Big)$\\
		$T_{s}(\delta) = C\Big({ d \log{( \text{tr}(\Gamma_T(A))+1)} + 2d \log{\frac{5}{\delta}} }\Big)$ \\
		$c(A, \delta) = T_{s}(\frac{2\delta}{3T})$\\
		$\beta_0(\delta) = \inf{\Big\{\beta|\beta^2\sigma_{\min}(\Gamma_{\lfloor \frac{1}{\beta}\rfloor}(A)) \geq \Big(\frac{ 16ec(A, \delta)}{ T\sigma_{\min}(A A^{\prime})}\Big)\Big\}}$\\
		$T_{ms}(\delta) = \inf{\Big\{T \Big| T \geq \frac{Cc(A, \delta)}{ \sigma_{\min}(A A^{\prime})}\Big\}}$\\ 
		$T_{u}(\delta)={\Big\{T \Big| \Big(4T^2 \sigma_1^2(A^{-\lfloor \frac{T+1}{2} \rfloor}) \text{tr}(\Gamma_T(A^{-1})) + \frac{T\text{tr}(A^{-T-1}\Gamma_T(A^{-1})A^{-T-1 \prime})}{\delta}\Big) \leq \frac{\phi_{\min}(A)^2 \psi(A)^2 \delta^2}{2\sigma_{\max}(P)^2} \Big\}}$ \\
		$\gamma(A, \delta)=\frac{4 \phi_{\max}(A)^2 \sigma_{\max}^2(A)}{\phi_{\min}(A)^2 \sigma_{\min}^2(A) \psi(A)^2 \delta^2} (1+\frac{1}{c}\log{\frac{1}{\delta}})\text{tr}(P(\Gamma_T(A^{-1}))P^{\prime})I$ \\ 
		$\gamma_s(A, \delta) = \sqrt{8d \Big(\log{\Big(\frac{5}{\delta}\Big) + \frac{1}{2}\log{\Big(4\text{tr}(\Gamma_T(A)) + 1 \Big)}}\Big)}$\\
		$\gamma_{ms}(A, \delta) = \sqrt{16 d \log{(\text{tr}(\Gamma_T(A)) + 1)} + 32d \log{\Big(\frac{15T}{2\delta}\Big)}}$\\
		$\gamma_e(A, \delta) = \frac{\sqrt{d}\sigma_{\max}(P) }{\phi_{\min}(A) \psi(A)\delta}\sqrt{ \log{\frac{2}{\delta}} + 2 \log{5} + \log {(1 + \gamma(A, \delta))}}$\\
		\hline
		\end{tabular}
		\caption{Notation} \label{notation}
		\end{center}
\end{table*}

We will occasionally replace $X_t$ (or $X(t)$) with the lower-case counterparts $x_t$ (or $x(t)$) to denote state at time $t$, whenever this does not cause confusion. Further, we will use $C,c$ to indicate universal constants that can change from line to line.
Define the \emph{Gramian} as
\begin{equation}
\label{gramian}
\Gamma_t(A) = \sum_{k=0}^t A^k A^{k\prime}
\end{equation}
and a Jordan block matrix $J_d(\lambda)$ as
	\begin{equation}
		\label{jordan}
	J_d(\lambda) =\begin{bmatrix}
	\lambda & 1 &  0 & \hdots & 0 \\
	0 & \lambda & 1 & \hdots & 0 \\
	\vdots & \vdots & \ddots & \ddots & \vdots \\
	0 & \hdots & 0 & \lambda & 1 \\
	0 & 0 & \hdots & 0 & \lambda  
	\end{bmatrix}_{d \times d}
	\end{equation}
We present the three classes of matrices that will be of interest to us:
\begin{itemize}
	\item The perfectly stable matrix class, $\Sc_0$ 
	$$\rho_{i}(A) \leq 1 - \frac{C}{T}$$
		for $1 \leq i \leq d$.
	\item The marginally stable matrix, $\Sc_1$ 
	$$1 - \frac{C}{T} < \rho_i(A) \leq 1+\frac{C}{T}$$ 
	for $1 \leq i \leq d$.
	\item The regular and explosive matrix, $\Sc_2$ 
	$$\rho_i > 1 + \frac{C}{T}$$
	for $1 \leq i \leq d$.
\end{itemize}
Slightly abusing the notation, whenever we write $A \in \Sc_i \cup \Sc_j$ we mean that $A$ has eigenvalues in both $\Sc_i, \Sc_j$. 
Critical to obtaining refined error rates, will be a result from the theory of self--normalized martingales. We let $\bcF_t = \sigma(\eta_1, \eta_2, \ldots, \eta_t, X_1, \ldots, X_t)$ to denote the filtration generated by the noise and covariate process. 
\begin{prop}
	\label{selfnorm_bnd}
	Let $V$ be a deterministic matrix with $V \succ 0$. For any $0 < \delta < 1$ and $\{\eta_t, X_t\}_{t=1}^{T}$ defined as before, we have with probability $1 - \delta$
	\begin{align}
	&||(\bar{Y}_{T-1})^{-1/2} \sum_{t=0}^{T-1} X_t \eta_{t+1}^{\prime}||_2 \nonumber\\
	&\leq R\sqrt{8d \log {\Bigg(\dfrac{5 \text{det}(\bar{Y}_{T-1})^{1/2d} \text{det}(V)^{-1/2d}}{\delta^{1/d}}\Bigg)}}
	\end{align}
	where $\bar{Y}^{-1}_{\tau} = (Y_{\tau} + V)^{-1}$ and $R^2$ is the subGaussian parameter of $\eta_t$.
\end{prop}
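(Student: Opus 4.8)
The plan is to establish Proposition~\ref{selfnorm_bnd} by reducing the operator-norm bound on the $d\times d$ matrix $\bar{Y}_{T-1}^{-1/2}\sum_{t=0}^{T-1}X_t\eta_{t+1}'$ to the \emph{scalar} self-normalized martingale inequality (the method-of-mixtures bound of Abbasi-Yadkori et al.) through a single covering argument over the unit sphere in the noise-coordinate space. Write $S_{T-1}=\sum_{t=0}^{T-1}X_t\eta_{t+1}'$. Since the operator norm is realized by contracting the domain, $\norm{\bar{Y}_{T-1}^{-1/2}S_{T-1}}_2=\sup_{\norm{u}_2=1}\norm{\bar{Y}_{T-1}^{-1/2}S_{T-1}u}_2$, so it suffices to control $\norm{\bar{Y}_{T-1}^{-1/2}S_{T-1}u}_2$ for each fixed unit vector $u\in\Rb^d$.

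First I would fix a unit $u$ and observe that $S_{T-1}u=\sum_{t=0}^{T-1}X_t(\eta_{t+1}'u)$ is a vector-valued martingale transform whose scalar increments $m_{t+1}:=\eta_{t+1}'u$ are, conditionally on $\bcF_t$, mean-zero and $R$-subgaussian: because the coordinates of $\eta_{t+1}$ are i.i.d.\ and $R$-subgaussian and independent of $\bcF_t$, one has $\Ex[e^{\lambda m_{t+1}}\mid\bcF_t]\leq\prod_j e^{\lambda^2 u_j^2 R^2/2}=e^{\lambda^2 R^2/2}$ for every $\norm{u}_2=1$. Since each $X_t$ is $\bcF_t$-measurable and $\bar{Y}_{T-1}=V+\sum_{t=0}^{T-1}X_tX_t'$ is the associated self-normalizer, the pair $(X_t,m_{t+1})$ fits the scalar framework exactly. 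Applying the scalar bound---which itself follows by showing that $\exp(\langle\lambda,\sum_s X_s m_{s+1}\rangle-\tfrac{R^2}{2}\lambda'(\sum_s X_sX_s')\lambda)$ is a supermartingale, mixing over $\lambda$ against a $\mathcal{N}(0,R^2 V^{-1})$ prior, and applying Markov's inequality---yields, for fixed $u$, with probability at least $1-\delta'$, $\norm{\bar{Y}_{T-1}^{-1/2}S_{T-1}u}_2^2\leq 2R^2\log\frac{\det(\bar{Y}_{T-1})^{1/2}\det(V)^{-1/2}}{\delta'}$.

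Next I would take a $\tfrac12$-net $\mathcal{N}$ of the unit sphere $S^{d-1}$, which can be chosen with $|\mathcal{N}|\leq 5^d$, apply the previous display with $\delta'=\delta/5^d$, and union bound so that the inequality holds simultaneously for all $u\in\mathcal{N}$ with probability at least $1-\delta$. The standard net-to-operator-norm estimate then gives $\norm{\bar{Y}_{T-1}^{-1/2}S_{T-1}}_2\leq\frac{1}{1-1/2}\max_{u\in\mathcal{N}}\norm{\bar{Y}_{T-1}^{-1/2}S_{T-1}u}_2$. Squaring, substituting the per-net-point bound, and using $\det(\bar{Y}_{T-1})^{1/2}=(\det(\bar{Y}_{T-1})^{1/2d})^{d}$ together with $5^d/\delta=(5/\delta^{1/d})^d$ to pull the exponent $d$ out of the logarithm reproduces exactly the claimed constant $8d$ and the factor $5$ inside the log.

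The main obstacle is less the algebra than the correct assembly. One must verify that the conditional subgaussian parameter of $\eta_{t+1}'u$ remains equal to $R$ uniformly over all unit directions $u$---this is precisely where the i.i.d.-coordinate part of Assumption~\ref{subgaussian_noise} is used, rather than mere isotropy---and one must track the $5^d$ from the covering and the factor $(1-1/2)^{-2}=4$ from the net conversion so that they collapse cleanly into the constants $5$ and $8d$. If the scalar self-normalized inequality is proved from scratch rather than cited, its supermartingale/Laplace-mixture construction is the genuinely delicate step, since it is what avoids a naive union bound over a discretization of time and replaces $\sqrt{T}$-type losses with the logarithm of $\det(\bar{Y}_{T-1})$.
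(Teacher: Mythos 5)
Your proposal is correct and follows essentially the same route as the paper's own proof (Proposition~\ref{selfnorm_bnd_proof}): the paper likewise reduces the operator norm to a fixed direction via a $\tfrac12$-net of cardinality $5^d$ (its Proposition~\ref{eps_net}), applies the scalar self-normalized bound of Abbasi-Yadkori et al.\ (Theorem~\ref{selfnorm_main}) to the conditionally $R$-subgaussian increments $\eta_{t+1}^{\prime}w$, and recovers the constant $8d$ and the factor $5$ inside the logarithm by absorbing the $5^{d}$ union bound and the squared net-conversion factor $4$ exactly as you describe. There is no gap; the two arguments coincide.
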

The proof can be found in appendix as Proposition~\ref{selfnorm_bnd_proof}. It rests on Theorem 1 in \cite{abbasi2011improved} which is itself an application of the pseudo-maximization technique in \cite{pena2008self} (see Theorem 14.7).

Finally, we define several $A$-dependent quantities that will appear in time complexities in the next section. 
\begin{definition}[Outbox Set]
	\label{outbox}
For the space $\Rb^{d}$ define the $a$--outbox, $S_d(a)$, as the following set
\[
S_d(a) = \{v | \min_{1 \leq i \leq d} |v_i| \geq a\}
\]
$S_d(a)$ will be used to quantify the following norm--like quantities of a matrix:
\begin{align}
\phi_{\min}(A) &= \sqrt{\inf_{v \in S_d(1)} \sigma_{\min}\Big(\sum_{i=1}^T \Lambda^{-i+1} vv^{\prime} \Lambda^{-i+1 \prime}\Big)} \\
\phi_{\max}(A) &= \sqrt{\sup_{||v||_2 = 1} \sigma_{\max}\Big(\sum_{i=1}^T \Lambda^{-i+1} vv^{\prime} \Lambda^{-i+1 \prime}\Big)}\label{anticonc_norm}
\end{align}
where $A = P^{-1} \Lambda P$ is the Jordan normal form of $A$.
\end{definition}
$\psi(A)$ is defined in Proposition~\ref{anti_conc1} and is needed for error bounds for explosive matrices.
\begin{prop}[Proposition 2 in~\cite{faradonbeh2017finite}]
	\label{anti_conc1}
	Let $\rho_{\min}(A) > 1$ and $P^{-1} \Lambda P = A$ be the Jordan decomposition of $A$. Define $z_T = A^{-T}\sum_{i=1}^TA^{T-i}\eta_i$ and
	$$\psi(A, \delta) = \sup \Bigg\{y \in \Rb : \Pb\Bigg(\min_{1 \leq i \leq d}|P_i^{'}z_T| < y \Bigg) \leq \delta \Bigg\}$$
	where $P = [P_1, P_2, \ldots, P_d]^{'}$. Then 
	$$\psi(A, \delta) \geq \psi(A) \delta > 0$$
Here $\psi(A) = \frac{1}{2 d \sup_{1 \leq i \leq d}C_{|P_i^{'}z_T|}}$ where $C_{X}$ is the essential supremum of the pdf of $X$.
\end{prop}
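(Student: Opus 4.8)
The plan is to establish this anti--concentration bound by combining a union bound over the $d$ projections $P_i' z_T$ with a one--dimensional small--ball estimate for each. First I would rewrite $z_T = A^{-T}\sum_{i=1}^{T} A^{T-i}\eta_i = \sum_{j=1}^{T} A^{-j}\eta_j$, so that each scalar projection becomes a fixed linear combination of the independent noise vectors,
\[
P_i' z_T = \sum_{j=1}^{T} \big(A^{-j\prime} P_i\big)' \eta_j ,
\]
which, after expanding over coordinates, is a weighted sum of i.i.d.\ scalar noise variables. The target quantity $\psi(A,\delta)$ is, by definition, the largest $y$ for which $\Pb\big(\min_{i}|P_i' z_T| < y\big) \le \delta$, so it suffices to exhibit a \emph{concrete} such $y$ of the claimed form.

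The crux of the argument---and the step I expect to be the main obstacle---is showing that each $P_i' z_T$ possesses a density whose essential supremum $C_{|P_i'z_T|}$ is finite, since otherwise $\psi(A)$ would be vacuous. Here I would invoke Assumption~\ref{subgaussian_noise}: each noise coordinate has density bounded above by a constant, hence any nonzero scalar multiple $a\xi$ of a single coordinate has density bounded by $C/|a|$. Writing $P_i' z_T$ as a sum of independent terms and using $\lVert f * g\rVert_\infty \le \lVert f\rVert_\infty \lVert g\rVert_1 = \lVert f\rVert_\infty$, the density of the whole sum is bounded by the density of any single constituent term. It remains only to check that \emph{at least one} coefficient is nonzero: since $A$ is explosive it is invertible, each $A^{-j}$ is invertible, and $P_i \ne 0$ because $P$ is nonsingular, so the coefficient vectors $A^{-j\prime}P_i$ do not all vanish. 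This guarantees $0 < C_{|P_i'z_T|} < \infty$ for every $i$, and therefore $\psi(A) = \frac{1}{2d\,\sup_i C_{|P_i'z_T|}} > 0$.

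With finiteness in hand, the remaining steps are routine. For each $i$, the density bound yields the anti--concentration estimate $\Pb\big(|P_i' z_T| < y\big) \le 2y\, C_{|P_i' z_T|}$, since the interval $(-y,y)$ has length $2y$. A union bound then gives
\[
\Pb\Big(\min_{1\le i\le d}|P_i' z_T| < y\Big) \le \sum_{i=1}^{d}\Pb\big(|P_i' z_T| < y\big) \le 2dy \sup_{1\le i\le d} C_{|P_i' z_T|}.
\]
Setting the right--hand side equal to $\delta$ and solving for $y$ produces $y = \frac{\delta}{2d\,\sup_i C_{|P_i' z_T|}} = \psi(A)\,\delta$, which by definition of the supremum forces $\psi(A,\delta) \ge \psi(A)\,\delta$. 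Combined with $\psi(A) > 0$ from the previous paragraph, this establishes the full chain $\psi(A,\delta) \ge \psi(A)\,\delta > 0$. The only care needed is bookkeeping of the factor of $2$ (whether one works with the two--sided density of $P_i'z_T$ or the one--sided density of $|P_i'z_T|$), but either convention yields a bound of the stated form up to the constant absorbed into $\psi(A)$.
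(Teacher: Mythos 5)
Your proof is correct and follows essentially the same route as the paper's: a union bound over the $d$ events $\{|P_i'z_T| < y\}$ combined with the bounded-density anti-concentration estimate $\Pb(|P_i'z_T| < y) \leq 2y\,C_{|P_i'z_T|}$ (the paper's Proposition~\ref{cont_rand}), then solving $2dy\sup_i C_{|P_i'z_T|} = \delta$ for $y$. The only place you go beyond the paper is the convolution bound $\lVert f * \mu \rVert_\infty \leq \lVert f\rVert_\infty$ together with the observation that some coefficient $A^{-j\prime}P_i$ is nonzero, which establishes $C_{|P_i'z_T|} < \infty$ and hence $\psi(A) > 0$ --- a step the paper leaves implicit, citing only Assumption~\ref{subgaussian_noise}, so your addition is a correct and worthwhile piece of bookkeeping rather than a different approach.
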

We summarize some notation in Table~\ref{notation} for convenience in representing our results. 
	
	\section{Main Results}
\label{main_results}
We will first show non--asymptotic rates for the three separate regimes, followed by the case when $A$ has a general eigenvalue distribution.
\begin{thm}
	\label{main_result}
	The following non-asymptotic bounds hold, with probability at least $1-\delta$, for the least squares estimator:
	\begin{itemize}
		\item For $A \in \Sc_0 \cup \Sc_1$  
		$$||A - \hat{A}||_{2} \leq \sqrt{\frac{C}{T}}\underbrace{\gamma_s\Big(A, \frac{\delta}{4}\Big)}_{=O(\sqrt{\log{(\frac{1}{\delta}})})}$$
		whenever 
		$$T \geq \max{\Big(T_{\eta}\Big(\frac{\delta}{4}\Big), T_s\Big(\frac{\delta}{4}\Big)\Big)}$$
		\item For $A \in \Sc_1$ 
		$$||A - \hat{A}||_{2} \leq \frac{C \sigma_{\max}(A^{-1})}{\sqrt{T\sigma_{\min}(\Gamma_{\lfloor \frac{1}{\beta_0(\delta)}\rfloor}(A))}}\underbrace{\gamma_{ms}\Big(A, \frac{\delta}{2}\Big)^2}_{=O(\log{(\frac{T}{\delta})})}$$ 
		whenever
		$$T \geq \max{\Big(2T_{\eta}\Big(\frac{\delta}{3T}\Big), 2T_s\Big(\frac{\delta}{3T}\Big), T_{ms}\Big(\frac{\delta}{2}\Big)\Big)}$$
		Since $\sigma_{\min}(\Gamma_{\lfloor \frac{1}{\beta_0(\delta)}\rfloor}(A)) \geq \alpha(d)\frac{T}{\log{T}}$, we have that 
		$$||A - \hat{A}||_{2} \leq \sqrt{\frac{\log{T}}{\alpha(d)}}\frac{\gamma_{ms}\Big(A, \frac{\delta}{2}\Big)^2}{T}$$ 
		\item For $A \in \Sc_2$ 
		$$||A - \hat{A}||_{2} \leq C\sigma_{\max}(A^{-T}) \underbrace{\gamma_e\Big(A, \frac{\delta}{5}\Big)}_{=O(\frac{1}{\delta})}$$
		whenever
		$$T \in T_{u}\Big(\frac{\delta}{5}\Big)$$
		Since $\sigma_{\max}(A^{-T}) \leq \alpha(d) (\rho_{\min}(A))^{-T}$ for $A \in \Sc_2$, the identification error decays exponentially with $T$.
	\end{itemize}
	Here $C, c$ are absolute constants and $\alpha(d)$ is a function that depends only on $d$.
\end{thm}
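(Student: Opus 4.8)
The plan is to start from the error identity~\eqref{error_lse}, which gives $\|A - \hat{A}\|_2 = \|Y_T^{+} S_T\|_2$ with $Y_T = \sum_{t=0}^{T} X_t X_t'$ and $S_T = \sum_{t=0}^{T} X_t \eta_{t+1}'$, and to decouple the ``signal'' and ``noise'' contributions through the regularized covariance $\bar{Y}_{T-1} = Y_{T-1} + V$. Writing
$$\|Y_T^{+} S_T\|_2 \leq \|Y_T^{+}\bar{Y}_{T-1}^{1/2}\|_2\,\|\bar{Y}_{T-1}^{-1/2} S_T\|_2,$$
the second factor is controlled uniformly across all three regimes by Proposition~\ref{selfnorm_bnd}, which on an event of probability $1-\delta$ bounds it by $R\sqrt{8d\log(5\det(\bar{Y}_{T-1})^{1/2d}\det(V)^{-1/2d}/\delta^{1/d})}$; bounding $\det(\bar{Y}_{T-1})$ in terms of the trace of the Gramian $\Gamma_T(A)$ is exactly what produces the logarithmic factors $\gamma_s,\gamma_{ms},\gamma_e$ in Table~\ref{notation}. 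The first factor is where the regimes diverge, since it demands a regime-specific lower bound on the spectrum of $Y_T$.

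For $A \in \Sc_0 \cup \Sc_1$ I would show that, on a high-probability event valid once $T \geq T_s(\delta/4)$, the sample covariance satisfies $Y_T \succeq cT\,I$ while simultaneously $\bar{Y}_{T-1} \preceq C\,Y_T$ (the latter because $V$ becomes negligible relative to $cT\,I$). Using $\|Y_T^{+}\bar{Y}_{T-1}^{1/2}\|_2^2 = \|Y_T^{+}\bar{Y}_{T-1}Y_T^{+}\|_2 \leq C\|Y_T^{+}\|_2 \leq C/(cT)$ then yields the $\sqrt{C/T}\,\gamma_s$ rate. The bound $Y_T \succeq cT\,I$ is the heart of this case and follows from a small-ball / anti-concentration argument in the spirit of Mendelson's method, using isotropy and the density bound of Assumption~\ref{subgaussian_noise}; $T_\eta$ and $T_s$ are precisely the sample sizes at which this concentration activates. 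For the sharper $\Sc_1$ bound I would replace the crude $cT\,I$ estimate by a comparison of the form $Y_T \succeq c\,T\,\sigma_{\min}(\Gamma_{\lfloor 1/\beta_0(\delta)\rfloor}(A))\,(AA')^{-1}$, where $\beta_0$ is chosen via its defining implicit equation to balance the signal growth captured by $\Gamma_{\lfloor 1/\beta_0\rfloor}(A)$ against the noise level $c(A,\delta)/(T\sigma_{\min}(AA'))$; the refinement then follows from the growth estimate $\sigma_{\min}(\Gamma_{\lfloor 1/\beta_0(\delta)\rfloor}(A)) \geq \alpha(d)T/\log T$ valid for marginally stable spectra.

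The explosive case $A \in \Sc_2$ is the main obstacle and requires a genuinely different argument, since exponential scaling by $A^{T-t}$ makes the covariates strongly dependent and any block-independence/small-ball scheme fails. Here I would use the factorization $X_T = A^{T} z_T$ with $z_T = A^{-T}\sum_{i=1}^{T} A^{T-i}\eta_i$ together with the anti-concentration of $z_T$ furnished by Proposition~\ref{anti_conc1}: the regularity assumption (geometric multiplicity one) guarantees that the coordinates $P_i' z_T$ are bounded away from zero with the quantitative lower bound $\psi(A)\delta$. The outbox-norm quantities $\phi_{\min}(A),\phi_{\max}(A)$ convert this coordinate-wise anti-concentration into a two-sided spectral comparison that forces $\sigma_{\min}(Y_T) \gtrsim \phi_{\min}(A)^2\psi(A)^2\delta^2\,\sigma_{\min}(A^{T}A^{T\prime})$, whence $\|Y_T^{+}\|_2 \lesssim \sigma_{\max}(A^{-T})^2$ up to the anti-concentration constants collected in $\gamma_e$. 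The condition $T \in T_u(\delta/5)$ is exactly what makes the off-diagonal and lower-order summands of $Y_T$ negligible against this dominant exponential term; feeding the resulting estimate into the decoupling inequality gives the $C\sigma_{\max}(A^{-T})\gamma_e$ rate, after which $\sigma_{\max}(A^{-T}) \leq \alpha(d)\rho_{\min}(A)^{-T}$ delivers the claimed exponential decay.

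The remaining delicate points are bookkeeping but essential: I must verify that each covariance comparison holds on the same high-probability event used for the self-normalized bound, so that a single union bound over the $\delta/4$ (resp.\ $\delta/5$) budget allocations closes the argument; and I must separately absorb the $t=T$ summand $X_T\eta_{T+1}'$ of $S_T$, which lies outside the range $\sum_{t=0}^{T-1}$ of Proposition~\ref{selfnorm_bnd}, either by re-indexing to $\bar{Y}_T$ or by bounding it directly as a lower-order term.
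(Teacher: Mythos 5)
Your overall skeleton---bounding $\|A-\hat{A}\|_2$ by an inverse-covariance factor times a self-normalized factor, controlling the latter uniformly across regimes by Proposition~\ref{selfnorm_bnd}, and proving regime-specific spectral lower bounds on $Y_T$---is exactly the paper's architecture, and your treatment of $\Sc_2$ (the transformation $z_T=A^{-T}x_T$, anti-concentration from Proposition~\ref{anti_conc1}, the outbox quantities $\phi_{\min},\phi_{\max}$, and the condition $T\in T_u$ absorbing the approximation error between $Y_T$ and its rank-one-dominant limit) coincides with Section~\ref{explosive}. Where you genuinely diverge is the covariance lower bound for $\Sc_0\cup\Sc_1$: you propose to obtain $Y_T\succeq cT I$ by a Mendelson-type small-ball argument, whereas the paper deliberately avoids small-ball methods (this is one of its stated contributions). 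The paper instead uses the recursion $Y_T \succeq AY_{T-1}A' + \sum_{t}(Ax_t\eta_{t+1}'+\eta_{t+1}x_t'A') + \sum_t \eta_t\eta_t'$, concentrates $\sum_t\eta_t\eta_t'$ near $TI$ via Proposition~\ref{noise_energy_bnd}, and controls the cross term with the \emph{same} self-normalized inequality, so Proposition~\ref{selfnorm_bnd} is used twice: once for the final error and once inside the invertibility proof (Section~\ref{short_proof}). Your route is viable in this regime (it is essentially the analysis of Simchowitz et al., which covers $\rho(A)\leq 1+C/T$), but it re-imports the block-size overhead the paper eliminates, and it does not naturally reproduce the Table~\ref{notation} quantities with which the theorem is stated: $T_s$, $T_{ms}$, the $\delta/3T$ arguments, and the implicit equation defining $\beta_0$ all fall out of the paper's recursive argument rather than out of a small-ball block length.

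The thinnest step in your proposal is the sharpened $\Sc_1$ bound. You assert a comparison of the form $Y_T\succeq cT\sigma_{\min}(\Gamma_{\lfloor 1/\beta_0(\delta)\rfloor}(A))(AA')^{-1}$ and say the refinement ``then follows,'' but establishing that comparison is where most of the work lies. In the paper (Section~\ref{sharp_bounds}) it comes from a bootstrap: the crude bound $Y_t\succeq tR^2I/4$ must first hold \emph{simultaneously} for all $t\geq T/2$ (the union bound responsible for the $\delta/3T$ burn-in thresholds in the statement), yielding $Y_t\succeq(1-\beta)AY_{t-1}A'+ctI$; successive expansion of this recursion improves $\beta_1\to\beta_2\to\cdots$ down to the fixed point $\beta_0$, and only then does Proposition~\ref{gramian_lb} give $\sigma_{\min}(\Gamma_{\lfloor 1/\beta_0\rfloor}(A))\geq\alpha(d)T/\log T$. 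If you go the small-ball route instead, you must verify the block small-ball condition at block length $\lfloor 1/\beta_0\rfloor$ and tie that length to $\beta_0$'s defining equation; neither is automatic, so in either framework this step needs an actual argument rather than an appeal to the growth estimate. Your closing observations are sound: the events must indeed be intersected under a single $\delta$ budget (the paper does this explicitly with $\Ec_0\cap\Ec_1$), and the $t=T$ summand of $S_T$ versus the index range of Proposition~\ref{selfnorm_bnd} is a real bookkeeping point, handled by the fact that the underlying Theorem~\ref{selfnorm_main} holds uniformly over all $t$.
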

\begin{remark}
$T_u(\delta)$ is a set where there exists a minimum $T_{*} < \infty$ such that $T \in T_u(\delta)$ whenever $T \geq T_{*}$. However, there might be $T < T_{*}$ for which the inequality of $T_{u}(\delta)$ holds. Whenever we write $T \in T_u(\delta)$ we mean $T \geq T_{*}$. 
\end{remark}
\begin{proof}
	We start by writing an upper bound 
	\begin{align}
	\label{err}
	||A - \hat{A}||_{\text{op}} &\leq ||Y_T^{+}S_T||_{\text{op}} \nonumber \\
	&\leq ||(Y_T^{+})^{1/2}||_{\text{op}}||(Y_T^{+})^{1/2}S_T||_{\text{op}}.
	\end{align}
	The rest of the proof can be broken into two parts: 
	\begin{itemize}
		\item Showing invertibility of $Y_T$ and lower bounds on the least singular value
		\item Bounding the self-normalized martingale term given by $(Y_T^{+})^{1/2}S_T$
	\end{itemize}
The invertibility of $Y_T$ is where most of the work lies. Once we have a tight characterization of $Y_T$, one can simply obtain the error bound by using Proposition~\ref{selfnorm_bnd}. Here we sketch the basis of our approach. First, we find deterministic $V_{up}, V_{dn}, T_0$ such that 
\begin{align}
\Ec_0 &= \{0 \prec V_{dn} \preceq Y_T \preceq V_{up}, T \geq T_0\} \\
\Pb(\Ec_0) &\geq 1 - \delta
\end{align}

The next step is to bound the self--normalized term. Under $\Ec_0$, it is clear that $Y_T$ is invertible and we have
\[
(Y_T^{+})^{1/2}S_T = Y_T^{-1/2} S_T.
\]
Define event $\Ec_1$ in the following way
\begin{align*}
&\Ec_1 = \\
&\Bigg\{||S_{T}||_{(Y_T + V_{dn})^{-1}} \leq \sqrt{8d \log {\Bigg(\dfrac{5 \text{det}(Y_TV_{dn}^{-1} + I)^{1/2d}}{\delta^{1/d}}\Bigg)}}\Bigg\}
\end{align*}

It follows from Proposition~\ref{selfnorm_bnd} that $\Pb(\Ec_1) \geq 1- \delta$. Then 
\[
\Ec_0 \implies Y_T + V_{dn} \preceq 2 Y_T \implies (Y_T + V_{dn})^{-1} \succeq \frac{1}{2}Y_T^{-1},
\]
and we have that under $\Ec_0$
\[
||S_T||_{Y_T^{-1}} \leq \sqrt{2} ||S_T||_{(Y_T + V_{dn})^{-1}}.
\]
Now considering the intersection $\Ec_0 \cap \Ec_1$, we get
\begin{align}
&\Ec_0 \cap \Ec_1 \implies \nonumber \\
&\Ec_0 \cap \Bigg\{||S_{T}||_{Y_T^{-1}} \leq \sqrt{16d \log {\Bigg(\dfrac{5 \text{det}(V_{up}V_{dn}^{-1} + I)^{1/2d}}{\delta^{1/d}}\Bigg)}}\Bigg\}
\end{align}
We replaced the LHS of $\Ec_1$ by the lower bound obtained above and in the RHS replaced $Y_T$ by its upper bound under $\Ec_0$, $V_{up}$. Further, observe that $\Pb(\Ec_0 \cap \Ec_1) \geq 1 - 2\delta$. Under $\Ec_0 \cap \Ec_1$ we get 
\begin{equation}
||A - \hat{A}||_{\text{op}} \leq \underbrace{\frac{1}{\sigma_{\min}(V_{dn})}}_{\alpha_T}\underbrace{\sqrt{16d \log {\Bigg(\dfrac{5 \text{det}(V_{up}V_{dn}^{-1} + I)^{1/2d}}{\delta^{1/d}}\Bigg)}}}_{\beta_T} \label{error_form}
\end{equation}
where $\alpha_T$ goes to zero with $T$ and $\beta_T$ is typically a constant. This shows that OLS learns $A$ with increasing accuracy as $T$ grows. The deterministic $V_{up}, V_{dn}, T_0$ differ for each regime of $\rho(A)$ and typically depend on the probability threshold $\delta$. We now sketch the approach for finding these for each regime. 
\subsection*{$Y_T$ behavior when $A \in \Sc_0 \cup \Sc_1$}
The key step here is to characterize $Y_T$ in terms of $Y_{T-1}$. 
	\begin{align}
Y_T &=  x_0 x_0^{'} + A Y_{T-1} A^{'} + \nonumber \\
&+ \sum_{t=0}^{T-1}(A x_t\eta_{t+1}^{'} + \eta_{t+1}x_t^{'}A^{'}) + \sum_{t=1}^{T}\eta_t \eta_t^{'} \nonumber \\
&\succeq A Y_{T-1} A^{'} + \nonumber \\
&+ \sum_{t=0}^{T-1}(A x_t\eta_{t+1}^{'} + \eta_{t+1}x_t^{'}A^{'}) + \sum_{t=1}^{T}\eta_t \eta_t^{'}. \label{energy_bnd}
\end{align}
Since $\{\eta_t\}_{t=1}^T$ are i.i.d. subgaussian we can show that $\sum_{t=1}^T \eta_t \eta_t^{\prime}$ concentrates near $TI_{d \times d}$ with high probability. Using Proposition~\ref{selfnorm_bnd} once again, we will show that with high probability
\begin{align*}
\sum_{t=0}^{T-1}(A x_t\eta_{t+1}^{'} + \eta_{t+1}x_t^{'}A^{'}) &\succeq -\epsilon ( A Y_{T-1} A^{'} + \sum_{t=1}^{T}\eta_t \eta_t^{\prime})
\end{align*}
where $\epsilon \leq 1/2$ whenever $\rho_i(A) \leq 1 + C/T$ and $T \geq T_0$ for some $T_0$ depending only on $A$. As a result with high probability we have 
\begin{align}
Y_T &\succeq (1-\epsilon)A Y_{T-1} A^{'} + (1 - \epsilon)\sum_{t=1}^T \eta_t \eta_t^{\prime} \nonumber\\
&\succeq (1 - \epsilon)\sum_{t=1}^T \eta_t \eta_t^{\prime}. \label{bnd1}
\end{align}
The details of this proof are provided in appendix as Section~\ref{short_proof}. When $1 - C/T \leq \rho_i(A) \leq 1 + C/T$ we note that the bound in Eq.~\eqref{bnd1} is not tight. The key to sharpening the lower bound is the following observation: for  $T >\max{\Big(2T_{\eta}\Big(\frac{\delta}{3T}\Big), 2T_s\Big(\frac{\delta}{3T}\Big), T_{ms}\Big(\frac{\delta}{2}\Big)\Big)}$ we can ensure with high probability 
\begin{align}
\sum_{\tau=1}^t \eta_{\tau} \eta_{\tau}^{\prime} &= tI \nonumber\\
Y_t &\succeq (1-\epsilon)A Y_{t-1} A^{'} + (1 - \epsilon)tI  \label{bnd2}
\end{align}
simultaneously for all $t \geq T/2$. Then we will show that $\epsilon = \beta_0(\delta)$ in Table~\ref{notation}. The sharpening of $\epsilon$ from $1/2$ to $\beta_0(\delta)$ is only possible because all the eigenvalues of $A$ are close to unity. In that case by successively expanding Eq.~\eqref{bnd2} we get 
\begin{equation}
Y_T \succeq (1 - \epsilon)^{1/\beta_{0}(\delta)}A Y_{T/2-1} A^{'} + \frac{T}{2}\sum_{t=1}^{1/\beta_{0}(\delta)}(1-\epsilon)^{t}A^{t}A^{t \prime} \label{bnd3}
\end{equation}
and then Eq.~\eqref{bnd3} can be reduced to 
\[
Y_T \succeq (1 - \epsilon)^{1/\beta_{0}(\delta)}A Y_{T/2-1} A^{'} + \frac{T (\Gamma_{1/\beta_0(\delta)}(A)-I)}{ 4e}.
\]
We show that 
$$1/\beta_0(\delta) \geq \frac{\alpha(d)TR^2\sigma_{\min}(A A^{\prime})}{8ec(A, \delta)}$$ 
and by Proposition~\ref{gramian_lb}, $Y_T \succeq \alpha(d)T^2$ for some function $\alpha(\cdot)$ that depends only on $d$. The details of the proof are provided in appendix as Section~\ref{sharp_bounds}. 

To get deterministic upper bounds for $Y_T$ with high probability, we note that 
\begin{align*}
Y_T &\preceq \text{tr}\left(\sum_{t=1}^T X_t X_t^{\prime}\right) I.
\end{align*} 
Then we can use Hanson--Wright inequality or Markov inequality to get an upper bound as shown in appendix as Proposition~\ref{energy_markov}.
\subsection*{$Y_T$ behavior when $A \in \Sc_2$}
The concentration arguments used to show the convergence for stable systems do not work for unstable systems. As discussed before $X_t = \sum_{\tau=1}^T A^{t-\tau} \eta_t$ and, consequently, $X_T$ depends strongly on $X_1, X_2, \ldots$. Due to this dependence we are unable to use typical techniques where $X_i$s are divided into roughly independent blocks of covariates. to obtain concentration results. Motivated by~\cite{lai1983asymptotic}, we instead work by transforming $x_t$ as 
\begin{align}
z_t &= A^{-t}x_t \nonumber\\
&= x_0 + \sum_{\tau=1}^{t} A^{-\tau} \eta_{\tau}. \label{zt_form}
\end{align}

The steps of the proof proceed as follows. Define 
	\begin{align}
	U_T &= A^{-T}\sum_{t=1}^T x_t x_t^{\prime} A^{-T \prime} = A^{-T} Y_T A^{-T \prime} \nonumber \\
	 &= \sum_{t=1}^{T} A^{-T+t}z_t z_t^{\prime} A^{-T+t \prime} \nonumber \\
	F_{T} &= \sum_{t=0}^{T-1} A^{-t} z_T z_T^{'} A^{-t \prime} \label{ut_ft}
	\end{align}
We show that 
\[
||F_T - U_T||_{\text{op}} \leq \epsilon.
\]
Here $\epsilon$ decays exponentially fast with $T$. Then the lower and upper bounds of $U_T$ can be shown by proving corresponding bounds for $F_T$. A necessary condition for invertibility of $F_T$ is that the matrix $A$ should be regular (in a later section we show that it is also sufficient). If $A$ is regular, the deterministic lower bound for $F_T$ is fairly straightforward and depends on $\phi_{\min}(A)$ defined in Definition~\ref{outbox}. The upper bound can be obtained by using Hanson--Wright inequality. The complete steps are given in appendix as Section~\ref{explosive}. 
\end{proof}
The analysis presented here is sharper than~\cite{faradonbeh2017finite} as we use subgaussian matrix inequalities such as Hanson--Wright Inequality (Theorem~\ref{hanson-wright}) to bound the error terms in contrast to uniformly bounding each noise variable and applying a less efficient Bernstein inequality. Another minor difference is that~\cite{lai1983asymptotic},\cite{faradonbeh2017finite} consider $||U_T-F_{\infty}||$ instead and as a result they require a martingale concentration argument to show the existence of $z_{\infty}$.

Lower bounds for identification error when $\rho(A) \leq 1$ have been derived in~\cite{simchowitz2018learning}. In Table~\ref{main_result} and Theorem~\ref{main_result}, the error in identification for explosive matrices depends on $\delta$ as $\frac{1}{\delta}$ unlike stable and marginally stable matrices where the dependence is $\log{\frac{1}{\delta}}$. Typical minimax analyses, such as the one in~\cite{simchowitz2018learning}, are unable to capture this relation between error and $\delta$. Here we show that such a dependence is unavoidable:
\begin{prop}
	\label{minimax}
	Let $A=a \geq 1.1$ be a 1--D matrix and $\hat{A} = \hat{a}$ be its OLS estimate. Then whenever $Ca^2T^2a^{-T} > \delta^2$, we have with probability at least $\delta$ that
	\[
	|a - \hat{a}| \geq \frac{C(1-a^{-2}) \delta}{ -a^2 (\log{\delta})^3}
	\]
	where $C$ is a universal constant. If $Ca^2T^2a^{-T} \leq \delta^2$ then with probability at least $\delta$ we have 
	\[
	|a - \hat{a}| \geq \Big(\frac{C(1-a^{-2})}{-\delta \log{\delta}}\Big)a^{-T}
	\]	
\end{prop}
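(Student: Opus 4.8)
The plan is to specialize the explosive-case analysis to $d=1$, where the estimator collapses to an explicit scalar ratio, and then to extract a \emph{lower} tail rather than the upper bound of Theorem~\ref{main_result}. For $d=1$, Eq.~\eqref{error_lse} reads $a-\hat{a} = -S_T/Y_T$ with $S_T=\sum_{t=0}^{T}x_t\eta_{t+1}$ and $Y_T=\sum_{t=0}^{T}x_t^2$, so $|a-\hat{a}| = |S_T|/Y_T$. First I would pass to the transformed variables $z_t=a^{-t}x_t=\sum_{\tau=1}^{t}a^{-\tau}\eta_\tau$ of Eq.~\eqref{zt_form}, which gives $Y_T=\sum_{t=0}^{T}a^{2t}z_t^2$ and $S_T=\sum_{t=0}^{T}a^{t}z_t\eta_{t+1}$. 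Both sums are dominated by their last terms, and the approximation $U_T\approx F_T$ from Eq.~\eqref{ut_ft} used in the main proof yields $Y_T\approx \frac{a^{2T}z_T^2}{1-a^{-2}}$ and $S_T\approx a^{T}z_T\eta_{T+1}$, whence
\[
a-\hat{a} \;\approx\; -\,\frac{(1-a^{-2})\,\eta_{T+1}}{a^{T}\,z_T}.
\]
This heuristic already exposes the two mechanisms that inflate the error: a large fresh innovation $\eta_{T+1}$, and a small $z_T$. The factor $1/z_T$ is the source of the heavy $1/\delta$ dependence that is absent in the stable and marginally stable regimes.

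The two knobs are independent, since $z_T$ is $\bcF_T$-measurable while $\eta_{T+1}$ is drawn afresh, and I would exploit this by arguing conditionally on $\bcF_T$. Given $\bcF_T$, the error is an affine function of $\eta_{T+1}$ with slope $-a^{T}z_T/Y_T$, so a one-sided tail bound for the symmetric $\eta_{T+1}$ transfers immediately to $|a-\hat{a}|\gtrsim |a^{T}z_T/Y_T|$ on an event of conditional probability bounded below. For the $z_T$ knob I would use a small-ball estimate of the type in Proposition~\ref{anti_conc1}: in one dimension $\psi(A,\delta)\geq\psi(A)\delta$ controls the growth of $\Pb(|z_T|<y)$, and combining it with a lower bound on the density of $z_T$ near the origin produces a radius $r^\star=\Theta(\delta)$ with $\Pb(|z_T|\le r^\star)\ge\delta$. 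Intersecting $\{|z_T|\le r^\star\}$ with a constant-probability event for $\eta_{T+1}$ and substituting $|z_T|\lesssim\delta$ into the display above is what produces an error of order $\frac{(1-a^{-2})}{a^{T}\delta}$.

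The delicate point, and the origin of the two-case split, is that the identity $Y_T\approx\frac{a^{2T}z_T^2}{1-a^{-2}}$ rests on $\|U_T-F_T\|_{\text{op}}$ being negligible, and this quantity---of order $T^2a^{-T}$, as reflected in the definition of $T_u(\delta)$ in Table~\ref{notation}---competes with the scale $\delta^2$ of the small-ball event, precisely because $z_T-z_{T-1}=a^{-T}\eta_T$ forces the neighbouring $z_{t}$ to be of size $a^{-T}$ when $|z_T|$ is driven down to $\delta$. When $Ca^2T^2a^{-T}\le\delta^2$ the term $a^{2T}z_T^2$ still dominates $Y_T$ even for $|z_T|\asymp\delta$, the substitution is legitimate, and tracking it together with the subgaussian tail of $\eta_{T+1}$ yields the clean $\frac{C(1-a^{-2})}{-\delta\log\delta}a^{-T}$, the $-\log\delta$ absorbing the tail and approximation slack. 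When instead $Ca^2T^2a^{-T}>\delta^2$ the quadratic-in-$z_T$ description of $Y_T$ is no longer accurate at radius $\delta$, and I would retreat to crude deterministic control of $Y_T$ together with moderate-deviation lower bounds on $|S_T|$; pushing $\eta_{T+1}$ and the quadratic $z$-terms to their subgaussian scales $\sqrt{-\log\delta}$ then costs three logarithmic factors and produces the $(\log\delta)^3$ denominator. The main obstacle, in both regimes, is exactly this coupling: the event that inflates $|S_T|/Y_T$ is the same event that threatens the denominator estimate, so the core of the work is a quantitative accounting, as a function of $a^{-T}$ against $\delta$, of how small $|z_T|$ may be driven while still retaining usable two-sided control of $Y_T$.
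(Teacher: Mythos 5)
Your route genuinely differs from the paper's at the key step. The paper never conditions on $\bcF_T$: it lower-bounds the numerator by proving an anti-concentration inequality for the full quadratic form $2S_T=\tilde{\eta}^{\prime}M\tilde{\eta}$ (Eq.~\eqref{ST_form_dist}) through an explicit characteristic-function/L\'evy-inversion computation with a Bessel-type integral (Section~\ref{char_fn}), obtaining $\Pb\big(|S_T|\leq \frac{-C\delta}{a\log\delta}\big)\leq\frac{\delta}{2}$; that computation is the hardest piece of its argument and is the source of the $-\delta\log\delta$ factor. Your replacement --- condition on $\bcF_T$, write the error as an affine function $c+s\,\eta_{T+1}$ with $\bcF_T$-measurable intercept $c$ and slope $s=-x_T/Y_T$, and use symmetry of $\eta_{T+1}$ to get $|c+s\,\eta_{T+1}|\geq |s|r$ with conditional probability at least $\tfrac12\Pb(|\eta_{T+1}|\geq r)$ --- is sound, eliminates the characteristic-function section entirely, and in the regime $Ca^2T^2a^{-T}\leq\delta^2$ would even give the stronger rate $C(1-a^{-2})a^{-T}/\delta$ with no logarithm. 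The denominator treatment (the $z_t$ change of variables, $U_T\approx F_T$, small ball for $z_T$) coincides with the paper's events $\Ec_1(\delta)$, $\Ec_3(\delta)$.

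However, two of your steps fail as written. First, because your numerator is now proportional to $|x_T|=a^T|z_T|$, the rigorous output of the conditional argument is $|a-\hat a|\geq r|z_T|/(a^T U_T)$ with $U_T\leq z_T^2/(1-a^{-2})+\epsilon$, where $\epsilon$ is the $|U_T-F_T|$ error. This is \emph{not} monotone increasing as $|z_T|$ shrinks: on the portion of your event $\{|z_T|\leq r^\star\}$ where $|z_T|\ll\delta$, the guarantee degenerates to zero, so the claimed bound does not hold on the full event and the probability accounting collapses. "Substituting $|z_T|\lesssim\delta$ into the display" is exactly where the idealized formula $|a-\hat a|\approx(1-a^{-2})|\eta_{T+1}|/(a^T|z_T|)$ gets conflated with the rigorous one. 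The fix is to work on an annulus $\{c_1\delta\leq|z_T|\leq c_2\delta\}$, which your own ingredients supply (Proposition~\ref{cont_rand}, behind Proposition~\ref{anti_conc1}, gives $\Pb(|z_T|<c_1\delta)\leq Cc_1\delta$, which you subtract from the ball probability), but you never state it --- and it is the crux, since it is what keeps the numerator slope alive. Second, the case $Ca^2T^2a^{-T}>\delta^2$ cannot be handled the way you sketch: pushing $\eta_{T+1}$ to scale $\sqrt{-\log\delta}$ makes the conditional probability of that event itself of order $\delta$, and against the probability-$\delta$ event for $z_T$ this yields order $\delta^2$, not $\delta$; moreover a "moderate-deviation lower bound on $|S_T|$" is not an off-the-shelf tool --- a lower bound on $|S_T|$ is precisely the anti-concentration statement the paper needs its characteristic-function machinery for. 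The workable accounting keeps $r$ constant and reuses your own conditional mechanism: two logarithms enter because $Ca^2T^2a^{-T}>\delta^2$ forces $T\lesssim-\log\delta$ while the term $T^2a^{-T}$ now dominates the bound on $U_T$, and the third comes from the $\log(1/\delta)$ factor in the high-probability bound on $|U_T-F_T|$ (Eq.~\eqref{tight_error_cum}).
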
 
Our lower bounds indicate that $\frac{1}{\delta}$ is inevitable in Theorem~\ref{main_result}, \textit{i.e.}, when $Ca^2T^2a^{-T} \leq \delta^2$. Second, when $Ca^2T^2a^{-T} > \delta^2$, our bound sharpens Theorem B.2 in~\cite{simchowitz2018learning}. The proof and an explicit comparison is provided in Section~\ref{optimal_bnd}.

For the general case we use a well known fact for matrices, namely, that there exists a similarity transform $\tilde{P}$ such that 
\begin{align}
A = \tilde{P}^{-1} \begin{bmatrix}
A_{e} & 0  & 0 \\
0 & A_{ms} & 0 \\
0 & 0 & A_s 
\end{bmatrix}\tilde{P} \label{partition0}
\end{align}
Here $A_{e} \in \Sc_0, A_{ms} \in \Sc_1, A_s \in \Sc_2$. Although one might be tempted to use Theorem~\ref{main_result} to provide error bounds, mixing between different components due to the transformation $\tilde{P}$ requires a careful analysis of identification error. We show that error bounds are limited by the slowest component as we describe below. We do not provide the exact characterization due to a shortage of space. The details are given in appendix as Section~\ref{composite_result_proof}.
\begin{thm}
	\label{composite_result}	
	For any regular matrix $A$ we have with probability at least $1-\delta$,
		\begin{itemize}
			\item For $A \in \Sc_1 \cup \Sc_2$  
			$$||A - \hat{A}||_{2} \leq  \frac{\text{poly}(\log{T},  \log{\frac{1}{\delta}})}{T}$$
			whenever 
			$$T \geq \text{poly}\Big(\log{\frac{1}{\delta}}\Big)$$
			\item For $A \in \Sc_0 \cup \Sc_1 \cup \Sc_2$ 
			$$||A - \hat{A}||_{2} \leq \frac{\text{poly}(\log{T},  \log{\frac{1}{\delta}})}{\sqrt{T}}$$ 
			whenever
			$$T \geq \text{poly}\Big(\log{\frac{1}{\delta}}\Big)$$
		\end{itemize}
		Here $\text{poly}(\cdot)$ is a polynomial function.
\end{thm}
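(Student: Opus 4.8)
The plan is to decouple the three regimes using the block decomposition~\eqref{partition0}. Writing $A=\tilde{P}^{-1}\Lambda\tilde{P}$ with $\Lambda=\mathrm{diag}(A_e,A_{ms},A_s)$ and passing to the coordinates $\tilde{X}_t=\tilde{P}X_t$, the dynamics become block diagonal while the driving noise $\tilde{\eta}_t=\tilde{P}\eta_t$ stays i.i.d.\ subgaussian (now with covariance $\tilde{P}\tilde{P}'$ rather than isotropic, which only costs constants depending on $\tilde{P}$). Since OLS is equivariant under this change of basis, $\hat{\Lambda}=\tilde{P}\hat{A}\tilde{P}^{-1}$, so that $\|A-\hat{A}\|_2\le\|\tilde{P}\|_2\|\tilde{P}^{-1}\|_2\,\|\Lambda-\hat{\Lambda}\|_2$ and, because $\tilde{P}$ depends only on $A$, it suffices to bound $\|\Lambda-\hat{\Lambda}\|_2$. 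I would then rerun the self-normalized argument of Theorem~\ref{main_result} on $\Lambda-\hat{\Lambda}=(\tilde{Y}_T^{+}\tilde{S}_T)'$, with $\tilde{Y}_T=\sum_t\tilde{X}_t\tilde{X}_t'$ and $\tilde{S}_T=\sum_t\tilde{X}_t\tilde{\eta}_{t+1}'$, reducing everything to deterministic sandwiching $V_{dn}\preceq\tilde{Y}_T\preceq V_{up}$ on a high-probability event and an appeal to~\eqref{error_form}.

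The diagonal blocks of $\tilde{Y}_T$ are precisely the single-regime covariance matrices treated in Theorem~\ref{main_result}: the stable block is bounded below at scale $\Theta(T)$ (from $\sum_t\tilde{\eta}_t\tilde{\eta}_t'\approx TI$), the marginally stable block at scale $\Theta(T^2/\log T)$ (via $\sigma_{\min}(\Gamma_{\lfloor 1/\beta_0(\delta)\rfloor}(A_{ms}))\ge\alpha(d)T/\log T$), and the explosive block at an exponentially large scale controlled through $\phi_{\min}$ and $\psi$. Consequently $\sigma_{\min}(V_{dn})$ is dictated by the \emph{slowest} present block: $\Theta(T)$ when a stable block is present ($A\in\Sc_0\cup\Sc_1\cup\Sc_2$) and $\Theta(T^2/\log T)$ when it is absent ($A\in\Sc_1\cup\Sc_2$). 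Substituting into the $1/\sqrt{\sigma_{\min}(V_{dn})}$ factor of~\eqref{error_form} gives the claimed $1/\sqrt{T}$ and $1/T$ rates respectively, while the determinant term $\det(V_{up}V_{dn}^{-1}+I)$ contributes only $\mathrm{poly}(\log T,\log\tfrac1\delta)$ once $V_{up}\preceq\text{tr}(\tilde{Y}_T)I$ is controlled by Hanson--Wright as in the single-regime proof.

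The hard part is that this block-diagonal picture does not hold verbatim. Because all components are driven by the common correlated noise $\tilde{\eta}_t$, the off-diagonal blocks of $\tilde{Y}_T$ are nonzero and live at incommensurate magnitudes---$\sum_t\tilde{X}_t^{(e)}\tilde{X}_t^{(s)\prime}$ can be as large as $\rho_{\max}^{T}\sqrt{T}$---so no triangle-inequality bound on $\sigma_{\min}(\tilde{Y}_T)$ can succeed, and $\hat{\Lambda}$ correspondingly acquires spurious off-diagonal blocks. I would resolve this by a Schur-complement analysis: to certify the slow coordinates I bound the corrected block $\tilde{Y}_{ss}-\tilde{Y}_{se}\tilde{Y}_{ee}^{-1}\tilde{Y}_{es}$ from below, using that the explosive diagonal block $\tilde{Y}_{ee}$ of order $\rho^{2T}$ has an exponentially small inverse that cancels the magnitude of the cross blocks, and symmetrically between the marginally stable and explosive coordinates. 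Each cross block splits into a predictable part and a self-normalized martingale term bounded by Proposition~\ref{selfnorm_bnd}, so the Schur correction is a lower-order perturbation of the slowest diagonal block. Once the cross terms are absorbed this way, the lower bound $V_{dn}$ with $\sigma_{\min}$ set by the slowest regime goes through for every eigenvalue configuration and the two stated rates follow from~\eqref{error_form}; I expect the Schur-complement control of the explosive/slow coupling to be the principal technical obstacle.
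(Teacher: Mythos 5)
Your high-level skeleton does coincide with the paper's proof in Section~\ref{composite_result_proof}: the same similarity transform \eqref{partition0}, the same observation that $\tilde{\eta}_t=\tilde{P}\eta_t$ remains independent across time but loses coordinate independence (handled by the dependent extensions, Corollary~\ref{dep-hanson-wright}), the diagonal blocks controlled by Theorem~\ref{main_result} with the slowest block dictating $\sigma_{\min}(V_{dn})$, the final appeal to \eqref{error_form}, and the back-transformation costing only $\tilde{P}$-dependent constants. You also correctly identify the cross blocks as the crux. The gap is in how you handle them.

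Your claim that the exponentially small $\tilde{Y}_{ee}^{-1}$ ``cancels the magnitude of the cross blocks'' so that the Schur correction is a lower-order perturbation is quantitatively insufficient for the theorem as stated. The only available high-probability lower bound on the explosive block comes from anti-concentration (Proposition~\ref{ft_inv}, Eq.~\eqref{exp_bnds}) and carries the factor $\psi(A_e)^2\delta^2$, i.e., $\tilde{Y}_{ee}\succeq c\,\delta^2 A_e^TA_e^{T\prime}$, so $\|\tilde{Y}_{ee}^{-1}\|\lesssim \delta^{-2}\|A_e^{-T}\|^2$. Writing $A_e^{-T}\xe_t=A_e^{t-T}z_t$ with $\|z_t\|=O(1)$, the stable--explosive cross block satisfies $\|A_e^{-T}\sum_t \xe_t(\xs_t)'\|\lesssim \sqrt{\log T}$ with high probability, and the naive Schur bound then gives $\tilde{Y}_{se}\tilde{Y}_{ee}^{-1}\tilde{Y}_{es}\lesssim \delta^{-2}\log T$. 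This is a lower-order perturbation of the stable diagonal block $\Theta(T)$ only when $T\gtrsim \delta^{-2}$, i.e., your argument yields the rates under $T\geq\text{poly}(1/\delta)$, not the claimed $T\geq\text{poly}(\log\frac{1}{\delta})$. The paper's proof is engineered precisely to avoid paying $\delta^{-2}$ against anything that is not exponentially small in $T$: it normalizes by the \emph{data-dependent} $V_e=\sum_{t=1}^T\xe_t(\xe_t)'$, so Cauchy--Schwarz gives a factor exactly $\leq 1$ on the late time range $t>k$ at no cost in $\delta$, and it splits the sum at $k=k_{mc}(T)$ (Eq.~\eqref{k_mc}) so that the $\delta^{-2}$ from anti-concentration multiplies only the exponentially decaying $\text{tr}(A_e^{-(T-k)}(A_e^{-(T-k)})^{\prime})$ (Eq.~\eqref{ms_exp}), while the late-time marginally stable factor is made small via Corollary~\ref{sub_sum} and the Gramian-ratio bound of Proposition~\ref{gramian_ratio}. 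Your Schur formulation could be repaired by importing exactly this time-splitting (the two formulations are algebraically close), but as written the key quantitative step is missing. Separately, your plan to control the cross blocks via Proposition~\ref{selfnorm_bnd} does not apply as stated: $\sum_t\xe_t(\xs_t)'$ is a covariate--covariate sum, not a martingale-difference sum, and after the transformation the noise coordinates $\nee_t,\ns_t$ are correlated, so the resulting ``predictable part'' is genuinely nonzero and of order $T$; the paper never uses self-normalization for these terms, relying instead on Cauchy--Schwarz and Hanson--Wright.
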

\begin{proof}
Define the partition of $A$ as Eq.~\eqref{partition0}. Since
\begin{align}
X_t &= \sum_{\tau=1}^t A^{\tau-1}\eta_{t -\tau+1} \nonumber \\
\tilde{X}_t = \tilde{P}^{-1}X_t &= \sum_{\tau=1}^t \tilde{A}^{\tau-1}\underbrace{\tilde{P}^{-1}\eta_{t -\tau+1}}_{\tilde{\eta}_{t-\tau+1}}
\end{align}
then the transformed dynamics are as follows:
\begin{align*}
\tilde{X}_{t+1} &= \tilde{A}\tilde{X}_t + \tilde{\eta}_{t+1}.
\end{align*}
Here $\{\tilde{\eta}_t\}_{t=1}^T$ are still independent. Correspondingly we also have a partition for $\tX_t, \tn_t$
\begin{align}
\tilde{X}_t = \begin{bmatrix}
\xe_t \\
\xms_t \\
\xs_t
\end{bmatrix}&, \tilde{\eta}_t = \begin{bmatrix}
\nee_t \\
\nms_t \\
\ns_t
\end{bmatrix}
\end{align}
Then we have
\begin{align}
\sum_{t=1}^T \tX_t \tX_t^{\prime} &= \sum_{t=1}^T\begin{bmatrix}
\xe_t (\xe_t)^{\prime} & \xe_t (\xms_t)^{\prime} & \xe_t (\xs_t)^{\prime}\\
\xms_t (\xe_t)^{\prime} & \xms_t (\xms_t)^{\prime} & \xms_t (\xs_t)^{\prime} \\
\xe_t (\xs_t)^{\prime} & \xs_t (\xms_t)^{\prime} & \xs_t (\xs_t)^{\prime}
\end{bmatrix} \label{mixed_matrix}
\end{align}
The next step is to show the invertibility of $\sum_{t=1}^T \tX_t \tX_t^{\prime}$. Although reminiscent of our previous set up, there are some critical differences. First, unlike before, coordinates of $\tilde{\eta}_t$, \textit{i.e.},  $\{\nee_t,\nms_t,\ns_t\}$ are not independent. A major implication is that it is no longer obvious that the cross terms between different submatrices, such as $\sum_{t=1}^T \xe_t (\xms_t)^{\prime}$,  go to zero. Our proof will have three major steps:
\begin{itemize}
	\item First we will show that the diagonal submatrices are invertible. This follows from Theorem~\ref{main_result} by arguing that the result can be extended to a noise process $\{P\eta_t\}_{t=1}^T$ where $\{\eta_t\}_{t=1}^T$ are independent subgaussian and elements of $\eta_t$ are also independent for all $t$. The only change will be the appearance of additional $\sigma_1^2(P)$ subgaussian parameter (See Corollary~\ref{dep-hanson-wright}). We will then show that 
	\begin{align*}
	X_{mss}= \sum_{t=1}^T\begin{bmatrix}
	\xms_t (\xms_t)^{\prime} & \xms_t (\xs_t)^{\prime} \\
	\xs_t (\xms_t)^{\prime} & \xs_t (\xs_t)^{\prime}
	\end{bmatrix}
	\end{align*}
	is invertible. This will follow from Theorem~\ref{main_result} (its dependent extension). Specifically, since $X_{mss}$ contains only stable and marginally stable components, it falls under $A \in \Sc_0 \cup \Sc_1$. It should be noted that since $\xms_t, \xs_t$ are not independent in general, the invertibility of $X_{mss}$ can be shown only through Theorem~\ref{main_result}. In a similar fashion, $\sum_{t=1}^T\xe_t (\xe_t)^{\prime}$ is also invertible as it corresponds to $A \in \Sc_2$. 	

	\item Since invertibility of block diagonal submatrices in $\sum_{t=1}^T \tX_t \tX_t^{\prime}$ does not imply the invertibility of the entire matrix we also need to show that the cross terms $||\xe_t (\xms_t)^{\prime}||_2,||\xe_t (\xs_t)^{\prime}||_2$ are sufficiently small relative to the appropriate diagonal blocks.

	\item Along the way we also obtain deterministic lower and upper bounds for the sample covariance matrix following which the steps for bounding the error are similar to Theorem~\ref{main_result}.
\end{itemize}
The details are in appendix as Section~\ref{composite_result_proof}.
\end{proof}

	\section{Inconsistency of OLS}
\label{inconsistent}
We will now show that when a matrix is irregular, then it cannot be learned despite a high signal-to-noise ratio. Consider the two cases 
\begin{align*}
A_r &= \begin{bmatrix}
1.1 & 1 \\
0 & 1.1
\end{bmatrix}, A_o = \begin{bmatrix}
1.1 & 0 \\
0 & 1.1
\end{bmatrix}
\end{align*}
Here $A_r$ is a regular matrix and $A_o$ is not. Now we run Eq.~\eqref{lti} for $A=A_r, A_o$ for $T=10^3$. Let the OLS estimate of $A_r, A_o$ be $\hat{A}_r, \hat{A}_o$ respectively. Define 
\begin{align*}
\beta_r &= [A_r]_{1,2}, \beta_o = [A_o]_{1,2} \\
\hat{\beta_r} &= [\hat{A}_r]_{1,2}, \hat{\beta}_o = [\hat{A}_o]_{1,2}
\end{align*}
Although $\beta_r \approx \hat{\beta}_r$, $\hat{\beta}_o$ does not equal zero. Instead Fig.~\ref{beta_dist} shows that $\hat{\beta}_o$ has a non--trivial distribution which is bimodal at $\{-0.55, 0.55\}$ and as a result OLS is inconsistent for $A_o$. This happens because the sample covariance matrix for $A_o$ is singular despite the fact that $\Gamma_T(A_o) = (1.1)^T I$, \textit{i.e.}, a high signal to noise ratio. In general, the relation between OLS identification of $A$ and its controllability Gramian, $\Gamma_T(A)$, is tenuous for unstable systems unlike what is suggested in~\cite{simchowitz2018learning}.
\begin{figure}
	\includegraphics[width=\linewidth]{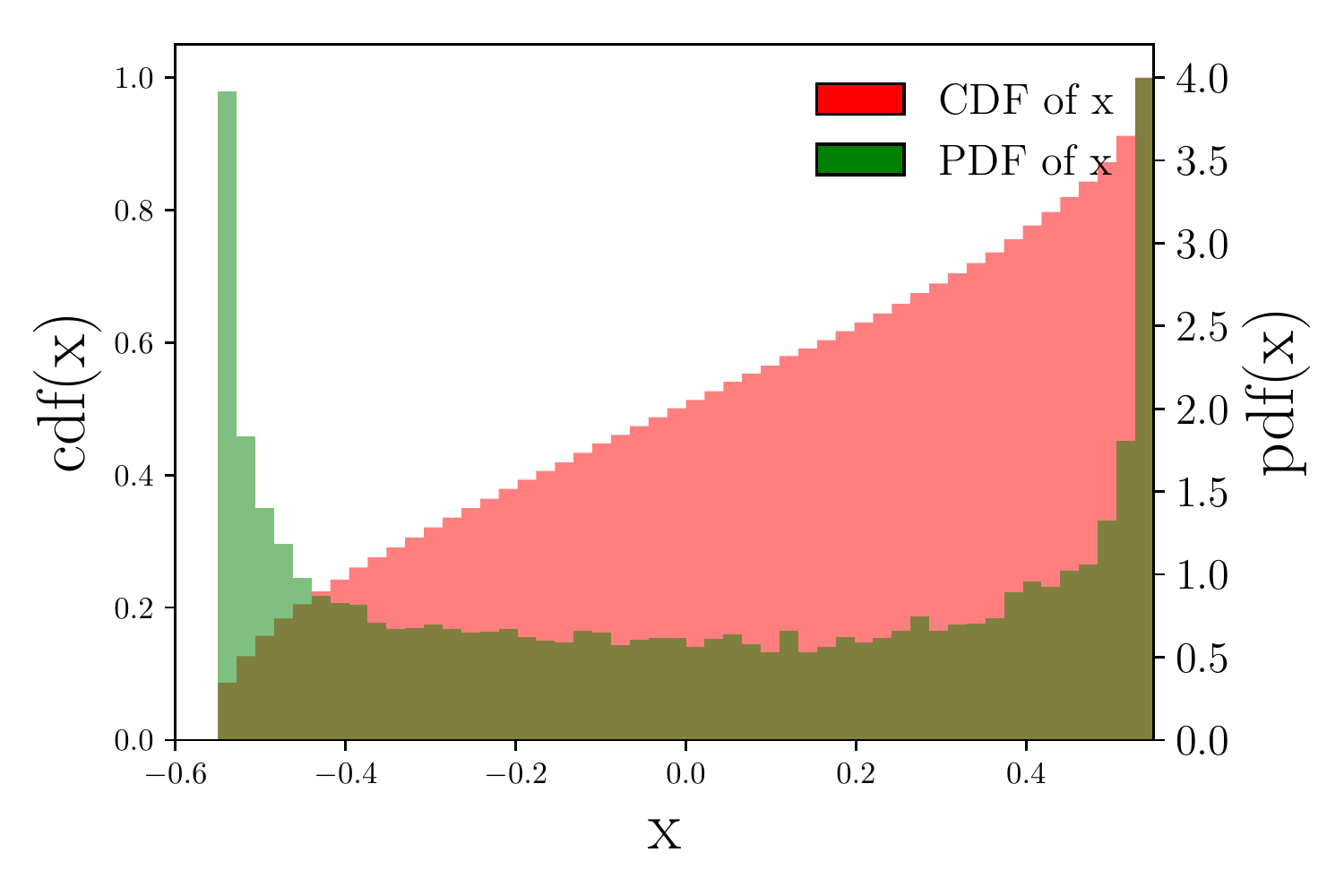}
	\caption{CDF and PDF of $\hat{\beta}_o$}
	\label{beta_dist}
\end{figure}
To see this singularity observe that 
\begin{align*}
X_{t+1} &= A_{o} \begin{bmatrix} X^{(1)}_t \\
X^{(2)}_t \end{bmatrix} + \begin{bmatrix} \eta_{t+1}^{(1)} \\
\eta_{t+1}^{(2)} \end{bmatrix}\\
Y_T &= \begin{bmatrix}
\sum_{t=1}^T (X^{(1)}_t)^2 & \sum_{t=1}^T (X^{(1)}_t)(X^{(2)}_t)\\
\sum_{t=1}^T (X^{(1)}_t)(X^{(2)}_t) & \sum_{t=1}^T (X^{(2)}_t)^2
\end{bmatrix}
\end{align*}
where $X^{(1)}_t, X^{(2)}_t$ are independent of each other. Define $a=1.1$.
\begin{prop}
	\label{singular}
Let $\{\eta_t\}_{t=1}^T$ be i.i.d standard Gaussian then whenever $T^2 \leq a^T$, we have that 
\[
||\hat{A}_o - A_{o}|| = \gamma_T
\]
where $\gamma_T$ is a random variable that admits a continuous pdf and does not decay to zero as $T \rightarrow \infty$. Further, the sample covariance matrix has the following singular values 
\begin{align*}
\sigma_1(\sum_{t=1}^T  X_t X_t^{\top}) &= \Theta(a^{2T}), \sigma_2(\sum_{t=1}^T  X_t X_t^{\top}) = O(\sqrt{T}a^{T})    
\end{align*}
\end{prop}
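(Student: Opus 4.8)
The plan is to work from the normal equations: since $\sum_t (X_{t+1}-A_o X_t)X_t^{\top}=\sum_t \eta_{t+1}X_t^{\top}=S_T^{\top}$, OLS gives $\hat A_o-A_o=S_T^{\top}Y_T^{-1}$, so $\|\hat A_o-A_o\|=\|Y_T^{-1}S_T\|$ and everything reduces to the spectrum of $Y_T=\sum_{t=1}^T X_tX_t^{\top}$ together with how $S_T$ aligns with its eigenvectors. Because $A_o=aI$, the transform of Eq.~\eqref{zt_form} is $z_t=a^{-t}X_t$ and $Y_T=\sum_{t=1}^T a^{2t}z_tz_t^{\top}$ with $z_t^{(i)}=\sum_{\tau=1}^t a^{-\tau}\eta_\tau^{(i)}$. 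First I would record that each coordinate $z_t^{(i)}$ is an $L^2$-bounded martingale, so $z_t\to z_\infty$ a.s.\ and in $L^2$, where $z_\infty^{(1)},z_\infty^{(2)}$ are independent nondegenerate Gaussians; in particular $z_\infty\neq 0$ a.s.\ and $\|z_t\|$ is a.s.\ bounded away from $0$ and $\infty$, and the direction $v=v(z_T)\perp z_T$ has an a.s.\ limit $v_\infty\perp z_\infty$.

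Next I would pin down the two singular values. For the top one, $\sigma_1(Y_T)\le \text{tr}(Y_T)=\sum_t a^{2t}\|z_t\|^2=\Theta(a^{2T})$, while the quadratic form at $z_T$ gives $\sigma_1(Y_T)\ge z_T^{\top}Y_Tz_T/\|z_T\|^2\gtrsim a^{2T}\|z_T\|^2$, so $\sigma_1(Y_T)=\Theta(a^{2T})$. For the bottom one I would use $\sigma_2(Y_T)=\min_{\|u\|=1}u^{\top}Y_Tu\le v^{\top}Y_Tv$: since $v^{\top}z_T=0$ we have $v^{\top}z_t=v^{\top}(z_t-z_T)=-\sum_{\tau=t+1}^T a^{-\tau}v^{\top}\eta_\tau$, so $a^t v^{\top}z_t$ is $O(1)$ and $v^{\top}Y_Tv=\sum_t a^{2t}(v^{\top}(z_t-z_T))^2$ has expectation $\Theta(T)$. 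A Markov/concentration step then gives the claimed $\sigma_2(Y_T)=O(\sqrt{T}\,a^{T})$ w.h.p.\ (equivalently one may note $F_T=(\sum_{t=0}^{T-1}a^{-2t})z_Tz_T^{\top}$ is exactly rank one and combine with $\|U_T-F_T\|_{\mathrm{op}}\le\epsilon$ from the explosive analysis in Eq.~\eqref{ut_ft}). The hypothesis $T^2\le a^T$ forces $\sigma_2(Y_T)\ll\sigma_1(Y_T)$, i.e.\ $Y_T$ is asymptotically rank one.

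For the error I would diagonalize $Y_T^{-1}=\sigma_1^{-1}u_1u_1^{\top}+\sigma_2^{-1}u_2u_2^{\top}$ (with $u_1\approx z_T/\|z_T\|$, $u_2\approx v$) and split $\hat A_o-A_o=\sigma_1^{-1}(S_T^{\top}u_1)u_1^{\top}+\sigma_2^{-1}(S_T^{\top}u_2)u_2^{\top}$. The first term mirrors the regular one-dimensional explosive case: $S_T^{\top}u_1\approx\|z_T\|\sum_t a^{t}\eta_{t+1}=\Theta(a^{T})$, so its contribution is $O(a^{-T})\to0$. The decisive term is the second. Writing $u_2^{\top}X_t\approx a^t v^{\top}z_t=-\sum_{\tau=t+1}^T a^{t-\tau}(v^{\top}\eta_\tau)$, I get $S_T^{\top}u_2\approx-\sum_t\big(\sum_{\tau>t}a^{t-\tau}v^{\top}\eta_\tau\big)\eta_{t+1}$, whose diagonal ($\tau=t+1$) part is $a^{-1}\sum_t (v^{\top}\eta_{t+1})\eta_{t+1}$ with conditional mean $a^{-1}\sum_t v=\Theta(T)\,v$ by isotropy ($\Ex[(v^{\top}\eta)\eta]=v$), the off-diagonal ($\tau>t+1$) part being a mean-zero sum of smaller order. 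Hence $\|S_T^{\top}u_2\|=\Theta(T)=\Theta(\sigma_2(Y_T))$, so $\gamma_T=\|\hat A_o-A_o\|$ stays $\Theta(1)$ rather than decaying.

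Finally, to exhibit a genuine continuous, non-degenerate law, I would normalize by $T$ and show that $\tfrac1T\sum_t w_t^2$ and $\tfrac1T\sum_t w_t\eta_{t+1}$ (with $w_t=\sum_{\tau>t}a^{t-\tau}v_\infty^{\top}\eta_\tau$) converge, by a law of large numbers for the limiting stationary process, to a positive constant and a nondegenerate Gaussian-functional vector, so $\gamma_T$ converges in distribution to a smooth function of the i.i.d.\ noise that has a density and is a.s.\ positive; the observed bimodality at $\pm0.55$ is the sign of this limiting functional. The main obstacle is exactly the dependence structure: the near-null direction $v=v(z_T)$ is itself a function of the noise, so the products $(v^{\top}\eta_{t+1})\eta_{t+1}$ are not centred and $\sum_t w_t\eta_{t+1}$ is not a martingale-difference sum. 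Isolating the non-vanishing isotropy-driven bias from the martingale fluctuations, and replacing the random $v$ by $v_\infty$ with controlled error (using that $z_\infty$ depends on each individual $\eta_\tau$ only through the small weight $a^{-\tau}$), is the crux of the argument.
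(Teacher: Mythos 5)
Your overall route is the same as the paper's: reduce to $\|Y_T^{-1}S_T\|$, pass to $z_t=a^{-t}X_t$, observe that $Y_T$ is nearly the rank-one matrix $\big(\sum_t a^{2t}\big)z_Tz_T^{\top}$ so that $\sigma_1=\Theta(a^{2T})$ while the direction $v\perp z_T$ carries little energy, and then show that the martingale term projected on $v$ has a non-vanishing, isotropy-driven bias of the same order as $\sigma_2$, so the error cannot decay. The paper does exactly this, except that it organizes the inversion through Woodbury's identity on the $(Z_T,Z_T^{\perp})$ decomposition rather than through an approximate eigendecomposition, and it obtains $\sigma_2(\sum_t X_tX_t^{\top})=O(\sqrt{T}a^{T})$ \emph{indirectly}, from the non-decay of the error together with $\|a^{-2T}S_T\|=O(\sqrt{T}a^{-T})$ (Proposition~\ref{condition_number}); your direct test-vector bound $v^{\top}Y_Tv=O(T)$ is cleaner and in fact tighter. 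Your accounting of the $u_1$-direction (an $O(a^{-T})$ contribution) and of the $\Theta(T)$ bias term agrees with Eq.~\eqref{error_dist}.

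The genuine gap is the step you yourself flag as the crux: since $v$ is a function of $z_T$, hence of all the noise, the products $(v^{\top}\eta_{t+1})\eta_{t+1}$ are not conditionally centred at $v$, so the assertion ``conditional mean $=\Theta(T)\,v$ by isotropy'' is not justified as written; and your proposed repair --- replace $v$ by $v_{\infty}$ and invoke an LLN --- does not remove the difficulty, because $v_{\infty}$ still depends on every $\eta_{\tau}$. What makes the computation go through in the paper is the special coordinate structure of this example: $Z_T^{\perp}\propto(z_T^{(2)},-z_T^{(1)})$ and the two coordinates of the system are driven by \emph{independent} noise sequences, so the dangerous diagonal term factors as $z_T^{(2)}\sum_t(\eta^{(1)}_{t+1})^2$ (and symmetrically), a product of independent factors with $z_T^{(2)}$ of constant order and $\sum_t(\eta^{(1)}_{t+1})^2=\Theta(T)$; all remaining terms, such as $z_T^{(1)}\sum_t\eta^{(1)}_{t+1}\eta^{(2)}_{t+1}$ and the geometric-weighted sums $a^{-j}\sum_t\eta_{t+j}\eta_t$ for $j\geq 1$, are mean-zero quadratic forms that Hanson--Wright plus a union bound over $j$ control at $O(\sqrt{T}\log(T/\delta))$. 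Without exploiting this independence across coordinates, your decoupling step is not technical bookkeeping but the actual missing argument. (On the final claim that $\gamma_T$ admits a continuous, non-degenerate law, your plan is no less rigorous than the paper's, which reads this off from Eq.~\eqref{error_dist} without a formal density argument.)
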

The proof is given in Section~\ref{inconsistent} and Proposition~\ref{condition_number}. Proposition~\ref{singular} suggests that the consistency of OLS estimate depends directly on the condition number of the sample covariance matrix. In fact, OLS is inconsistent when condition number grows exponentially fast in $T$ (as in the case of $A_o$). The proof requires a careful expansion of the (appropriately scaled) sample covariance matrix inverse using Woodbury's identity. Since the sample covariance matrix is highly ill--conditioned, it magnifies the noise-covariate cross terms so that the identification error no longer decays as time increases. Although for stable and marginally stable $A$ this invertibility can be characterized $\sigma_{\min}(\Gamma_T(A))$ such an intuition does not extend to explosive systems. This is because the behavior of $Y_T$ is dominated by ``past'' $\eta_t$s such as $\eta_1, \eta_2$ much more than the $\eta_{T-1}, \eta_{T}$ etc. When $A$ is explosive, all singular values of $||A^T||$ grow exponentially fast. Since $X_T = A^{T-1} \eta_1 + A^{T-2} \eta_2 + \ldots + A \eta_{T-1} + \eta_T$ the behavior of $X_T$ is dominated by $A^{T-1} \eta_1$. This causes a very strong dependence between $X_T$ and $X_{T+1}$ and some structural constraints (such as regularity) are necessary for OLS identification.

	\section{Discussion}
\label{discussion}
In this work we provided finite time guarantees for OLS identification for LTI systems. We show that whenever $A$ is regular, with an otherwise arbitrary distribution of eigenvalues, OLS can be used for identification. More specifically we give sharpest possible rates when $A$ belongs to one of $\{\Sc_0, \Sc_1, \Sc_2 \}$. When the assumption of regularity is violated, we show that OLS is statistically inconsistent. This suggests that statistical consistency relies on the conditioning of the sample covariance matrix and \textit{not} so much on the signal-to-noise ratio for explosive matrices. Despite substantial differences between the distributional properties of the covariates we find that time taken to reach a given error threshold scales the same (up to some constant that depends only on $A$) across all regimes in terms of the probability of error.  To see this, observe that Theorem~\ref{main_result} gives us with probability at least $1-\delta$ 
\begin{align}
A \in \Sc_0 &\implies ||A - \hat{A}|| \leq \sqrt{\frac{C_0(d)\log{\frac{1}{\delta}}}{T}} \nonumber \\
A \in \Sc_1 &\implies ||A - \hat{A}|| \leq \frac{C_1(d)}{T}{\log{\Big(\frac{T}{\delta}\Big)}} \nonumber \\
A \in \Sc_2 &\implies ||A - \hat{A}|| \leq \frac{C_2(d) \sigma_{\max}(A^{-T})}{\delta} \label{ub}
\end{align}
The lower bounds for $A \in \Sc_0$ and $A \in \Sc_1$ are given in~\cite{simchowitz2018learning} Appendix B, F.1 which are 
\begin{align}
A \in \Sc_0 &\implies ||A - \hat{A}|| \geq \sqrt{\frac{B_0(d)\log{\frac{1}{\delta}}}{T}} \nonumber\\
A \in \Sc_1 &\implies ||A - \hat{A}|| \geq \frac{B_1(d)}{T}{\log{\Big(\frac{1}{\delta}\Big)}} \label{lbb}
\end{align}
with probability at least $\delta$. For $A \in \Sc_2$ we provide a tighter lower bound in Proposition~\ref{minimax}, \textit{i.e.}, with probability at least $\delta$
\begin{equation}
A \in \Sc_2 \implies ||A - \hat{A}|| \geq \frac{B_2(d) \sigma_{\max}(A^{-T})}{-\delta \log{\delta}} \label{lbb2}
\end{equation}
Now fix an error threshold $\epsilon$, from Eq.~\eqref{ub} we get with probability $\geq 1 - \delta$
\begin{align*}
A \in \Sc_0 &\implies ||A - \hat{A}|| \leq \epsilon \text{ if } T \geq \frac{ \log{\frac{1}{\delta}}}{\epsilon^2 C_0(d)} \\
A \in \Sc_1 &\implies ||A - \hat{A}|| \leq \epsilon \text{ if } T \geq \frac{ \log{\frac{T}{\delta}}}{\epsilon C_1(d)} \\
A \in \Sc_2 &\implies ||A - \hat{A}|| \leq  \epsilon \text{ if } T \geq \frac{\log{\frac{1}{\delta {\epsilon}}} + \log{C_2(d)}}{\log{\rho_{\min}}}
\end{align*}
From Eq.~\eqref{lbb},\eqref{lbb2} we also know this is tight. In summary to reach a certain error threshold, $T$ must be at least as large as $\log{\frac{1}{\delta}}$ for every regime. 

Another key contribution of this work is providing finite time guarantees for a general distribution of eigenvalues. A major hurdle towards applying Theorem~\ref{main_result} to the general case is the mixing between separate components (corresponding to stable, marginally stable or explosive). Despite these difficulties we provide error bounds where each component, stable, marginally stable or explosive, has (almost) the same behavior as Theorem~\ref{main_result}. The techniques introduced here can be used to analyze extensions such as identification in the presence of a control input $U_t$ or heavy tailed distribution of noise (See Sections~\ref{extensions} and \ref{noise_ind}).

	\bibliographystyle{alpha}
	\bibliography{bibliography}

	\newpage
\onecolumn
\section{Appendix}
\label{appendix_matrix}
\begin{prop}
	\label{psd_result_2}
	Let $P, V$ be a psd and pd matrix respectively and define $\bar{P} = P + V$. Let there exist some matrix $Q$ for which we have the following relation
	\[
	||\bar{P}^{-1/2} Q|| \leq \gamma
	\]
	For any vector $v$ such that $v^{\prime} P v = \alpha, v^{\prime} V v =\beta$ it is true that
	\[
	||v^{\prime}Q|| \leq \sqrt{\beta+\alpha} \gamma 
	\]
\end{prop}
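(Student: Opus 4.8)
The plan is to reduce the claimed bound to a single application of submultiplicativity of the operator norm, after re-expressing $v^{\prime}Q$ through the symmetric square root of $\bar{P}$. Since $P \succeq 0$ and $V \succ 0$, the sum $\bar{P} = P+V$ is positive definite, hence invertible, and admits a unique symmetric positive definite square root $\bar{P}^{1/2}$ with $\bar{P}^{1/2}\bar{P}^{-1/2} = I$.

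First I would insert the identity $I = \bar{P}^{1/2}\bar{P}^{-1/2}$ and regroup the factors:
\[
v^{\prime}Q = v^{\prime}\bar{P}^{1/2}\bar{P}^{-1/2}Q = (\bar{P}^{1/2}v)^{\prime}(\bar{P}^{-1/2}Q).
\]
Writing $u = \bar{P}^{1/2}v$ and using the symmetry of $\bar{P}^{1/2}$, the Euclidean length of $u$ is governed entirely by the quadratic form associated with $\bar{P}$:
\[
||u||_2^2 = u^{\prime}u = v^{\prime}\bar{P}^{1/2}\bar{P}^{1/2}v = v^{\prime}\bar{P}v = v^{\prime}Pv + v^{\prime}Vv = \alpha + \beta.
\]
Thus $||u||_2 = \sqrt{\alpha+\beta}$, which is exactly the scalar factor appearing in the target inequality.

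The last step bounds $||v^{\prime}Q|| = ||u^{\prime}(\bar{P}^{-1/2}Q)||$. Transposing and invoking submultiplicativity of the operator norm, together with the hypothesis $||\bar{P}^{-1/2}Q|| \leq \gamma$ and the fact that a matrix and its transpose share the same operator norm, yields
\[
||v^{\prime}Q|| = ||(\bar{P}^{-1/2}Q)^{\prime}u||_2 \leq ||\bar{P}^{-1/2}Q||_{\text{op}}\,||u||_2 \leq \gamma\sqrt{\alpha+\beta},
\]
which is the assertion. No step presents a genuine obstacle; the only points meriting care are the invertibility of $\bar{P}$ (ensured by $V \succ 0$), the splitting $v^{\prime}\bar{P}v = \alpha+\beta$ made possible by the symmetric square root, and the transfer of the hypothesized norm bound from $\bar{P}^{-1/2}Q$ to its transpose.
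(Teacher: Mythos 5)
Your proof is correct and follows essentially the same route as the paper's: both insert the factorization $I = \bar{P}^{1/2}\bar{P}^{-1/2}$, use $v^{\prime}\bar{P}v = \alpha + \beta$, and apply the operator-norm hypothesis on $\bar{P}^{-1/2}Q$ (the paper phrases this via the Rayleigh quotient $v^{\prime}QQ^{\prime}v / v^{\prime}\bar{P}v \leq \gamma^2$, you via submultiplicativity, which is the same estimate). No gaps.
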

\begin{proof}
	Since 
	\[
	||\bar{P}^{-1/2} Q||_2^2 \leq \gamma^2 	
	\] 
	for any vector $v \in \Sc^{d-1}$ we will have 
	\[
	\frac{v^{\prime} \bar{P}^{1/2}\bar{P}^{-1/2} Q Q^{\prime}\bar{P}^{-1/2}\bar{P}^{1/2} v}{v^{\prime} \bar{P} v} \leq \gamma^2
	\]
	and substituting $v^{\prime} \bar{P} v = \alpha + \beta$ gives us
	\begin{align*}
	{v^{\prime}  Q Q^{\prime} v} &\leq \gamma^2{v^{\prime} \bar{P} v} \\
	&= (\alpha + \beta) \gamma^2
	\end{align*}
\end{proof}

\begin{prop}
	\label{inv_jordan}
Consider a Jordan block matrix $J_d(\lambda)$ given by \eqref{jordan}, then $J_d(\lambda)^{-k}$ is a matrix where each off--diagonal (and the diagonal) has the same entries, \textit{i.e.},
	\begin{equation}
	J_d(\lambda)^{-k} =\begin{bmatrix}
	a_1 & a_2 &  a_3 & \hdots & a_d \\
	0 & a_1 & a_2 & \hdots & a_{d-1} \\
	\vdots & \vdots & \ddots & \ddots & \vdots \\
	0 & \hdots & 0 & a_1 & a_2 \\
	0 & 0 & \hdots & 0 & a_1  
	\end{bmatrix}_{d \times d}
	\end{equation}
for some $\{a_i\}_{i=1}^d$.
\end{prop}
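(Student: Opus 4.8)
The plan is to identify the set of matrices in the statement---upper-triangular matrices that are constant along every diagonal---with a small commutative algebra, and then observe that this algebra is closed under the operations of taking inverses and powers. Write $J_d(\lambda) = \lambda I + N$, where $N$ is the nilpotent shift matrix with ones on the first superdiagonal and zeros elsewhere. The structural fact I would establish first is that, for $0 \leq j \leq d-1$, the power $N^j$ is exactly the matrix with ones on the $j$-th superdiagonal and zeros everywhere else, while $N^j = 0$ for $j \geq d$. Consequently the collection $\mathcal{T}$ of upper-triangular Toeplitz matrices displayed in the claim is precisely the set of polynomials in $N$, namely $\mathcal{T} = \{\sum_{j=0}^{d-1} c_j N^j : c_j \in \Rb\}$, under the correspondence $a_{i+1} = c_i$.

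Next I would verify that $\mathcal{T}$ is closed under multiplication: since $N^i N^j = N^{i+j}$ (with the convention $N^m = 0$ for $m \geq d$), the product of two elements of $\mathcal{T}$ is again a polynomial in $N$, hence again in $\mathcal{T}$. This shows $\mathcal{T}$ is a commutative subalgebra of the $d \times d$ matrices containing $I$. I would then produce $J_d(\lambda)^{-1}$ explicitly as an element of $\mathcal{T}$. Assuming $\lambda \neq 0$ (which holds in our setting, since all eigenvalues of interest have modulus exceeding unity), factor $J_d(\lambda) = \lambda(I + \lambda^{-1} N)$ and use the Neumann series, which terminates because $N$ is nilpotent:
\[
J_d(\lambda)^{-1} = \lambda^{-1}\sum_{j=0}^{d-1} (-\lambda^{-1})^{j} N^{j}.
\]
This is manifestly a polynomial in $N$, so $J_d(\lambda)^{-1} \in \mathcal{T}$.

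Finally, $J_d(\lambda)^{-k} = (J_d(\lambda)^{-1})^{k}$ is a $k$-fold product of the element $J_d(\lambda)^{-1} \in \mathcal{T}$; by the closure under multiplication established above, $J_d(\lambda)^{-k} \in \mathcal{T}$, which is exactly the claimed form with a common entry along each diagonal. I do not anticipate any serious obstacle: the only step requiring genuine care is the bookkeeping that identifies $N^j$ with the $j$-th superdiagonal and thereby legitimizes the identification $\mathcal{T} = \Rb[N]$. Once that identification is in place, closure under products and the terminating Neumann series carry the entire argument, and one could even read off the explicit values $\{a_i\}_{i=1}^{d}$ by expanding $(J_d(\lambda)^{-1})^{k}$ as a polynomial in $N$ if an explicit formula were desired.
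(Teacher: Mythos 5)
Your proof is correct, and it shares its starting point with the paper's argument---both write $J_d(\lambda)=\lambda I + N$ with $N$ the nilpotent superdiagonal shift, and both obtain $J_d(\lambda)^{-1}$ as the terminating Neumann series $\lambda^{-1}\sum_{j=0}^{d-1}(-\lambda^{-1})^j N^j$---but the two arguments diverge at the step that passes from the inverse to the $k$-th negative power. The paper differentiates the resolvent identity $k-1$ times with respect to $\lambda$, using $\frac{d^{k-1}}{d\lambda^{k-1}}(\lambda I+N)^{-1}=(-1)^{k-1}(k-1)!\,(\lambda I+N)^{-k}$ and differentiating the series term by term, so that $J_d(\lambda)^{-k}$ appears directly as a polynomial in $N$ with coefficients given by derivatives of $\lambda^{-l-1}$. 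You instead observe that the upper-triangular Toeplitz matrices are exactly the polynomial algebra $\Rb[N]$, that this algebra is closed under multiplication because $N^iN^j=N^{i+j}$ (truncating at $N^d=0$), and that $J_d(\lambda)^{-k}=(J_d(\lambda)^{-1})^k$ therefore never leaves the algebra. Your route is purely algebraic and avoids the (mild but real) need to justify term-by-term differentiation of a matrix-valued identity; what it gives up is the explicit coefficient formula that the paper's differentiation produces essentially for free, though as you note one could recover explicit $\{a_i\}$ by expanding the $k$-fold product if needed. Both arguments are complete and correct for the claim as stated, which only asserts the Toeplitz structure and not particular coefficient values.
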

\begin{proof}
	$J_d(\lambda) = (\lambda I + N)$ where $N$ is the matrix with all ones on the $1^{st}$ (upper) off-diagonal. $N^k$ is just all ones on the $k^{th}$ (upper) off-diagonal and $N$ is a nilpotent matrix with $N^d = 0$. Then
	\begin{align*}
	(\lambda I + N)^{-1} &= (\sum_{l=0}^{d-1} (-1)^{l}\lambda^{-l-1}N^{l}) \\
	(-1)^{k-1}(k-1 )!(\lambda I + N)^{-k} &= \Big(\sum_{l=0}^{d-1} (-1)^{l}\frac{d^{k-1}\lambda^{-l-1}}{d \lambda^{k-1}}N^{l}\Big) \\
	&= \Big(\sum_{l=0}^{d-1} (-1)^{l}c_{l, k}N^{l}\Big) 
	\end{align*} 

and the proof follows in a straightforward fashion.
\end{proof}
\begin{prop}
	\label{reg_invertible}
Let $A$ be a regular matrix and $A = P^{-1} \Lambda P$ be its Jordan decomposition. Then
\[
\inf_{||a||_2 = 1}||\sum_{i=1}^d a_i \Lambda^{-i+1}||_2 > 0
\]
Further $\phi_{\min}(A) > 0$ where $\phi_{\min}(\cdot)$ is defined in Definition~\ref{outbox}.
\end{prop}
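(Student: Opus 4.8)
The plan is to treat the two claims separately, both resting on one structural fact: in the explosive regime where this result is applied, a regular $A$ is \emph{non-derogatory}. Indeed every eigenvalue has absolute value $>1$ and, by regularity, geometric multiplicity one, so each distinct eigenvalue owns a single Jordan block; hence the minimal polynomial of $A$ --- and therefore of $\Lambda$ and of $M:=\Lambda^{-1}$ --- equals the characteristic polynomial and has degree exactly $d$.

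For the first claim, write $\sum_{i=1}^d a_i\Lambda^{-i+1}=q_a(M)$ with $q_a(x)=\sum_{i=1}^d a_i x^{i-1}$ of degree at most $d-1$. The quantity $\inf_{\|a\|_2=1}\|q_a(M)\|_2$ is the smallest singular value of the linear map $a\mapsto q_a(M)$ from $\Rb^d$ into the $d\times d$ matrices. This map is injective: $q_a(M)=0$ forces $q_a$ to be a polynomial multiple of the minimal polynomial of $M$, but $\deg q_a\le d-1<d$, so $q_a\equiv 0$ and $a=0$. An injective linear map on a finite-dimensional space has strictly positive smallest singular value --- the sphere $\{\|a\|_2=1\}$ is compact and $\|q_a(M)\|_2$ is continuous and nowhere zero on it --- which is precisely the first claim.

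For $\phi_{\min}(A)>0$, note that for any unit $u$, $u'\big(\sum_{i=1}^T\Lambda^{-i+1}vv'\Lambda^{-i+1\prime}\big)u=\sum_{i=1}^T (u'M^{i-1}v)^2\ge \|W_v'u\|_2^2$, where $W_v=[\,v,\,Mv,\,\dots,\,M^{d-1}v\,]$ is the order-$d$ Krylov matrix of $v$ (using $T\ge d$). Hence $\sigma_{\min}\big(\sum_{i=1}^T\Lambda^{-i+1}vv'\Lambda^{-i+1\prime}\big)\ge\sigma_{\min}(W_v)^2$, so it suffices to show $W_v$ is invertible for every $v\in S_d(1)$, i.e.\ that every such $v$ is a cyclic vector of $M$. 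Since $M$ is non-derogatory, cyclicity holds iff $v$ has a nonzero component along the top generalized eigenvector of each Jordan block. Using Proposition~\ref{inv_jordan}, the within-block Krylov (Hankel) determinant equals $\pm(\text{top block coordinate})^{\text{block size}}$, and the distinct eigenvalues contribute a nonvanishing confluent-Vandermonde factor, so $\det W_v$ is a nonzero multiple of a product of powers of the top-of-block coordinates of $v$. For $v\in S_d(1)$ every coordinate obeys $|v_i|\ge 1$, so $\det W_v\ne 0$ and the sample-Gramian-type matrix is positive definite pointwise.

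The remaining and most delicate step --- promoting pointwise positivity to a strictly positive infimum --- is the main obstacle, since $S_d(1)$ is unbounded and $\sigma_{\min}(W_v)$ can deteriorate as some coordinates of $v$ run off to infinity while others remain on the boundary $|v_i|=1$. My route is a compactness argument: on any bounded outbox $\{v:1\le |v_i|\le M\}$ the continuous, strictly positive map $v\mapsto\sigma_{\min}\big(\sum_{i=1}^T\Lambda^{-i+1}vv'\Lambda^{-i+1\prime}\big)$ attains a positive minimum, and this is exactly what is needed downstream, where $\phi_{\min}(A)$ is invoked with $v=Pz_T$: its coordinates lie in $S_d(\psi(A)\delta)$ by Proposition~\ref{anti_conc1}, while $z_T=\sum_\tau \Lambda^{-\tau}\tilde\eta_\tau$ is a convergent series and hence bounded with high probability. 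I would therefore establish the uniform lower bound on such a bounded outbox, with the confluent-Vandermonde structure above guaranteeing the bound does not collapse to zero in the interior.
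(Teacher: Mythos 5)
Your treatment of the first claim, and the pointwise part of your second claim, follow the paper's own route: regularity forces the minimal polynomial to equal the characteristic polynomial (the paper's Section~\ref{lemmab}), so no nonzero polynomial of degree at most $d-1$ annihilates $\Lambda^{-1}$, and compactness of the unit sphere gives the first claim; your Krylov/confluent-Vandermonde criterion for invertibility of $W_v$ when all $|v_i|\geq 1$ is a determinant-level version of the paper's observation (via Proposition~\ref{inv_jordan}) that $\sum_{i=1}^d a_i\Lambda^{-i+1}$ always contains a row with exactly one nonzero entry.

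The divergence is the final step, and there your caution is not excess rigor --- it is the crux. The paper passes from pointwise invertibility to ``$\phi_{\min}(A)>0$'' with the words ``By Definition~\ref{outbox}'', i.e.\ it silently takes an infimum over the \emph{unbounded} set $S_d(1)$. That step cannot be repaired as stated: the claim is false whenever $\Lambda$ contains a Jordan block of size at least two. Take $\Lambda=J_2(\lambda)$ with $|\lambda|>1$ (this is the paper's own regular example $A_r$), $v=(R,1)^{\prime}\in S_2(1)$ and the unit vector $u=(1,-R)^{\prime}/\sqrt{1+R^2}$; then $u^{\prime}\Lambda^{-k}v=-k\lambda^{-k-1}/\sqrt{1+R^2}$, so
\[
\sigma_{\min}\Big(\sum_{k=0}^{T-1}\Lambda^{-k}vv^{\prime}\Lambda^{-k \prime}\Big)\;\leq\;\frac{1}{1+R^2}\sum_{k=0}^{T-1}k^2|\lambda|^{-2k-2}\;\longrightarrow\;0\quad(R\to\infty),
\]
hence $\phi_{\min}(J_2(\lambda))=0$ although $J_2(\lambda)$ is regular. (For diagonalizable $\Lambda$ with distinct eigenvalues the infimum \emph{is} positive, since $|v_i|\geq 1$ only rescales the weights $u_iv_i$ upward; the failure occurs exactly in the non-diagonalizable case that the regularity notion is designed to cover.) This is precisely the degeneration you anticipated --- one coordinate running to infinity while another sits on the boundary --- so your refusal to assert the unbounded-infimum statement is vindicated. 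Your bounded-outbox substitute, combined with the high-probability boundedness of $z_T$ where $\phi_{\min}$ is actually invoked ($v=Pz_T$ with $\min_i|v_i|\geq\psi(A)\delta$ by Proposition~\ref{anti_conc1}), is the correct repair, and note it is needed not only here but also in Proposition~\ref{ft_inv}, which inherits the same gap. Do state the cost explicitly: the compactness constant depends on the truncation level $M\gtrsim\|Pz_T\|/(\psi(A)\delta)$, so the effective $\phi_{\min}$ is no longer a constant of $A$ alone but picks up additional $\delta$-dependence that must be propagated into $T_u(\delta)$, $\gamma_e(A,\delta)$, and the explosive-regime error bound. With that caveat, your proposal proves the corrected form of the proposition; the paper's own proof does not prove the literal one.
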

\begin{proof}
	When $A$ is regular, the geometric multiplicity of each eigenvalue is $1$. This implies that $A^{-1}$ is also regular. Regularity of a matrix $A$ is equivalent to the case when minimal polynomial of $A$ equals characteristic polynomial of $A$ (See Section~\ref{lemmab} in appendix), \textit{i.e.},
	\begin{align*}
	\inf_{||a||_2 = 1}||\sum_{i=1}^d a_i A^{-i+1}||_2  &> 0
	\end{align*}
	Since $A^{-j} = P^{-1} \Lambda^{-j} P$ we have
	\begin{align*}
	\inf_{||a||_2 = 1}||\sum_{i=1}^d a_i P^{-1}\Lambda^{-i+1}P||_2 &> 0 \\
	\inf_{||a||_2 = 1}||\sum_{i=1}^d a_i P^{-1}\Lambda^{-i+1}||_2 \sigma_{\min}(P) &> 0 \\
	\inf_{||a||_2 = 1}||\sum_{i=1}^d a_i \Lambda^{-i+1}||_2 \sigma_{\min}(P) \sigma_{\min}(P^{-1}) &> 0 \\
	\inf_{||a||_2 = 1}||\sum_{i=1}^d a_i \Lambda^{-i+1}||_2 &>0	
	\end{align*}
	Since $\Lambda$ is Jordan matrix of the Jordan decomposition, it is of the following form
	\begin{equation}
	\Lambda  =\begin{bmatrix}
	J_{k_1}(\lambda_1) & 0 & \hdots & 0 &0 \\
	0 & J_{k_2}(\lambda_2) & 0 & \hdots &0  \\
	\vdots & \vdots & \ddots & \ddots & \vdots \\
	0 & \hdots & 0 & J_{k_{l}}(\lambda_l) & 0 \\
	0 & 0 & \hdots & 0 & J_{k_{l+1}}(\lambda_{l+1})
	\end{bmatrix}
	\end{equation} 
	where $J_{k_i}(\lambda_i)$ is a $k_i \times k_i$ Jordan block corresponding to eigenvalue $\lambda_i$. Then
	\begin{equation}
	\Lambda^{-k}  =\begin{bmatrix}
	J^{-k}_{k_1}(\lambda_1) & 0 & \hdots & 0 &0 \\
	0 & J^{-k}_{k_2}(\lambda_2) & 0 & \hdots &0  \\
	\vdots & \vdots & \ddots & \ddots & \vdots \\
	0 & \hdots & 0 & J^{-k}_{k_{l}}(\lambda_l) & 0 \\
	0 & 0 & \hdots & 0 & J^{-k}_{k_{l+1}}(\lambda_{l+1})
	\end{bmatrix}
	\end{equation} 	
Since $||\sum_{i=1}^d a_i \Lambda^{-i+1}||_2 >0$, without loss of generality assume that there is a non--zero element in $k_1 \times k_1$ block. This implies 
\begin{align*}
||\underbrace{\sum_{i=1}^d a_i J_{k_1}^{-i+1}(\lambda_1)}_{=S}||_2  > 0
\end{align*}
By Proposition~\ref{inv_jordan} we know that each off--diagonal (including diagonal) of $S$ will have same element. Let $j_0 = \inf{\{j | S_{ij} \neq 0\}}$ and in column $j_0$ pick the element that is non--zero and highest row number, $i_0$. By design $S_{i_0, j_0} > 0$ and further
$$S_{k_1 -(j_0 - i_0), k_1} = S_{i_0, j_0}$$ 
because they are part of the same off--diagonal (or diagonal) of $S$. Thus the row $k_1 - (j_0 - i_0)$ has only one non--zero element because of the minimality of $j_0$.

We proved that for any $||a||=1$ there exists a row with only one non--zero element in the matrix $\sum_{i=1}^d a_i \Lambda^{-i+1}$. This implies that if $v$ is a vector with all non--zero elements, then $||\sum_{i=1}^d a_i \Lambda^{-i+1} v||_2 > 0$, \textit{i.e.},
\begin{align*}
\inf_{||a||_2 = 1}||\sum_{i=1}^d a_i \Lambda^{-i+1} v ||_2 &> 0 
\end{align*}
This implies
\begin{align*}
\inf_{||a||_2 = 1}||[v, \Lambda^{-1} v, \ldots, \Lambda^{-d+1}v] a||_2 &> 0\\
\sigma_{\min}([v, \Lambda^{-1} v, \ldots, \Lambda^{-d+1}v]) &> 0 \\
\end{align*}
By Definition~\ref{outbox} we have
\begin{align*}
\phi_{\min}(A) &> 0
\end{align*}
\end{proof}
\begin{prop}[Corollary 2.2 in~\cite{ipsen2011determinant}]
	\label{det_lb}
	For any positive definite matrix  $M$ with diagonal entries $m_{jj}$, $1 \leq j \leq d$ and $\rho$ is the spectral radius of the matrix $C$ with elements
	\begin{align*}
	c_{ij} &= 0 \hspace{3mm} \text{if } i=j \\
	 &=\frac{m_{ij}}{\sqrt{m_{ii}m_{jj}}}  \hspace{3mm} \text{if } i\neq j
	\end{align*}
	then 
	\begin{align*}
	0 < \frac{\prod_{j=1}^d m_{jj} - \text{det}(M)}{\prod_{j=1}^d m_{jj}} \leq 1 - e^{-\frac{d \rho^2}{1+\lambda_{\min}}}
	\end{align*}
	where $\lambda_{\min} = \min_{1 \leq j \leq d} \lambda_j(C)$.
\end{prop}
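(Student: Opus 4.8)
The plan is to reduce the statement to a two--sided spectral estimate for the \emph{correlation matrix} attached to $M$. First I would introduce the positive diagonal matrix $D = \det$-free notation $\mathrm{diag}(m_{11},\ldots,m_{dd})$ (positive because $M \succ 0$) and set $R = D^{-1/2} M D^{-1/2}$. Since the diagonal of $R$ is identically $1$ while its off--diagonal entries are exactly $m_{ij}/\sqrt{m_{ii}m_{jj}}$, one reads off that $R = I + C$. Taking determinants and using $\det(D) = \prod_{j} m_{jj}$ gives
\[
\frac{\det(M)}{\prod_{j=1}^d m_{jj}} = \det(R) = \det(I+C),
\]
so the quantity to be bounded is precisely $1 - \det(R)$, and the whole proposition is an estimate on $\det(I+C)$.

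Next I would pass to eigenvalues of the symmetric matrix $C$. Because $C$ has zero diagonal, $\text{tr}(C) = \sum_j \lambda_j(C) = 0$; and $M \succ 0$ forces $R \succ 0$, so $1 + \lambda_j(C) > 0$ for every $j$, in particular $1 + \lambda_{\min} > 0$. The eigenvalues of $R$ are $\{1 + \lambda_j(C)\}$, hence $\log\det(R) = \sum_j \log\big(1 + \lambda_j(C)\big)$. The left inequality $1 - \det(R) > 0$ is then the strict Hadamard bound: from $\log(1+t) \le t$ we get $\log\det(R) \le \sum_j \lambda_j(C) = 0$ with equality only when $C = 0$, which $\rho > 0$ rules out, so $\det(R) < 1$.

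For the right inequality I would invoke the elementary one--sided bound $\log(1+t) \ge t - \frac{t^2}{2(1+\lambda_{\min})}$, valid for all $t \ge \lambda_{\min} > -1$, apply it at each $t = \lambda_j(C)$, and sum. Using $\sum_j \lambda_j(C) = 0$ together with $\sum_j \lambda_j(C)^2 = \text{tr}(C^2) \le d\rho^2$ (each $|\lambda_j(C)| \le \rho$), this yields
\[
\log\det(R) \;\ge\; -\frac{\text{tr}(C^2)}{2(1+\lambda_{\min})} \;\ge\; -\frac{d\rho^2}{2(1+\lambda_{\min})} \;\ge\; -\frac{d\rho^2}{1+\lambda_{\min}},
\]
so $\det(R) \ge e^{-d\rho^2/(1+\lambda_{\min})}$, which is exactly $1 - \det(R) \le 1 - e^{-d\rho^2/(1+\lambda_{\min})}$.

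The only nonroutine ingredient is the scalar inequality above, and that is where the (modest) work lies. I would prove it by setting $h(t) = \log(1+t) - t + \frac{t^2}{2(1+\lambda_{\min})}$, noting $h(0)=0$ and $h'(t) = \frac{t\,(t - \lambda_{\min})}{(1+\lambda_{\min})(1+t)}$, and checking that $h$ decreases on $(\lambda_{\min},0)$ and increases on $(0,\infty)$, so $h \ge h(0) = 0$ on all of $[\lambda_{\min},\infty)$. Note that this argument actually delivers the sharper constant $d\rho^2/\big(2(1+\lambda_{\min})\big)$; the final relaxation to $d\rho^2/(1+\lambda_{\min})$ is used only to match the clean form quoted from~\cite{ipsen2011determinant}.
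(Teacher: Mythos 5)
Your proposal is correct, but note that the paper never proves this proposition at all: it is quoted as Corollary 2.2 of~\cite{ipsen2011determinant} and used as a black box inside the proof of Proposition~\ref{gramian_lb}. What you have written is therefore a self-contained replacement for that citation, and it follows the same route as the cited reference: factor out the diagonal, $M = D^{1/2}(I+C)D^{1/2}$ with $D = \mathrm{diag}(m_{11},\dots,m_{dd})$, so that $\det(M)/\prod_{j} m_{jj} = \det(I+C) = \prod_{j}\bigl(1+\lambda_j(C)\bigr)$, then control $\sum_j \log\bigl(1+\lambda_j(C)\bigr)$ using $\text{tr}(C)=0$, $\text{tr}(C^2)\leq d\rho^2$, and the scalar inequality $\log(1+t) \geq t - t^2/\bigl(2(1+\lambda_{\min})\bigr)$ on $[\lambda_{\min},\infty)$; your verification of that inequality via $h'(t) = t(t-\lambda_{\min})/\bigl((1+\lambda_{\min})(1+t)\bigr)$ is correct. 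Three remarks. First, your argument actually yields the sharper exponent $-d\rho^2/\bigl(2(1+\lambda_{\min})\bigr)$, so the stated bound follows a fortiori; this costs nothing for the paper, which only uses the qualitative consequence $\det(B) = C(k_1)\prod_j B_{jj}$ in Proposition~\ref{gramian_lb}. Second, the strict inequality $0 < \bigl(\prod_j m_{jj} - \det(M)\bigr)/\prod_j m_{jj}$ genuinely requires $C \neq 0$ (equivalently $\rho > 0$): for diagonal $M$ both sides vanish, so the statement carries this implicit nondegeneracy assumption, which you correctly flag rather than gloss over. Third, in the paper's application the matrix is Hermitian (the Gramian blocks $\sum_t J^t (J^t)^{*}$ are complex), not real symmetric; your proof goes through verbatim in that case, since $C$ is then Hermitian with zero diagonal and real eigenvalues, and this is worth stating explicitly if your argument were spliced in as an actual proof.
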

\begin{prop}
	\label{gramian_lb}
	Let $1 - C/T \leq \rho_i(A) \leq 1 + C/T$ and $A$ be a $d \times d$ matrix. Then there exists $\alpha(d)$ depending only on $d$ such that for every $8 d \leq t \leq T$
	\[
	\sigma_{\min}(\Gamma_t(A)) \geq t \alpha(d) 
	\]	
\end{prop}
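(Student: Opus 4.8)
The plan is to bound $\sigma_{\min}(\Gamma_t(A))$ from below by reducing it to the growth of the \emph{diagonal} entries of $\Gamma_t(A)$, and from there to a one-dimensional statement about the power sequence. Throughout I will use that for $A\in\Sc_1$ and $0\le k\le t\le T$ every modulus $\rho_i(A)^k$ stays in the bounded band $[e^{-2C},e^{2C}]$, so no eigen-mode decays or explodes across the window; this is what keeps all constants dependent on $d$ alone. Starting from $\sigma_{\min}(\Gamma_t(A))\ge \det\Gamma_t(A)\big/\prod_{i=1}^{d-1}\sigma_i(\Gamma_t(A))$, I would upper bound the product of the $d-1$ largest singular values by the $(d-1)$-st elementary symmetric function of the diagonal: $\prod_{i=1}^{d-1}\sigma_i(\Gamma_t(A))\le \mathrm{tr}\big(\wedge^{d-1}\Gamma_t(A)\big)=\sum_{|I|=d-1}\det\big((\Gamma_t(A))_{I,I}\big)\le \sum_{|I|=d-1}\prod_{j\in I}(\Gamma_t(A))_{jj}$, the last step by Hadamard's inequality for the PSD principal minors. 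Feeding this and Proposition~\ref{det_lb} into the ratio, the full products of diagonal entries cancel and one is left with
\[
\sigma_{\min}(\Gamma_t(A))\ \ge\ \frac{e^{-d\rho^2/(1+\lambda_{\min})}}{d}\,\min_{1\le j\le d}(\Gamma_t(A))_{jj},
\]
where $\rho,\lambda_{\min}$ are the spectral radius and least eigenvalue of the correlation matrix of $\Gamma_t(A)$.

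This reduces the proposition to two sub-claims: (a) the Ipsen exponent is bounded below by a function of $d$ alone, and (b) $\min_j(\Gamma_t(A))_{jj}\ge\alpha'(d)\,t$. For (b) I note $(\Gamma_t(A))_{jj}=\sum_{k=0}^t\|A^{k\prime}e_j\|_2^2\ge\sum_{k=0}^t (A^k)_{jj}^2$, and that the scalar sequence $\beta_k:=(A^k)_{jj}$ obeys, by Cayley--Hamilton, the degree-$d$ linear recurrence whose characteristic polynomial is that of $A$, with $\beta_0=1$ and all roots in the annulus $1-C/T\le|z|\le 1+C/T$. The same recurrence governs $\langle A^k v,v\rangle$, so both the diagonal route and the direct quadratic-form route reduce to the following one-dimensional lemma: any solution of such a recurrence satisfies, in scale-invariant form, $\sum_{k=0}^t\beta_k^2\ \ge\ \alpha(d)\,t\cdot\tfrac1d\sum_{k=0}^{d-1}\beta_k^2$, which with $\beta_0=1$ gives $\sum_{k=0}^t\beta_k^2\ge\alpha(d)\,t/d$.

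I would prove this scalar lemma by induction on $d$. Given roots $\lambda_1,\dots,\lambda_d$, set $\gamma_k:=\beta_{k+1}-\lambda_d\beta_k$; then $\gamma$ solves the degree-$(d-1)$ recurrence with the remaining roots, again in the annulus, so the inductive hypothesis applies. Since $\gamma_k^2\le 2\beta_{k+1}^2+2|\lambda_d|^2\beta_k^2$ we get $\sum_k\gamma_k^2\le C(d)\sum_k\beta_k^2$, so a linear lower bound on the energy of $\gamma$ transfers to $\beta$. The base case $d=1$ is immediate, a single near-unit mode giving $\sum_{k=0}^t|\lambda|^{2k}\ge (t+1)e^{-2C}$. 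The induction closes by a dichotomy: either the $\lambda_d$-component of $\beta$ is negligible, in which case the complementary degree-$(d-1)$ part carries a constant fraction of the initial energy and the hypothesis applies, or $\beta$ is essentially the pure mode $\beta_0\lambda_d^k$, handled directly as in the base case.

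The step I expect to be the genuine obstacle is the clustered-eigenvalue regime, where several roots are within $O(1/T)$ of one another. There the modal decomposition used in the dichotomy becomes ill-conditioned and the usual equidistribution/Tur\'an estimates degrade with the spectral gap, which is not allowed to enter the final constant. The resolution I would pursue is that nearly coincident, near-unit roots produce modes that are almost constant over the whole window $k\le T$, so $\beta_k$ stays close to $\beta_0$ and the energy is in fact $\approx t$ rather than smaller --- precisely the configuration that is worst for equidistribution is best for the lower bound. The same near-unit-circle control is what I would use to discharge sub-claim (a): the correlation matrix of $\Gamma_t(A)$ cannot be asymptotically singular uniformly over $\Sc_1$, since that would contradict the linear growth just established for every diagonal direction; making this quantitative, so that $\rho$ and $\lambda_{\min}$ yield a constant depending only on $d$, is the remaining delicate bookkeeping.
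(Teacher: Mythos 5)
Your opening reduction is algebraically sound: combining Proposition~\ref{det_lb} with Hadamard's inequality on the principal $(d-1)\times(d-1)$ minors does give $\sigma_{\min}(\Gamma_t(A)) \geq \tfrac{1}{d}\,e^{-d\rho^2/(1+\lambda_{\min})}\min_j (\Gamma_t(A))_{jj}$. The fatal problem is sub-claim (a), where your proposed justification is a non sequitur and in fact circular. Linear growth of the diagonal entries does \emph{not} prevent the correlation matrix of $\Gamma_t(A)$ from degenerating: since one only has $\Gamma_t(A)\succeq I$, the a priori bound is $1+\lambda_{\min}(C)\geq 1/\max_j(\Gamma_t(A))_{jj}$, and for Jordan-type $A$ the diagonal entries grow like $t^{2d-1}$, so nothing you have established rules out $1+\lambda_{\min}(C)$ decaying polynomially in $t$, which makes the Ipsen factor $e^{-d\rho^2/(1+\lambda_{\min})}$ worthless. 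Worse, a lower bound on that factor is a lower bound on $\det(\Gamma_t)/\prod_j(\Gamma_t)_{jj}$, which (given diagonal growth) is essentially equivalent to $\sigma_{\min}(\Gamma_t)\gtrsim t$ --- i.e., to Proposition~\ref{gramian_lb} itself. This is precisely the point where the paper does its real work: it first passes to the Jordan basis and applies Proposition~\ref{det_lb} not to $\Gamma_t(A)$ but to the single-block Gramian $B=\sum_t J^t(J^t)^{*}$, for which it computes $B_{jk}=C(k_1)T^{2k_1-j-k+1}$, so that the normalized entries $B_{jk}/\sqrt{B_{jj}B_{kk}}$ are $T$-independent and the Ipsen factor is a genuine constant depending only on the block size; Schur--Horn majorization plus a determinant contradiction then finishes. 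That structural fact is what breaks the circularity, at the price of the factor $\sigma_{\min}(P)^2/\sigma_{\max}(P)^2$ (so, incidentally, the paper's own constant is not purely $d$-dependent; your route, if it worked, would prove the stronger literal statement).

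The second gap is in the induction for your scalar lemma. In the deflation dichotomy, suppose $\gamma_k=\beta_{k+1}-\lambda_d\beta_k$ has initial energy below a threshold $\epsilon_0$. To run the pure-mode argument you need $\sum_{j<t}|\gamma_j|\ll|\beta_0|$ in the representation $\beta_k=\lambda_d^k\beta_0+\sum_{j<k}\lambda_d^{k-1-j}\gamma_j$. But small initial data for a degree-$(d-1)$ recurrence with roots in the band only gives $|\gamma_k|\leq C(d)k^{d-2}\sqrt{\epsilon_0}$ (companion-matrix powers grow polynomially), hence $\sum_{j<t}|\gamma_j|\leq C(d)t^{d-1}\sqrt{\epsilon_0}$; making this $O(1)$ forces $\epsilon_0\lesssim t^{-2(d-1)}$, and then the inductive branch returns only $\sum_k\beta_k^2\gtrsim t\,\epsilon_0\sim t^{3-2d}$, which is sublinear. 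The same defect appears in the modal version of your dichotomy: with clustered roots the mode coefficients blow up, and ``modes almost constant'' does not imply ``$\beta_k$ stays near $\beta_0$'' (e.g.\ $\beta_k=\lambda_1^k+M(\lambda_1^k-\lambda_2^k)$ with $M(\lambda_1-\lambda_2)$ of order one). So the two branches cannot be balanced to yield $\alpha(d)\,t$; a correct proof of the scalar lemma would need a Remez/Tur\'an-type inequality exploiting the near-polynomial behavior of $\beta$, which is a different argument --- and one the paper avoids entirely by computing the Jordan-block Gramian explicitly.
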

\begin{proof}
	Since $A = P^{-1} \Lambda P$ where $\Lambda$ is the Jordan matrix. Since $\Lambda$ can be complex we will assume that adjoint instead of transpose. This gives
	\begin{align*}
	\Gamma_T(A) &= I + \sum_{t=1}^T A^t (A^{t})^{\prime} \\
		&= I + P^{-1}\sum_{t=1}^T \Lambda^tPP^{\prime} (\Lambda^t)^{*} P^{-1 \prime} \\
		&\succeq I + \sigma_{\min}(P)^2P^{-1}\sum_{t=1}^T \Lambda^t(\Lambda^t)^{*} P^{-1 \prime} 
	\end{align*}
Then this implies that 
\begin{align*}
\sigma_{\min}(	\Gamma_T(A)) &\geq 1 +\sigma_{\min}(P)^2 \sigma_{\min}(P^{-1}\sum_{t=1}^T \Lambda^t(\Lambda^t)^{\prime} P^{-1 \prime}) \\
&\geq 1 + \sigma_{\min}(P)^2 \sigma_{\min}(P^{-1})^2\sigma_{\min}(\sum_{t=1}^T \Lambda^t(\Lambda^t)^{\prime} ) \\
&\geq 1 + \frac{\sigma_{\min}(P)^2}{\sigma_{\max}(P)^2}\sigma_{\min}(\sum_{t=1}^T \Lambda^t(\Lambda^t)^{\prime} )
\end{align*}

Now 
\begin{align*}
\sum_{t=0}^T \Lambda^t(\Lambda^t)^{*} &= \begin{bmatrix}
\sum_{t=0}^T J^{t}_{k_1}(\lambda_1)(J^{t}_{k_1}(\lambda_1))^{*} & 0 & \hdots & 0  \\
0 & \sum_{t=1}^T J^{t}_{k_2}(\lambda_2)(J^{t}_{k_2}(\lambda_2))^{*} & 0 & \hdots   \\
\vdots & \vdots & \ddots & \ddots \\
0 & \hdots & 0 & \sum_{t=1}^T J^{t}_{k_{l}}(\lambda_l) (J^{t}_{k_l}(\lambda_l))^{*} 
\end{bmatrix}
\end{align*}

Since $\Lambda$	is block diagonal we only need to worry about the least singular value corresponding to some block. Let this block be the one corresponding to $J_{k_1}(\lambda_1)$, \textit{i.e.},
\begin{equation}
\sigma_{\min}(\sum_{t=0}^T \Lambda^t(\Lambda^t)^{*} ) =\sigma_{\min}(\sum_{t=0}^T J^{t}_{k_1}(\lambda_1)(J^{t}_{k_1}(\lambda_1))^{*}) \label{bnd_1}
\end{equation}

Define $B =  \sum_{t=0}^T J^{t}_{k_1}(\lambda_1)(J^{t}_{k_1}(\lambda_1))^{*}$. Note that $J_{k_1}(\lambda_1) = (\lambda_1 I + N)$ where $N$ is the nilpotent matrix that is all ones on the first off--diagonal and $N^{k_1} = 0$. Then 
\begin{align*}
(\lambda_1 I + N)^t &= \sum_{j=0}^t {t \choose j} \lambda_1^{t-j}N^{j} \\
(\lambda_1 I + N)^t((\lambda_1 I + N)^t)^{*} &= \Big(\sum_{j=0}^t {t \choose j} \lambda_1^{t-j}N^{j}\Big)\Big(\sum_{j=0}^t {t \choose j} (\lambda_1^{*})^{t-j}N^{j \prime}\Big) \\
&= \sum_{j=0}^t {t \choose j}^2 |\lambda_1|^{2(t-j)} \underbrace{N^j (N^j)^{\prime}}_{\text{Diagonal terms}} + \sum_{j \neq k}^{j=t, k=t} {t \choose k}{t \choose j} \lambda_1^j (\lambda_1^{*})^{k} N^j (N^k)^{\prime} \\
&= \sum_{j=0}^t {t \choose j}^2 |\lambda_1|^{2(t-j)} \underbrace{N^j (N^j)^{\prime}}_{\text{Diagonal terms}} + \sum_{j > k}^{j=t, k=t} {t \choose k}{t \choose j} \lambda_1^j (\lambda_1^{*})^{k} N^j (N^k)^{\prime} \\
&+ \sum_{j< k}^{j=t, k=t} {t \choose k}{t \choose j} \lambda_1^j (\lambda_1^{*})^{k} N^j (N^k)^{\prime} \\
&= \sum_{j=0}^t {t \choose j}^2 |\lambda_1|^{2(t-j)} \underbrace{N^j (N^j)^{\prime}}_{\text{Diagonal terms}} + \sum_{j > k}^{j=t, k=t} {t \choose k}{t \choose j} \underbrace{|\lambda_1|^{2k} \lambda_1^{j-k} N^{j-k} N^k(N^k)^{\prime}}_{\text{On $(j-k)$ upper off--diagonal}} \\
&+ \sum_{j< k}^{j=t, k=t} {t \choose k}{t \choose j} \underbrace{|\lambda_1|^{2j} (\lambda_1^{*})^{k-j} N^j(N^{j})^{\prime} (N^{j-k})^{\prime}}_{\text{On $(k-j)$ lower off--diagonal}}
\end{align*}
Let $\lambda_1 = r e^{i\theta}$, then similar to~\cite{erxiong1994691}, there is $D = \text{Diag}(1, e^{-i\theta}, e^{-2i\theta}, \ldots, e^{-i(k_1-1)\theta})$ such that $D (\lambda_1 I + N)^t((\lambda_1 I + N)^t)^{*} D^{*}$ is a real matrix. Observe that any term on $(j-k)$ upper off--diagonal of $(\lambda_1 I + N)^t((\lambda_1 I + N)^t)^{*}$ is of the form $r_0 e^{i(j-k)\theta}$. In the product $D (\lambda_1 I + N)^t((\lambda_1 I + N)^t)^{*} D^{*}$ any term on the $(j-k)$ upper off diagonal term now looks like $e^{-ij\theta + ik\theta} r_0 e^{i(j-k)\theta} = r_0$, which is real. Then we have 
\begin{align}
D (\lambda_1 I + N)^t((\lambda_1 I + N)^t)^{*} D^{*} &= \sum_{j=0}^t {t \choose j}^2 |\lambda_1|^{2(t-j)} \underbrace{N^j (N^j)^{\prime}}_{\text{Diagonal terms}} + \sum_{j > k}^{j=t, k=t} {t \choose k}{t \choose j} \underbrace{|\lambda_1|^{2k} |\lambda_1|^{j-k} N^{j-k} N^k(N^k)^{\prime}}_{\text{On $(j-k)$ upper off--diagonal}} \nonumber\\
&+ \sum_{j< k}^{j=t, k=t} {t \choose k}{t \choose j} \underbrace{|\lambda_1|^{2j} |\lambda_1|^{k-j} N^j(N^{j})^{\prime} (N^{k-j})^{\prime}}_{\text{On $(k-j)$ lower off--diagonal}} \label{real}
\end{align}
Since $D$ is unitary and $D (\lambda_1 I + N)^t((\lambda_1 I + N)^t)^{*} D^{*} =(|\lambda_1| I + N)^t((|\lambda_1| I + N)^t)^{\prime} $, we can simply work with the case when $\lambda_1 > 0$ and real, as the singular values remain invariant under unitary transformations. Now we show the growth of $ij^{th}$ term of the product $D (\lambda_1 I + N)^t((\lambda_1 I + N)^t)^{*} D^{*})$,
Define $B=\sum_{t=1}^T (|\lambda_1| I + N)^t((|\lambda_1| I + N)^t)^{\prime}$
\begin{align}
B_{ll} &=\sum_{t=1}^T [(\lambda_1 I + N)^t((\lambda_1 I + N)^t)^{*}]_{ll} \\
&= \sum_{t=1}^T \sum_{j=0}^{k_1-l} {t \choose j}^2 |\lambda_1|^{2(t-j)} \label{bll}
\end{align}
Since $1-C/T \leq |\lambda_1| \leq 1+C/T$, then for every $t \leq T$ we have 
$$e^{-C} \leq |\lambda_1|^t \leq e^{C}$$
Then 
\begin{align}
B_{ll} &= \sum_{t=1}^T \sum_{j=0}^{k_1-l} {t \choose j}^2 |\lambda_1|^{2(t-j)} \nonumber\\
&\geq e^{-2C} \sum_{t=1}^T \sum_{j=0}^{k_1-l} {t \choose j}^2 \nonumber\\
& \geq e^{-2C} \sum_{t=T/2}^T \sum_{j=0}^{k_1-l} {t \choose j}^2 \geq e^{-2C} \sum_{t=T/2}^T c_{k_1} \frac{t^{2k_1-2l+2} - 1}{t^2 - 1}  \geq C(k_1) T^{2k_1 - 2l+1} \label{lb}
\end{align}
An upper bound can be achieved in an equivalent fashion. 
\begin{align}
B_{ll} &= \sum_{t=1}^T \sum_{j=0}^{k_1-l} {t \choose j}^2 |\lambda_1|^{2(t-j)} \nonumber\\
& \leq e^{2C} T \sum_{j=0}^{k_1-l} T^{2j} \leq C(k_1) T^{2k_1 - 2l + 1} \label{ub1}
\end{align}
Similarly, for any $B_{k,k+l} $ we have
\begin{align}
B_{k, k+l} &=\sum_{t=1}^T \sum_{j=0}^{k_1-k - l} {t \choose j}{t \choose j+l} |\lambda_1|^{2j} |\lambda_1|^{l}  \\
&\geq \sum_{t=1}^T e^{-2C} \sum_{t=T/2}^T  \sum_{j=0}^{k_1-k - l} {t \choose j}{t \choose j+l}   \\
&\geq e^{-2C} \frac{T}{2}\sum_{j=0}^{k_1-k - l} {T/2 \choose j}{T/2 \choose j+l}  \\
&\geq C(k_1) T^{2k_1 - 2k -l +1}
\end{align}
and by a similar argument as before we get $B_{jk} = C(k_1)T^{2k_1-j-k +1}$. For brevity we use the same $C(k_1)$ to indicate different functions of $k_1$ as we are interested only in the growth with respect to $T$. To summarize
\begin{align}
B_{jk} &= C(k_1)T^{2k_1 - j - k +1} \label{jordan_value}
\end{align} 
whenever $T \geq 8d$. Recall Proposition~\ref{det_lb}, let the $M$ there be equal to $B$ then since 
\[
C_{ij} = C(k_1)\frac{B_{ij}}{\sqrt{B_{ii} B_{jj}}} = C(k_1)\frac{T^{2k_1 - j -k +1}}{\sqrt{T^{4k_1 - 2j - 2k + 2}}}
\]
it turns out that $C_{ij}$ is independent of $T$ and consequently $\lambda_{min}(C), \rho$ are independent of $T$ and depend only on $k_1$: the Jordan block size. Then $\prod_{j=1}^{k_1} B_{jj} \geq \text{det}(B) \geq \prod_{j=1}^{k_1} B_{jj} e^{-\frac{d\rho^2}{1 + \lambda_{\min}}} = C(k_1) \prod_{j=1}^{k_1} B_{jj}$. This means that $\text{det}(B) = C(k_1) \prod_{j=1}^{k_1} B_{jj}$ for some function $C(k_1)$ depending only on $k_1$. Further using the values for $B_{jj}$ we get
\begin{equation}
\label{det}
\text{det}(B) = C(k_1) \prod_{j=1}^{k_1} B_{jj} = \prod_{j=1}^{k_1} C(k_1) T^{2k_1 - 2l +1} = C(k_1) T^{k_1^2}
\end{equation}
Next we use Schur-Horn theorem, \textit{i.e.}, let $\sigma_i(B)$ be the ordered singular values of $B$ where $\sigma_i(B) \geq \sigma_{i+1}(B)$. Then $\sigma_i(B)$ majorizes the diagonal of $B$, \textit{i.e.}, for any $k \leq k_1$
\[
\sum_{i=1}^k \sigma_i(B) \geq \sum_{i=1}^{k} B_{ii}
\] 
Observe that $B_{ii} \leq B_{jj}$ when $i \leq j$. Then from Eq.~\eqref{jordan_value} it implies that
\begin{align*}
B_{k_1 k_1}=C_1(k_1)T &\geq \sigma_{k_1}(B) \\
\sum_{j=k_1-1}^{k_1} B_{jj} &= C_{2}(k_1)T^{3} + C_1(k_1)T \geq \sigma_{k_1 - 1}(A) + \sigma_{k_1}(A)
\end{align*}

Since $k_1 \geq 1$ it can be checked that for $T \geq T_1 =2k_1\sqrt{\frac{C_1(k_1)}{C_2(k_1)}}$ we have $\sigma_{k_1-1}(A) \leq {(1+(2k_1)^{-2})C_2(k_1)T^3} \leq {(1+k_1^{-1})C_2(k_1)T^3}$ as for every $T \geq T_1$ we have $C_2(k_1)T^3 \geq 4k_1^2C_1(k_1)T$. Again to upper bound $\sigma_{k_1-2}(A)$ we will use a similar argument
\begin{align*}
\sum_{j=k_1-2}^{k_1} B_{jj} &= C_3(k_1)T^{5} + C_2(k_1)T^{3} + C_1(k_1)T  \geq \sigma_{k_1-2}(A) +\sigma_{k_1-1}(A) + \sigma_{k_1}(A)
\end{align*}
and show that whenever 
\[
T \geq \max{\Big(T_1, 2k_1\sqrt{\frac{C_2(k_1)}{C_3(k_1)}}\Big)}
\]
we get $\sigma_{k_1-2}(A) \leq (1+(2k_1)^{-2} + (2k_1)^{-4}){C_3(k_1)T^5} \leq (1+k_1^{-1}){C_3(k_1)T^5}$ because $T \geq T_1$ ensures $C_2(k_1)T^3 \geq 4k_1^2C_1(k_1)T$ and $T \geq T_2 = 2k_1\sqrt{\frac{C_2(k_1)}{C_3(k_1)}}$ ensures $C_3(k_1)T^5 \geq 4k_1^2 C_2(k_1)T^3$. The $C_i(k_1)$ are not important, the goal is to show that for a sufficiently large $T$ we have an upper bound on each singular values (roughly) corresponding to the diagonal element. Similarly we can ensure for every $i$ we have $\sigma_i(A) \leq (1+k_1^{-1})C_{k_1 -i+1}(k_1)T^{2k_1 - 2i + 1}$, whenever 
\[
T > T_{i} = \max{\Big(T_{i-1}, 2k_1\sqrt{\frac{C_{i}(k_1)}{C_{i+1}(k_1)}}\Big)}
\]
Recall Eq.~\eqref{det} where $\text{det}(B) = C(k_1) T^{k_1^2}$. Assume that $\sigma_{k_1}(B) < \frac{C(k_1) T}{e \prod_{i=1}^d C_{i+1}(k_1)}$. Then whenever $T \geq \max{\Big(8d, \sup_{i}2k_1\sqrt{\frac{C_{i}(k_1)}{C_{i+1}(k_1)}}\Big)}$
\begin{align*}
\text{det}(B) &= C(k_1) T^{k_1^2} \\
\prod_{i=1}^{k_1}\sigma_i &= C(k_1) T^{k_1^2} \\
\sigma_{k_1}(B)(1+k_1^{-1})^{k_1-1} T^{k_1^2-1}\prod_{i=2}^{k_1}C_{i+1}  &\geq C(k_1) T^{k_1^2} \\
\sigma_{k_1}(B) &\geq \frac{C_{k_1}T}{(1+k_1^{-1})^{k_1-1}\prod_{i=2}^{k_1}C_{i+1}} \\
&\geq \frac{C(k_1) T}{e \prod_{i=1}^{k_1} C_{i+1}(k_1)}
\end{align*}

which is a contradiction. This means that $\sigma_{k_i}(B) \geq \frac{C(k_1) T}{e \prod_{i=1}^{k_1} C_{i+1}(k_1)}$. This implies 
\[
\sigma_{\min}(\Gamma_T(A)) \geq 1 + \frac{\sigma_{\min}(P)^2}{\sigma_{\max}(P)^2} C(k_1)T
\]
for some function $C(k_1)$ that depends only on $k_1$.
\end{proof}	
It is possible that $\alpha(d)$ might be exponentially small in $d$, however for many cases such as orthogonal matrices or diagonal matrices $\alpha(A)=1$ [As shown in~\cite{simchowitz2018learning}]. We are not interested in finding the best bound $\alpha(d)$ rather show that the bound of Proposition~\ref{gramian_lb} exists and assume that such a bound is known. 

\begin{prop}
	\label{gramian_ratio}
Let $t_1/t_2 = \beta > 1$ and $A$ be a $d \times d$ matrix. Then 
\[
\lambda_1(\Gamma_{t_1}(A)\Gamma_{t_2}^{-1}(A)) \leq C(d, \beta)
\]
where $C(d, \beta)$ is a polynomial in $\beta$ of degree at most $d^2$ whenever $t_i \geq 8d$.
\end{prop}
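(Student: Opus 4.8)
The plan is to bound the largest generalized eigenvalue by the full determinant ratio, and then to estimate the determinant of the Gramian directly; the virtue of this route is that the similarity transform to Jordan form cancels out of the ratio.

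First I would use $\beta>1$, which forces $t_1>t_2$, to write $\Gamma_{t_1}(A) = \Gamma_{t_2}(A) + \sum_{k=t_2+1}^{t_1}A^kA^{k\prime} \succeq \Gamma_{t_2}(A) \succ 0$, where positive definiteness comes from the $k=0$ term $I$. Hence $\Gamma_{t_2}(A)^{-1/2}\Gamma_{t_1}(A)\Gamma_{t_2}(A)^{-1/2} \succeq I$, and since this matrix is similar to $\Gamma_{t_1}(A)\Gamma_{t_2}(A)^{-1}$, every eigenvalue $\lambda_i$ of $\Gamma_{t_1}(A)\Gamma_{t_2}(A)^{-1}$ is real and satisfies $\lambda_i \ge 1$. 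Because all factors are at least one,
\[
\lambda_1\big(\Gamma_{t_1}(A)\Gamma_{t_2}(A)^{-1}\big) \le \prod_{i=1}^d \lambda_i\big(\Gamma_{t_1}(A)\Gamma_{t_2}(A)^{-1}\big) = \frac{\det \Gamma_{t_1}(A)}{\det \Gamma_{t_2}(A)},
\]
which reduces the problem to controlling a ratio of determinants.

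Next I would pass to the (real) Jordan decomposition $A = P^{-1}\Lambda P$, so that $G := PP^{\prime}$ is a genuine symmetric positive definite matrix. Writing $\Gamma_t(A) = P^{-1}\big(\sum_{k=0}^t \Lambda^k G (\Lambda^k)^{\prime}\big)(P^{-1})^{\prime}$ gives $\det\Gamma_t(A) = (\det P)^{-2}\det M_t$ with $M_t := \sum_{k=0}^t \Lambda^k G(\Lambda^k)^{\prime}$, so that in the ratio the factor $(\det P)^{-2}$ cancels and $\det\Gamma_{t_1}(A)/\det\Gamma_{t_2}(A) = \det M_{t_1}/\det M_{t_2}$. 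Since $\Lambda$ is block diagonal, the sandwiching $\sigma_{\min}(G)\,I \preceq G \preceq \sigma_{\max}(G)\,I$ shows that $\det M_t$ agrees with $\det N_t$ up to a factor in $[\sigma_{\min}(G)^d,\sigma_{\max}(G)^d]$, where $N_t := \sum_{k=0}^t \Lambda^k (\Lambda^k)^{\prime}$ is block diagonal. Within each Jordan block the unitary $D$-rotation from Proposition~\ref{gramian_lb} turns the summand into a real positive matrix, so $N_t$ decouples into the blocks analyzed there.

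Finally I would estimate $\det N_t$ block by block, reusing Proposition~\ref{gramian_lb}. For a marginally stable eigenvalue the diagonal entries of the block of size $k_i$ grow as $\Theta(t^{2k_i - 2l + 1})$ for $t \ge 8d$, and Proposition~\ref{det_lb} pins the block determinant to $\Theta\big(\prod_{l=1}^{k_i} t^{2k_i-2l+1}\big) = \Theta(t^{k_i^2})$ with constants depending only on $k_i$. Multiplying across blocks yields $\det N_t = \Theta(t^{D})$ with $D = \sum_i k_i^2 \le d^2$, the maximum $d^2$ being attained by a single Jordan block, so that
\[
\lambda_1\big(\Gamma_{t_1}(A)\Gamma_{t_2}(A)^{-1}\big) \le \frac{\det M_{t_1}}{\det M_{t_2}} \le C(d)\,\Big(\frac{t_1}{t_2}\Big)^{D} = C(d)\,\beta^{D},
\]
a polynomial in $\beta$ of degree $D \le d^2$. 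The hard part is exactly this last step: I need the estimate $\det N_t = \Theta(t^D)$ to hold with matching upper and lower constants uniformly for $t\ge 8d$, so that the $\beta$-free prefactor stays bounded (note that the $G$-dependent leading constant of $\det M_t$ cancels in the ratio, which is what keeps the degree-$d^2$ dependence clean), together with the combinatorial bookkeeping $\sum_i k_i^2 \le d^2$. Everything preceding it is a one-line consequence of $\Gamma_{t_1}(A)\succeq\Gamma_{t_2}(A)$.
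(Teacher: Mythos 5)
Your proof is correct, and while it ends at the same place as the paper's, the reduction to that endpoint is genuinely different. The paper chains $\lambda_1(\Gamma_{t_1}\Gamma_{t_2}^{-1}) \leq \operatorname{tr}(\Gamma_{t_1}\Gamma_{t_2}^{-1}) \leq d\,\sup_{x\neq 0} \frac{x^{\prime}\Gamma_{t_1}x}{x^{\prime}\Gamma_{t_2}x}$, passes to the $\Lambda$-Gramians at the cost of $\sigma_{\max}(P)^2/\sigma_{\min}(P)^2$, and then invokes Lemma 12 of \cite{abbasi2011improved} to convert the Rayleigh ratio into a determinant ratio; you instead observe that $\Gamma_{t_1}(A)\succeq\Gamma_{t_2}(A)\succ 0$ makes every eigenvalue of $\Gamma_{t_1}(A)\Gamma_{t_2}^{-1}(A)$ real and at least $1$, so that $\lambda_1 \leq \prod_i \lambda_i = \det\Gamma_{t_1}(A)/\det\Gamma_{t_2}(A)$, and you handle the similarity transform by cancelling $(\det P)^{-2}$ in the ratio and sandwiching $G = PP^{\prime}$ between $\sigma_{\min}(G)I$ and $\sigma_{\max}(G)I$. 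From there both arguments rest on the identical key fact, Eq.~\eqref{det} inside Proposition~\ref{gramian_lb} (obtained via Proposition~\ref{det_lb}): each Jordan block's Gramian determinant is $\Theta(t^{k_i^2})$ with constants depending only on the block size once $t \geq 8d$, whence the determinant ratio is at most a constant times $\beta^{\sum_i k_i^2}$ with $\sum_i k_i^2 \leq d^2$. What each route buys: yours is more self-contained (no trace step, no factor of $d$, no external lemma — your eigenvalue-product bound is in effect an inline proof of the cited lemma), while the paper's Rayleigh-quotient treatment of $P$ gives a milder $A$-dependent prefactor ($d\,\kappa(P)^2$ versus your $\kappa(P)^{2d}$); both prefactors are $\beta$-free, so the statement is unaffected either way. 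You also correctly isolate the one real delicacy, namely that $\det N_t = \Theta(t^{\sum_i k_i^2})$ needs matching upper and lower constants uniformly over $t \geq 8d$; be aware that this estimate is established in Proposition~\ref{gramian_lb} only under its marginal-stability hypothesis $1 - C/T \leq \rho_i(A) \leq 1 + C/T$, a restriction the paper's own proof inherits silently and which is consistent with the only place the proposition is used (the $A_{ms}$ block).
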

\begin{proof}
	Since $\lambda_1(\Gamma_{t_1}(A)\Gamma_{t_2}^{-1}(A))  \geq 0$
	\begin{align*}
	\lambda_1(\Gamma_{t_1}(A)\Gamma_{t_2}^{-1}(A)) &\leq \text{tr}(\Gamma_{t_1}(A)\Gamma_{t_2}^{-1}(A)) \\
	&\leq  \text{tr}(\Gamma_{t_2}^{-1/2}(A)\Gamma_{t_1}(A)\Gamma_{t_2}^{-1/2}(A)) \\
	&\leq d \sigma_1(\Gamma_{t_2}^{-1/2}(A)\Gamma_{t_1}(A)\Gamma_{t_2}^{-1/2}(A)) \\
	&\leq d\sup_{||x|| \neq  0}\frac{x^{\prime} \Gamma_{t_1}(A) x}{x^{\prime}\Gamma_{t_2}(A) x}
	\end{align*}

Now 
\begin{align*}
\Gamma_{t_i}(A) &= P^{-1}\sum_{t=0}^{t_i} \Lambda^{t}PP^{\prime}(\Lambda^{t})^{*}P^{-1 \prime} \\
&\preceq  \sigma_{\max}(P)^2 P^{-1}\sum_{t=0}^{t_i} \Lambda^{t}(\Lambda^{t})^{*}P^{-1 \prime} \\
\Gamma_{t_i}(A) &\succeq  \sigma_{\min}(P)^2 P^{-1}\sum_{t=0}^{t_i} \Lambda^{t}(\Lambda^{t})^{*}P^{-1 \prime}
\end{align*}
Then this implies 
\[
\sup_{||x|| \neq  0}\frac{x^{\prime} \Gamma_{t_1}(A) x}{x^{\prime}\Gamma_{t_2}(A) x} \leq \frac{\sigma_{\max}(P)^2}{\sigma_{\min}(P)^2} \sup_{||x|| \neq  0}\frac{x^{\prime} \sum_{t=0}^{t_1} \Lambda^{t}(\Lambda^{t})^{*} x}{x^{\prime}\sum_{t=0}^{t_2} \Lambda^{t}(\Lambda^{t})^{*} x}
\]
Then from Lemma 12 in~\cite{abbasi2011improved} we get that
\[
\sup_{||x|| \neq  0}\frac{x^{\prime} \sum_{t=0}^{t_1} \Lambda^{t}(\Lambda^{t})^{*} x}{x^{\prime}\sum_{t=0}^{t_2} \Lambda^{t}(\Lambda^{t})^{*} x} \leq \frac{\text{det}(\sum_{t=0}^{t_1} \Lambda^{t}(\Lambda^{t})^{*})}{\text{det}(\sum_{t=0}^{t_2} \Lambda^{t}(\Lambda^{t})^{*})}
\]
Then 
\begin{align*}
\frac{\text{det}(\sum_{t=0}^{t_2} \Lambda^{t}(\Lambda^{t})^{*})}{\text{det}(\sum_{t=0}^{t_1} \Lambda^{t}(\Lambda^{t})^{*})} &\leq  \frac{\text{det}(\prod_{i=1}^l (\sum_{t=0}^{t_2} J_{k_i}(\lambda_i)^{t}(J_{k_i}(\lambda_i)^{t})^{*}))}{\text{det}(\prod_{i=1}^l(\sum_{t=0}^{t_1} J_{k_i}(\lambda_i)^{t}(J_{k_i}(\lambda_i)^{t})^{*}))}
\end{align*}
Here $l$ are the number of Jordan blocks of $A$. Then our assertion follows from Eq.~\eqref{det} which implies that the determinant of $\sum_{t=0}^{t_2} J_{k_i}(\lambda_i)^{t}(J_{k_i}(\lambda_i)^{t})^{*} $ is equal to the product of the diagonal elements (times a factor that depends only on Jordan block size), \textit{i.e.}, $C(k_i)t_2^{k_i^2}$. As a result the ratio is given by
\[
\frac{\text{det}(\prod_{i=1}^l (\sum_{t=0}^{t_2} J_{k_i}(\lambda_i)^{t}(J_{k_i}(\lambda_i)^{t})^{*}))}{\text{det}(\prod_{i=1}^l(\sum_{t=0}^{t_1} J_{k_i}(\lambda_i)^{t}(J_{k_i}(\lambda_i)^{t})^{*}))} = \prod_{i=1}^l \beta^{k_i^2}
\] 
whenever $t_2, t_1 \geq 8d$. Summarizing we get 
\[
\sup_{||x|| \neq  0}\frac{x^{\prime} \Gamma_{t_1}(A) x}{x^{\prime}\Gamma_{t_2}(A) x} \leq \frac{\sigma_{\max}(P)^2}{\sigma_{\min}(P)^2} \prod_{i=1}^l \beta^{k_i^2}
\]
\end{proof}
	\section{Probabilistic Inequailities}
\label{prob_ineq}
\begin{prop}[\cite{vershynin2010introduction}]
	\label{eps_net}
	Let $M$ be a random matrix. Then we have for any $\epsilon < 1$ and any $w \in \Sc^{d-1}$ that 
	\[
	\Pb(||M|| > z) \leq (1 + 2/\epsilon)^d \Pb(||Mw|| > (1-\epsilon)z)
	\]
\end{prop}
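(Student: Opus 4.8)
The plan is to prove this through the standard covering (i.e. $\epsilon$-net) argument for the operator norm. The starting point is the identity $||M|| = \sup_{u \in \Sc^{d-1}} ||Mu||$, so the goal is to replace this supremum over the entire sphere by a maximum over a finite subset, at the cost of a small multiplicative factor, and then apply a union bound.

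First I would fix an $\epsilon$-net $\Nc \subset \Sc^{d-1}$ of the unit sphere, that is, a finite set such that every $u \in \Sc^{d-1}$ has some $w \in \Nc$ with $||u - w||_2 \leq \epsilon$. A standard volumetric argument---take a maximal $\epsilon$-separated subset of the sphere and compare the volumes of the disjoint balls of radius $\epsilon/2$ centered at its points, all contained in the ball of radius $1 + \epsilon/2$---yields such a net with cardinality $|\Nc| \leq (1 + 2/\epsilon)^d$. This covering-number estimate is the one genuinely structural ingredient; since it is a textbook fact I expect it to be routine rather than a serious obstacle, and the real content of the proposition is simply assembling the three steps correctly.

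Second, I would establish the deterministic approximation inequality $||M|| \leq \frac{1}{1-\epsilon}\max_{w \in \Nc}||Mw||$. For any $u \in \Sc^{d-1}$ choose a nearest net point $w$; then by the triangle inequality and submultiplicativity $||Mu|| \leq ||Mw|| + ||M(u-w)||_2 \leq ||Mw|| + \epsilon||M||$. Taking the supremum over $u \in \Sc^{d-1}$ on the left gives $||M|| \leq \max_{w \in \Nc}||Mw|| + \epsilon||M||$, and since $\epsilon < 1$ we may rearrange to obtain the claimed bound.

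Finally, I would combine this with a union bound. On the event $\{||M|| > z\}$, the inequality above forces $\max_{w \in \Nc}||Mw|| > (1-\epsilon)z$, hence
\[
\Pb(||M|| > z) \leq \Pb\Big(\max_{w \in \Nc}||Mw|| > (1-\epsilon)z\Big) \leq \sum_{w \in \Nc}\Pb\big(||Mw|| > (1-\epsilon)z\big).
\]
Bounding each summand by its maximum over $w \in \Nc$ and using $|\Nc| \leq (1 + 2/\epsilon)^d$ yields the stated inequality. The phrasing ``for any $w$'' in the statement is to be read in this light: when the law of $||Mw||$ does not depend on the particular $w \in \Sc^{d-1}$ (for instance under rotational invariance of $M$), the maximum over the net is attained at any fixed direction, so the right-hand side may be written with an arbitrary $w$.
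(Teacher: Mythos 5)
Your proposal is correct and follows essentially the same route as the paper, which gives no proof of its own but defers to \cite{vershynin2010introduction}, where exactly your three steps appear (the volumetric covering bound $(1+2/\epsilon)^d$, the deterministic estimate $||M|| \leq (1-\epsilon)^{-1}\max_{w \in \Nc}||Mw||$, and the union bound over the net). Your closing caveat is also the right reading of the loosely stated ``for any $w$'': the inequality as written requires the tail of $||Mw||$ to be uniform over directions, which is precisely how the paper invokes it in Proposition~\ref{selfnorm_bnd_proof}.
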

The proof of the Proposition can be found, for instance, in \cite{vershynin2010introduction}.

Proposition~\ref{eps_net} helps us in using the tools developed in de la Pena et. al. and~\cite{abbasi2011improved} for self--normalized martingales. We will define $\tilde{S}_t = \sum_{\tau=0}^{t-1} X_{\tau} \tilde{\eta}_{\tau+1}$ where $\tilde{\eta}_t=w^T \eta_t$ is standard normal when $w$ is a unit vector. Specifically, we use Lemma 9 of~\cite{abbasi2011improved} which we state here for convenience:

\begin{thm}[Theorem 1 in~\cite{abbasi2011improved}]
	\label{selfnorm_main}
	Let $\{\bcF_t\}_{t=0}^{\infty}$ be a filtration. Let $\{\eta_{t}\}_{t=1}^{\infty}$ be a real valued stochastic process such that $\eta_t$ is $\bcF_t$ measurable and $\eta_t$ is conditionally $R$-sub-Gaussian for some $R > 0$., \textit{i.e.}, 
	\[
	\forall \lambda \in \Rb \hspace{2mm} \Ex[e^{\lambda \eta_t} | \bcF_{t-1}] \leq e^{\frac{\lambda^2 R^2}{2}} 
	\]
	
	Let $\{X_t\}_{t=1}^{\infty}$ be an $\Rb^d$--valued stochastic process such that $X_t$ is $\bcF_{t}$ measurable. Assume that $V$ is a $d \times d$ positive definite matrix. For any $t \geq 0$ define 
	\[
	\bar{V}_t = V + \sum_{s=1}^t X_s X_s^{\prime} \hspace{2mm} S_t = \sum_{s=1}^t \eta_{s+1} X_s
	\]
	Then for any $\delta > 0$ with probability at least $1-\delta$ for all $t \geq 0$	
	\[
	||{S}_{t}||^2_{\bar{V}^{-1}_{t}} \leq 2 R^2 \log{\Bigg(\dfrac{\text{det}(\bar{V}_{t})^{1/2} \text{det}(V)^{-1/2}}{\delta}\Bigg)}
	\]
\end{thm}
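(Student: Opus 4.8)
The plan is to prove this via the \emph{method of mixtures} (pseudo-maximization), following \cite{pena2008self,abbasi2011improved}. The starting point is a family of scalar exponential supermartingales indexed by a direction $\lambda \in \Rb^d$. Writing $\Sigma_t = \sum_{s=1}^t X_s X_s^{\prime}$, I would set
\[
M_t^{\lambda} = \exp\Big(\lambda^{\prime} S_t - \tfrac{R^2}{2}\lambda^{\prime}\Sigma_t\lambda\Big) = \prod_{s=1}^t \exp\Big(\lambda^{\prime}X_s\,\eta_{s+1} - \tfrac{R^2}{2}(\lambda^{\prime}X_s)^2\Big).
\]
The first step is to check the supermartingale property. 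Since $X_s$ is $\bcF_s$-measurable and $\eta_{s+1}$ is conditionally $R$-subgaussian given $\bcF_s$, applying the subgaussian bound with scalar parameter $\lambda^{\prime}X_s$ gives $\Ex[\exp(\lambda^{\prime}X_s\eta_{s+1} - \tfrac{R^2}{2}(\lambda^{\prime}X_s)^2) \mid \bcF_s] \leq 1$. Hence each $M_t^{\lambda}$ is a nonnegative supermartingale (with respect to the shifted filtration $\bcG_t = \bcF_{t+1}$) with $\Ex[M_t^{\lambda}] \leq 1$ and $M_0^{\lambda}=1$.

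Next I would integrate out the direction $\lambda$ against a centered Gaussian prior with covariance $(R^2 V)^{-1}$, defining $M_t = \int_{\Rb^d} M_t^{\lambda}\, dh(\lambda)$. The inner integral is Gaussian: completing the square in the exponent $\lambda^{\prime}S_t - \tfrac{R^2}{2}\lambda^{\prime}(\Sigma_t + V)\lambda$ about $\mu = R^{-2}\bar{V}_t^{-1}S_t$ and evaluating the resulting normalizing constant yields the closed form
\[
M_t = \Big(\frac{\det(V)}{\det(\bar{V}_t)}\Big)^{1/2}\exp\Big(\frac{1}{2R^2}\norm{S_t}_{\bar{V}_t^{-1}}^2\Big),
\]
with $\bar{V}_t = V + \Sigma_t$. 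By Tonelli (using nonnegativity) together with the fiber-wise supermartingale property, $M_t$ is again a nonnegative supermartingale with $\Ex[M_t] \leq \int dh = 1$ and $M_0 = 1$.

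Finally I would invoke the maximal inequality for nonnegative supermartingales (Ville's inequality): $\Pb(\sup_{t\geq 0} M_t \geq 1/\delta) \leq \delta\,\Ex[M_0] \leq \delta$. On the complementary event, which has probability at least $1-\delta$, substituting the closed form for $M_t$, taking logarithms, and rearranging gives simultaneously for all $t \geq 0$
\[
\norm{S_t}_{\bar{V}_t^{-1}}^2 \leq 2R^2\log\Big(\frac{\det(\bar{V}_t)^{1/2}\det(V)^{-1/2}}{\delta}\Big),
\]
which is exactly the claim.

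The main obstacle is making the mixture argument rigorous for the \emph{anytime} conclusion (all $t$ simultaneously) rather than for a single fixed $t$. One must justify interchanging expectation and the $\lambda$-integral, verify that the mixed process $M_t$ is genuinely adapted and a supermartingale, and control the supremum over all $t$ (via a convergence/optional-stopping argument) so that Ville's inequality applies to $\sup_t M_t$. The conceptual point, which is precisely what the Gaussian averaging buys, is that the direction optimizing $M_t^{\lambda}$ is the data-dependent $\lambda = R^{-2}\bar{V}_t^{-1}S_t$; mixing over $\lambda$ replaces a union bound over this random direction with only the $\log\det(\bar{V}_t)$ penalty.
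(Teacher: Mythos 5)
Your proposal is correct and takes essentially the same route the paper relies on: the paper does not reprove this result but imports it from \cite{abbasi2011improved}, whose proof is precisely the method-of-mixtures/pseudo-maximization argument you give (the exponential supermartingale $M_t^{\lambda}$, Gaussian mixing with covariance $(R^2V)^{-1}$ yielding the $\det(V)^{1/2}\det(\bar{V}_t)^{-1/2}\exp(\tfrac{1}{2R^2}\norm{S_t}^2_{\bar{V}_t^{-1}})$ closed form, then a maximal inequality for the nonnegative supermartingale $M_t$). The only cosmetic difference is that the cited source handles the anytime ("for all $t$") conclusion via a stopping-time/optional-stopping argument with Fatou's lemma rather than invoking Ville's inequality by name; the two are equivalent, and your handling of the measurability/interchange points is sound.
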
 
\begin{prop}
	\label{selfnorm_bnd_proof}
	Let $P$ have full row rank and
		\[
		X_{t+1} = AX_t + P \eta_{t+1}
		\]
	where $\{\eta_t\}_{t=1}^T$ is an i.i.d. subGaussian process with variance proxy $=1$ and each $\eta_t$ has independent elements. 
	For any $0 < \delta < 1$, we have with probability $1 - \delta$
	\begin{align}
	&||(\bar{Y}_{T-1})^{-1/2} \sum_{t=0}^{T-1} X_t \eta_{t+1}^{\prime}P^{\prime}||_2 \nonumber\\
	&\leq R\sqrt{8d \log {\Bigg(\dfrac{5 \text{det}(\bar{Y}_{T-1})^{1/2d} \text{det}(V)^{-1/2d}}{\delta^{1/d}}\Bigg)}}
	\end{align}
	where $\bar{Y}^{-1}_{\tau} = (\sum_{t=1}^{\tau} X_{t}X_t^{\prime}+ V)^{-1}$ and any deterministic $V$ with $V \succ 0$.
\end{prop}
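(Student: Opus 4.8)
The plan is to reduce the matrix operator-norm bound to a scalar self-normalized martingale inequality along a single direction, and then combine Theorem~\ref{selfnorm_main} with the $\epsilon$-net estimate of Proposition~\ref{eps_net} to absorb the supremum over directions. First I would write $M = (\bar{Y}_{T-1})^{-1/2}\sum_{t=0}^{T-1}X_t\eta_{t+1}^{\prime}P^{\prime}$, a $d\times d$ random matrix, and use $\|M\|_2 = \sup_{w\in\Sc^{d-1}}\|Mw\|_2$. Applying Proposition~\ref{eps_net} with $\epsilon=1/2$ gives, for any fixed $w\in\Sc^{d-1}$,
\[
\Pb(\|M\|_2 > z) \le 5^d\,\Pb\!\left(\|Mw\|_2 > z/2\right).
\]
For a fixed unit vector $w$, set $\tilde\eta_{t+1} = \langle P^{\prime}w,\eta_{t+1}\rangle = w^{\prime}P\eta_{t+1}$, a scalar. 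Then $Mw = (\bar{Y}_{T-1})^{-1/2}\sum_{t=0}^{T-1}X_t\tilde\eta_{t+1}$, so that $\|Mw\|_2^2 = \|\tilde S_{T-1}\|_{\bar{Y}_{T-1}^{-1}}^2$ with $\tilde S_t = \sum_{s=0}^{t-1} X_s\tilde\eta_{s+1}$, which is exactly the form handled by Theorem~\ref{selfnorm_main}.

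Second, I would verify the hypotheses of Theorem~\ref{selfnorm_main} for the scalar process $\tilde\eta$. Since $X_t$ is $\bcF_t$-measurable and $\eta_{t+1}$ is independent of $\bcF_t$, the variable $\tilde\eta_{t+1}$ is $\bcF_{t+1}$-measurable and conditionally sub-Gaussian given $\bcF_t$. Because the coordinates of $\eta_{t+1}$ are independent with variance proxy $1$, the linear combination $w^{\prime}P\eta_{t+1}$ is sub-Gaussian with parameter $\|P^{\prime}w\|_2 \le \sigma_{\max}(P)$; this is the constant I will call $R$ (absorbing the $\sigma_{\max}(P)$ factor). Theorem~\ref{selfnorm_main} then yields, with probability at least $1-\delta'$,
\[
\|Mw\|_2^2 = \|\tilde S_{T-1}\|_{\bar{Y}_{T-1}^{-1}}^2 \le 2R^2\log\!\left(\frac{\det(\bar{Y}_{T-1})^{1/2}\det(V)^{-1/2}}{\delta'}\right),
\]
where the random determinant inside the logarithm is permissible because Theorem~\ref{selfnorm_main} holds uniformly in $t$.

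Finally I would do the probabilistic bookkeeping. Choosing $\delta' = \delta\,5^{-d}$ makes $5^d\delta' = \delta$, so setting $z^2/4$ equal to the right-hand side above and substituting $\delta'$ turns the logarithm into $\log\big(5^d\det(\bar{Y}_{T-1})^{1/2}\det(V)^{-1/2}/\delta\big)$. Thus $z^2 = 8R^2\log\big(5^d\det(\bar{Y}_{T-1})^{1/2}\det(V)^{-1/2}/\delta\big)$, and pulling a factor $\tfrac1d$ out of the $d\log 5$, the $\tfrac12\log\det$, and the $\log\delta$ terms rewrites this as $z = R\sqrt{8d\log\big(5\,\det(\bar{Y}_{T-1})^{1/2d}\det(V)^{-1/2d}/\delta^{1/d}\big)}$, the claimed bound, holding with probability at least $1-\delta$.

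The main obstacle I anticipate is not any single hard inequality but the careful handling of the projected noise: confirming that $w^{\prime}P\eta_{t+1}$ is genuinely conditionally sub-Gaussian with the right parameter and adapted to $\bcF_t$ (so the martingale machinery applies), and then matching the constant $R$, the $5^d$ net factor, and the $\det(V)$ normalization consistently across the net reduction, the self-normalized bound, and the final algebraic rearrangement into the stated $1/d$-power form.
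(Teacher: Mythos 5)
Your proposal is correct and takes essentially the same route as the paper's own proof: reduce the operator norm to a single fixed direction via Proposition~\ref{eps_net} with $\epsilon = 1/2$ (incurring the $5^d$ factor), observe that the projected noise $\eta_{s+1}^{\prime}P^{\prime}w$ is conditionally sub-Gaussian with variance proxy $\sigma_{\max}(P)^2$ so that Theorem~\ref{selfnorm_main} applies, and then set the failure probability per direction to $5^{-d}\delta$ and rearrange into the stated $1/d$-power form. The only difference is cosmetic: you verify the adaptedness/sub-Gaussianity hypotheses and carry out the final algebra more explicitly than the paper, which states these steps tersely.
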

\begin{proof}
	Note that $P\eta_t$ is a non--trivial subGaussian if $P$ has full rank. 
	
	Define $S_t = \sum_{s=1}^t X_s\eta_{s+1}^{\prime}P^{\prime}$. Using Proposition~\ref{eps_net} and setting $\epsilon=1/2$, we have that
	\begin{align}
	\Pb(||\bar{Y}_{T-1}^{-1/2}S_{T-1}||_2 \leq y) \leq 5^d \Pb(||\bar{Y}_{T-1}^{-1/2}S_{T-1}w||_2 \leq \frac{y}{2}) = \Pb(||\bar{Y}_{T-1}^{-1/2}S_{T-1}w||^2_2 \leq \frac{y^2}{4})\label{eq1}
	\end{align}
	Setting $S_{T-1} w = \sum_{s=1}^{T-1}X_s \eta_{s+1}^{\prime}P^{\prime}w$ we observe that $\eta_{s+1}^{\prime}P^{\prime}w$ satisfies the conditions of Theorem~\ref{selfnorm_main} with variance proxy $\sigma_{\max}(P)^2$. Then replace in Eq.~\eqref{eq1}
	$$y^2 = 8R^2 \log{\Bigg(\dfrac{\text{det}(\bar{Y}_{T-1})^{1/2} \text{det}(V)^{-1/2}}{5^{-d}\delta}\Bigg)}$$ 
	which gives us from Theorem~\ref{selfnorm_main}
	\[
	\Pb(||\bar{Y}_{T-1}^{-1/2}S_{T-1}||_2 \leq y) \leq \delta
	\]
\end{proof}

\begin{thm}[Hanson--Wright Inequality]
	\label{hanson-wright}
	Given a subGaussian vector $X=(X_1, X_2, \ldots, X_n) \in \Rb^n$ with $\sup_i ||X_i||_{\psi_2}\leq K$ and $X_i$ are independent. Then for any $B \in \Rb^{n \times n}$ and $t \geq 0$
	\begin{align}
	&\Pr(|X^{\prime} B X - \Ex[X^{\prime} B X]| \leq t) \nonumber\\
	&\leq 2 \exp\Bigg\{- c \min{\Big(\frac{t}{K^2 ||B||}, \frac{t^2}{K^4 ||B||^2_{HS}}\Big)}\Bigg\}
	\end{align}
\end{thm}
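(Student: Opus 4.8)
The plan is to prove this tail bound by the exponential-moment (Chernoff) method: it suffices to control the moment generating function of the centered quadratic form $Z = X'BX - \Ex[X'BX]$ and then optimize the resulting bound over a free parameter $\lambda$. First I would make two harmless reductions. By rescaling we may assume $K=1$, and since $X'BX$ is a scalar, $X'BX = X'\big(\tfrac{B+B'}{2}\big)X$, so we may assume $B$ is symmetric (this only decreases $||B||$ and $||B||_{HS}$). As is standard for this inequality I take $\Ex X_i = 0$, so that $\Ex[X'BX]$ reduces to the diagonal trace term. I would then split
\[
Z = \underbrace{\sum_{i} B_{ii}\,(X_i^2 - \Ex X_i^2)}_{Z_{\mathrm{diag}}} + \underbrace{\sum_{i \neq j} B_{ij}\, X_i X_j}_{Z_{\mathrm{off}}},
\]
and bound $Z_{\mathrm{diag}}$ and $Z_{\mathrm{off}}$ separately, allocating $t/2$ to each and finishing with a union bound.

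For the diagonal part, each $X_i^2 - \Ex X_i^2$ is independent, centered, and sub-exponential (a subgaussian squared has $\psi_1$-norm $\lesssim K^2$), so $Z_{\mathrm{diag}}$ is a weighted sum of independent centered sub-exponentials with weights $B_{ii}$. Bernstein's inequality then yields a bound with exponent $-c\min\!\big(t^2/(K^4\sum_i B_{ii}^2),\, t/(K^2\max_i|B_{ii}|)\big)$, and I would finish this branch using $\sum_i B_{ii}^2 \le ||B||_{HS}^2$ and $\max_i|B_{ii}| \le ||B||$.

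The off-diagonal part is the crux, and here I would bound $\Ex \exp(\lambda Z_{\mathrm{off}})$ directly. The first step is a decoupling inequality (de la Pe\~na--Montgomery-Smith): for an independent copy $X'$ of $X$ one has $\Ex \exp(\lambda Z_{\mathrm{off}}) \le \Ex \exp(c'\lambda \tilde Z)$ with $\tilde Z = \sum_{i,j} B_{ij} X_i X_j'$. Conditioning on $X'$, we have $\tilde Z = \sum_i X_i u_i$ with $u = BX'$, a weighted sum of independent subgaussians, hence conditionally subgaussian:
\[
\Ex\big[\exp(\lambda \tilde Z)\mid X'\big] \le \exp\big(C\lambda^2 K^2\, ||BX'||_2^2\big).
\]
Taking expectation over $X'$ reduces everything to bounding the MGF of the nonnegative quadratic form $||BX'||_2^2 = X'^{\top}(B^{\top}B)X'$, which I would control via the spectral decomposition of $B^{\top}B$ and the subgaussianity of $X'$, producing a bound governed by $||B||_{HS}^2 = \mathrm{tr}(B^{\top}B)$ (the variance scale) and valid only for $|\lambda| \le c/(K^2||B||)$ (the operator norm controlling per-eigenmode growth). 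Assembling this gives $\Ex\exp(\lambda Z_{\mathrm{off}}) \le \exp(C K^4\lambda^2 ||B||_{HS}^2)$ on $|\lambda|\le c/(K^2||B||)$, and the Chernoff bound optimized over $\lambda$ reproduces the two-regime exponent $\min\!\big(t^2/(K^4||B||_{HS}^2),\, t/(K^2||B||)\big)$.

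The \textbf{main obstacle} is the off-diagonal term: the decoupling step together with the iterated conditioning that turns the bilinear chaos into a conditionally linear (subgaussian) form and then into the MGF of a second quadratic form. The delicate point is obtaining the correct \emph{operator-norm} dependence $||B||$ in the heavy-tail (linear-$t$) regime rather than a cruder Hilbert--Schmidt bound; this is exactly where the spectral decomposition of $B^{\top}B$ and the admissibility restriction $|\lambda| \lesssim 1/(K^2||B||)$ must be handled carefully.
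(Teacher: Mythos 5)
Your proposal is correct, but there is nothing in the paper to compare it against: the paper states this result as a known theorem (the classical Hanson--Wright inequality, in its modern form due to Rudelson and Vershynin) and never proves it; it is used purely as a tool, e.g.\ in Proposition~\ref{energy_markov}, Corollary~\ref{dep-hanson-wright} and the explosive-case analysis. Your sketch is essentially the standard Rudelson--Vershynin argument: reduction to $K=1$ and symmetric $B$, diagonal/off-diagonal splitting with Bernstein on the diagonal, and decoupling plus conditional subgaussianity on the off-diagonal chaos, with the Chernoff optimization recovering the two-regime exponent. Two remarks. First, the delicate step you flag --- bounding $\Ex_{X'}\exp\bigl(C\lambda^2K^2\|BX'\|_2^2\bigr)$ --- is not quite a matter of ``spectral decomposition of $B^{\top}B$ plus subgaussianity'' alone, since the terms $s_k^2\langle u_k,X'\rangle^2$ are dependent across eigenmodes; the clean way to finish is the Gaussian linearization identity $\exp(\mu\|v\|_2^2)=\Ex_g\exp(\sqrt{2\mu}\,\langle g,v\rangle)$, which transfers the problem to the MGF of the \emph{Gaussian} quadratic form $g^{\top}BB^{\top}g$, computable exactly by rotation invariance and yielding the bound $\exp(C\mu K^2\|B\|_{HS}^2)$ precisely on the range $\mu K^2\lesssim\|B\|^{-2}$, i.e.\ $|\lambda|\lesssim 1/(K^2\|B\|)$ as you state. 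Second, two points about the statement itself: it implicitly requires $\Ex X_i=0$ (which you correctly impose, and which holds in all of the paper's applications since the noise is centered), and the event in the paper's display should read $|X^{\prime}BX-\Ex[X^{\prime}BX]|\geq t$ rather than $\leq t$ --- a typo in the paper, and your proof proves the correct (tail) version.
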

\begin{prop}[Theorem 5.39~\cite{vershynin2010introduction}]
	\label{noise_energy_bnd}
 Let $E$ be an $T \times d$ matrix whose rows $\eta_i^{\prime}$ are independent sub--Gaussian isotropic random vectors with variance proxy $1$ in $\Rb^{d}$. Then for every $t \geq 0$, with probability at least $1 - 2 e^{-ct^2}$ one has
\begin{equation}
    \sqrt{T} - C\sqrt{d} - t \leq \sigma_{\min}(E) \leq  \sqrt{T} + C\sqrt{d} + t
\end{equation}
\end{prop}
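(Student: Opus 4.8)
The plan is to reduce the two-sided control of the singular values to a bound on the operator norm of the symmetric deviation matrix $D = \frac{1}{T}E^{\prime}E - I$. For any unit vector $x \in \Sc^{d-1}$ one has $||Ex||_2^2 = \sum_{i=1}^T \langle \eta_i, x\rangle^2$, and isotropy of the rows (Definition~\ref{isotropic}) gives $\Ex||Ex||_2^2 = T||x||_2^2 = T$, so that $\langle Dx, x\rangle = \frac{1}{T}(||Ex||_2^2 - T)$. The linearization step I would invoke is the standard fact (Lemma 5.36 in~\cite{vershynin2010introduction}) that $||D||_2 \leq \max(\delta,\delta^2)$ implies
\[
\sqrt{T}(1-\delta) \leq \sigma_{\min}(E) \leq \sigma_{\max}(E) \leq \sqrt{T}(1+\delta),
\]
since this converts the squared-scale estimate $|\sigma_i^2/T - 1| \leq ||D||_2$ into the first-order bound $|\sigma_i/\sqrt{T} - 1| \leq \delta$. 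Thus it suffices to show $||D||_2 \leq \max(\delta,\delta^2)$ with probability at least $1 - 2e^{-ct^2}$ for the choice $\delta = C\sqrt{d/T} + t/\sqrt{T}$; then $\sqrt{T}\delta = C\sqrt{d} + t$ reproduces the claimed inequalities.

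To estimate $||D||_2$ I would discretize the sphere. Because $D$ is symmetric, $||D||_2 = \max_{x \in \Sc^{d-1}}|\langle Dx, x\rangle|$, and for a $\tfrac14$-net $\mathcal{N}$ of $\Sc^{d-1}$ with $|\mathcal{N}| \leq 9^d$ one has $||D||_2 \leq 2\max_{x \in \mathcal{N}}|\langle Dx, x\rangle|$ (equivalently, Proposition~\ref{eps_net} may be applied to $D$). This reduces matters to a pointwise deviation bound for each fixed $x$ together with a union bound over $\mathcal{N}$.

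For fixed $x$, stacking the rows as $\boldsymbol{\eta} = (\eta_1^{\prime}, \ldots, \eta_T^{\prime})^{\prime} \in \Rb^{Td}$ I would write $||Ex||_2^2 = \boldsymbol{\eta}^{\prime}(I_T \otimes xx^{\prime})\boldsymbol{\eta}$ as a quadratic form in the independent sub-Gaussian coordinates of $\boldsymbol{\eta}$, with mean $T$, operator norm $||I_T \otimes xx^{\prime}||_2 = 1$ and Hilbert--Schmidt norm $||I_T \otimes xx^{\prime}||_{HS} = \sqrt{T}$. Applying the Hanson--Wright inequality (Theorem~\ref{hanson-wright}) then yields
\[
\Pb\big(|\langle Dx, x\rangle| > s\big) \leq 2\exp\big(-cT\min(s^2, s)\big).
\]
A union bound over $\mathcal{N}$ multiplies the right-hand side by $9^d$, and choosing $s \asymp \max(\delta,\delta^2)$ with $\delta = C\sqrt{d/T} + t/\sqrt{T}$ makes $T\min(s^2,s) \gtrsim d + t^2$, so that $9^d e^{-cT\min(s^2,s)} \leq e^{-ct^2}$ for $C$ large enough. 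Combining this with the net inequality gives $||D||_2 \leq \max(\delta,\delta^2)$ on an event of probability at least $1 - 2e^{-ct^2}$, and the linearization of the first paragraph completes the argument.

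I expect the main obstacle to be the two-regime, sub-exponential nature of the Hanson--Wright tail: the relevant scale is governed by the Gaussian branch $\min(s^2,\cdot)$, and one must check that the crossover into the linear branch does not degrade the estimate and that the $9^d$ cardinality of the net is genuinely absorbed, which is precisely what forces the $\sqrt{d/T}$ term in $\delta$. Matching this sub-exponential deviation with the $\max(\delta,\delta^2)$ threshold required by the linearization lemma is the only delicate bookkeeping; the net construction and the isotropy computation are routine.
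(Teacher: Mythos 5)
The paper itself offers no proof of this proposition: it is quoted directly as Theorem 5.39 of \cite{vershynin2010introduction}, so the only meaningful benchmark is the proof of that theorem, and your argument reconstructs it almost exactly. The three ingredients — the approximate-isometry linearization converting $||\tfrac{1}{T}E^{\prime}E - I||\leq \max(\delta,\delta^2)$ into $|\sigma_i/\sqrt{T}-1|\leq\delta$, the $\tfrac14$-net of cardinality $9^d$ with the symmetric quadratic-form net lemma (factor $2$), and a pointwise sub-exponential deviation bound absorbed against $9^d$ by taking $C$ large — are precisely Vershynin's Lemma 5.36, Lemma 5.4, and Corollary 5.17, and your handling of the $\max(\delta,\delta^2)$ versus $\min(s,s^2)$ crossover is correct.

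The one substantive deviation is the concentration step, and it introduces a gap in generality. Hanson--Wright (Theorem~\ref{hanson-wright}) applied to the stacked vector $\boldsymbol{\eta}\in\Rb^{Td}$ requires all $Td$ coordinates to be independent, i.e., independence of the coordinates \emph{within} each row $\eta_i$; but the proposition's hypothesis grants only that the rows are independent isotropic sub-Gaussian vectors, whose coordinates may be arbitrarily dependent. Under the paper's Assumption~\ref{subgaussian_noise} the coordinates are i.i.d., so your proof is valid everywhere the proposition is actually invoked, but as a proof of the statement as written it proves strictly less. The repair is immediate and needs no new machinery: for fixed $x$ the block structure of $I_T\otimes xx^{\prime}$ means $||Ex||_2^2 - T = \sum_{i=1}^T(\langle\eta_i,x\rangle^2-1)$ is a sum of $T$ independent centered sub-exponential variables (each $\langle\eta_i,x\rangle$ is sub-Gaussian with bounded $\psi_2$ norm by the very definition of a sub-Gaussian vector, and isotropy centers its square), so Bernstein's inequality gives the identical tail $2\exp(-cT\min(s^2,s))$ without any within-row independence — this is exactly the route taken in the cited proof. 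A minor aside: your parenthetical that Proposition~\ref{eps_net} ``may be applied to $D$'' is not quite accurate, since that proposition controls $||Dw||$ for a fixed vector rather than the quadratic form $|\langle Dw,w\rangle|$; the symmetric net lemma you invoke first is the correct tool, so simply drop the parenthetical.
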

The implication of Proposition~\ref{noise_energy_bnd} is as follows: $E^{\prime}E \succeq (\sqrt{T} - C\sqrt{d} - t)^2 I$ with probability at least $1 - 2 e^{-ct^2}$. Let $t = \sqrt{\frac{1}{c}\log{\frac{2}{\delta}}}$, and ensure that 
\[
T \geq T_{\eta}(\delta) = C \Big( d + \log{\frac{2}{\delta}} \Big)
\]
for some large enough universal constant $C$. Then for $T > T_{\eta}(\delta)$ we have, with probability at least $1-\delta$, that
\begin{align}
\frac{3}{4}I &\preceq \dfrac{1}{T}\underbrace{\sum_{t=1}^T \eta_t \eta_t^{\prime}}_{E^{\prime}E} \preceq \frac{5}{4}I  \label{tight_noise_bound}
\end{align}
Further with the same probability 
\begin{align}
\frac{3\sigma_{\min}^{2}(P)}{4}I &\preceq \dfrac{1}{T}\sum_{t=1}^T P\eta_t \eta_t^{\prime}P^{\prime} \preceq \frac{5\sigma_{\max}^{2}(P)}{4}I \nonumber \\
T_{\eta}(\delta) &= C \Big( d + \log{\frac{2}{\delta}} \Big) \label{dep_tight_noise_bound}
\end{align}
\begin{cor}[Dependent Hanson--Wright Inequality]
	\label{dep-hanson-wright}
	Given independent subGaussian vectors $X_i \in \Rb^d$ such that $X_{ij}$ are independent and $\sup_{ij} ||X_{ij}||_{\psi_2}\leq K$. Let $P$ have full row rank. Define 
	$$X=\begin{bmatrix}PX_1 \\
	PX_2 \\
	\vdots \\
	PX_n \end{bmatrix} \in \Rb^{dn}$$ 
	Then for any $B \in \Rb^{dn \times dn}$ and $t \geq 0$
	\begin{align}
	&\Pr(|X^{\prime} B X - \Ex[X^{\prime} B X]| \leq t) \nonumber\\
	&\leq 2 \exp\Bigg\{- c \min{\Big(\frac{t}{K^2\sigma_1^2(P) ||B||}, \frac{t^2}{K^4 \sigma_1^4(P)||B||^2_{HS}}\Big)}\Bigg\}
	\end{align}
\end{cor}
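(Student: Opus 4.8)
The plan is to reduce this ``dependent'' quadratic form to an ordinary Hanson--Wright quadratic form in a vector whose coordinates are genuinely independent, and then to track how the linear map $P$ inflates the two relevant norms of the kernel matrix.

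First I would stack the underlying vectors as $Y = (X_1', X_2', \ldots, X_n')' \in \Rb^{dn}$ and observe that, by construction, $X = \mathcal{P} Y$ where $\mathcal{P} = I_n \otimes P$ is the block--diagonal matrix carrying a copy of $P$ in each of its $n$ diagonal blocks. Since the entries $X_{ij}$ are independent across both $i$ and $j$ and each satisfies $\|X_{ij}\|_{\psi_2} \le K$, the vector $Y$ has independent $K$--subGaussian coordinates, which is exactly the hypothesis of Theorem~\ref{hanson-wright}. Rewriting the form gives $X' B X = Y' \tilde{B} Y$ with $\tilde{B} := \mathcal{P}' B \mathcal{P}$, and this identity also respects the centering, since $\Ex[X'BX] = \Ex[Y'\tilde{B}Y]$.

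Next I would apply Theorem~\ref{hanson-wright} verbatim to $Y$ and $\tilde{B}$, which produces the deviation bound with $\|\tilde{B}\|$ and $\|\tilde{B}\|_{HS}^2$ appearing in the two denominators. It then remains to replace these by quantities involving $P$ and $B$. Because $\mathcal{P} = I_n \otimes P$ is block diagonal, $\|\mathcal{P}\| = \|P\| = \sigma_1(P)$; submultiplicativity of the operator norm then gives $\|\tilde{B}\| \le \sigma_1^2(P)\|B\|$, while the mixed inequality $\|AMC\|_{HS} \le \|A\|\,\|M\|_{HS}\,\|C\|$ applied to $\tilde{B} = \mathcal{P}' B \mathcal{P}$ yields $\|\tilde{B}\|_{HS} \le \sigma_1^2(P)\|B\|_{HS}$. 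Since enlarging each denominator only increases the corresponding argument of the $\min$, the exponent can only decrease in magnitude, so the bound furnished by Theorem~\ref{hanson-wright} immediately implies the claimed inequality with $K^2\sigma_1^2(P)\|B\|$ and $K^4\sigma_1^4(P)\|B\|_{HS}^2$.

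I do not expect a genuine obstacle: the result is essentially a change of variables, and the only points needing care are (i) confirming that it is $Y$, and not $X$, that carries the independent--coordinate structure demanded by Hanson--Wright, since the coordinates of each block $P X_i$ are in general correlated, and (ii) verifying the two norm estimates for the conjugated kernel $\mathcal{P}' B \mathcal{P}$. The monotonicity step that lets the looser denominators replace the sharp ones is routine.
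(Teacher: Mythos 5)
Your proposal is correct and takes essentially the same route as the paper: both write $X=(I_{n\times n}\otimes P)\tX$ with $\tX$ the stacked vector of independent coordinates, conjugate the kernel to $(I_{n\times n}\otimes P)^{\prime}B(I_{n\times n}\otimes P)$, bound its operator norm by $\sigma_1^2(P)\|B\|$ and its Hilbert--Schmidt norm by $\sigma_1^2(P)\|B\|_{HS}$, and then invoke Theorem~\ref{hanson-wright}. The only cosmetic difference is that you obtain the Hilbert--Schmidt estimate via the standard mixed inequality $\|AMC\|_{HS}\leq\|A\|\,\|M\|_{HS}\,\|C\|$, whereas the paper bounds the corresponding trace directly.
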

\begin{proof}
	Define
	$$\tX = \begin{bmatrix}X_1 \\
	X_2 \\
	\vdots \\
	X_n \end{bmatrix}$$
	Now $\tilde{X}$ is such that $\tX_i$ are independent. Observe that $X = (I_{n \times n}\otimes P)\tX $. Then $X^{\prime} B X = \tilde{X} (I_{n \times n}\otimes P) B (I_{n \times n}\otimes P') \tilde{X}$. Since 
	\begin{align*}
	||(I_{n \times n}\otimes P) B (I_{n \times n}\otimes P')|| &\leq \sigma_1^2(P) ||B|| \\
	\text{tr}((I_{n \times n}\otimes P) B (I_{n \times n}\otimes P')(I_{n \times n}\otimes P) B (I_{n \times n}\otimes P')) &\leq \sigma_1^2(P) \text{tr}((I_{n \times n}\otimes P) B^2 (I_{n \times n}\otimes P')) \\
	&\leq \sigma_1^4(P) \text{tr}(B^2)
	\end{align*} and now we can use Hanson--Wright in Theorem~\ref{hanson-wright} and get the desired bound.
\end{proof}	
Let $X_t = \sum_{j=0}^{t-1} A^j \eta_{t-j}$. 

\begin{prop}
	\label{energy_markov}
	Let $P$ have full row rank and
	\[
	X_{t+1} = AX_t + P \eta_{t+1}
	\]
	where $\{\eta_t\}$ is an i.i.d. process and each $\eta_t$ has independent elements. Then with probability at least $1-\delta$, we have 
	\begin{align*}
	||\sum_{t=1}^T X_t X_t^{\prime}||_2 &\leq \sigma_1(P)^2\frac{ T\text{tr}(\Gamma_{T-1}(A))}{\delta} \\
	||\sum_{t=1}^T AX_t X_t^{\prime}A^{\prime}||_2 &\leq \sigma_1(P)^2 \frac{T\text{tr}(\Gamma_{T}(A) - I)}{\delta} 
	\end{align*}
	Let $\delta \in (0, e^{-1})$ then with probability at least $1-\delta$
	\[
	||\sum_{t=1}^T X_t X_t^{\prime}||_2 \leq \sigma_1(P)^2 \text{tr}(\sum_{t=0}^{T-1}\Gamma_t(A))\Big(1 + \frac{1}{c}\log{\Big(\frac{1}{\delta}\Big)}\Big)
	\]
	for some universal constant $c$.
\end{prop}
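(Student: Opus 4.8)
The plan is to reduce all three bounds to controlling the scalar quantity $\sum_{t=1}^T ||X_t||_2^2$, exploiting that $\sum_{t=1}^T X_t X_t^{\prime}$ is positive semidefinite so its operator norm is dominated by its trace: $||\sum_{t=1}^T X_t X_t^{\prime}||_2 \le \text{tr}(\sum_{t=1}^T X_t X_t^{\prime}) = \sum_{t=1}^T ||X_t||_2^2$. First I would compute the expectation. Writing $X_t = \sum_{j=0}^{t-1} A^j P \eta_{t-j}$ and using that the $\eta_s$ are independent, zero-mean and isotropic ($\Ex[\eta_s \eta_s^{\prime}] = I$), the cross terms vanish and $\Ex[X_t X_t^{\prime}] = \sum_{j=0}^{t-1} A^j P P^{\prime} A^{j\prime}$. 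Since $P P^{\prime} \preceq \sigma_1(P)^2 I$, taking traces gives $\Ex[||X_t||_2^2] \le \sigma_1(P)^2 \text{tr}(\Gamma_{t-1}(A))$, and by monotonicity of the Gramian $\text{tr}(\Gamma_{t-1}(A)) \le \text{tr}(\Gamma_{T-1}(A))$ for $t \le T$. Summing and applying Markov's inequality to the nonnegative variable $\sum_t ||X_t||_2^2$ at level $\Ex[\,\cdot\,]/\delta$ yields the first bound with probability $1-\delta$. The second bound is identical once one notes $A X_t = \sum_{j=1}^{t} A^j P \eta_{t-j+1}$, so $\Ex[||A X_t||_2^2] \le \sigma_1(P)^2 \text{tr}(\Gamma_t(A) - I)$ (the $j=0$ term, of trace $d = \text{tr}(I)$, drops out), and Markov again closes it.

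For the sharper logarithmic bound I would instead express $\sum_{t=1}^T ||X_t||_2^2$ as a quadratic form in the stacked noise vector $\boldeta = (\eta_1^{\prime}, \ldots, \eta_T^{\prime})^{\prime} \in \Rb^{dT}$, whose coordinates are independent subgaussian. Collecting the convolution structure $X_t = \sum_{s=1}^t A^{t-s} P \eta_s$ into a block lower-triangular Toeplitz operator $\bcT$ with $(t,s)$ block $A^{t-s} P$, we have $(X_1^{\prime}, \ldots, X_T^{\prime})^{\prime} = \bcT \boldeta$ and hence $\sum_t ||X_t||_2^2 = \boldeta^{\prime} B \boldeta$ with $B = \bcT^{\prime} \bcT \succeq 0$. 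By isotropy $\Ex[\boldeta^{\prime} B \boldeta] = \text{tr}(B)$, which is exactly the expectation computed above and is therefore at most $\mu := \sigma_1(P)^2 \text{tr}(\sum_{t=0}^{T-1} \Gamma_t(A))$. I would then invoke the Hanson--Wright inequality (Theorem~\ref{hanson-wright}), using that for a PSD matrix $||B||_2 \le \text{tr}(B) \le \mu$ and $||B||_{HS} \le \text{tr}(B) \le \mu$.

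The \textbf{main obstacle} is the constant management in this last step: Hanson--Wright controls the deviation $\boldeta^{\prime} B \boldeta - \text{tr}(B)$ by $\exp\{-c \min(u/(K^2||B||_2),\, u^2/(K^4||B||_{HS}^2))\}$, and I must choose the deviation $u$ proportional to $\mu \log(1/\delta)$ so that the tail is at most $\delta$ while the resulting bound reads $\mu(1 + \tfrac{1}{c}\log\tfrac{1}{\delta})$. The role of the hypothesis $\delta \in (0, e^{-1})$ is precisely to force $\log(1/\delta) \ge 1$: with $u = \Theta(\mu \log(1/\delta))$ both arguments of the minimum are then at least a fixed multiple of $\log(1/\delta)$ (the quadratic argument being the larger one since $\log(1/\delta) \ge 1$), so the linear tail governs and gives $\exp\{-c' \log(1/\delta)\} \le \delta$. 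Absorbing the subgaussian parameter $K$, the universal Hanson--Wright constant, and $\sigma_1(P)^2$ into the single constant $c$ of the statement then produces the claimed bound. The only other point requiring care is confirming $\text{tr}(B) = \Ex[\sum_t ||X_t||_2^2]$ exactly, which follows by matching the convolution expansion of $\bcT^{\prime}\bcT$ block by block with the diagonal-only expectation already used for the Markov bounds.
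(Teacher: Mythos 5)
Your proposal is correct and follows essentially the same route as the paper's proof: bound the operator norm by the trace, write the stacked covariates as a block lower-triangular Toeplitz operator applied to the stacked noise so that $\sum_{t}\|X_t\|_2^2$ becomes a PSD quadratic form, use Markov's inequality (with the expectation $\le \sigma_1(P)^2 T\,\mathrm{tr}(\Gamma_{T-1}(A))$, resp.\ $\sigma_1(P)^2 T\,\mathrm{tr}(\Gamma_T(A)-I)$) for the two $1/\delta$ bounds, and invoke Hanson--Wright with a deviation of order $\mathrm{tr}(B)\log(1/\delta)$, the restriction $\delta\in(0,e^{-1})$ guaranteeing $\log(1/\delta)\ge 1$ so the linear tail dominates, for the logarithmic bound. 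The only differences are cosmetic bookkeeping (you bound $\|B\|_2,\|B\|_{HS}\le\mathrm{tr}(B)$ where the paper uses $\|B\|_F^2\le\|B\|_S\|B\|_2$, and you compute the expectation coordinate-wise rather than through the quadratic form), so there is no gap to report.
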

\begin{proof}
	Define $\tilde{\eta} = \begin{bmatrix} P\eta_1 \\ P\eta_2 \\ \vdots \\ P\eta_T\end{bmatrix}$. Then $\tilde{\eta}$ is a non--trivial subGaussian whenever $P$ has full row rank. 
	
	As in Corollary~\ref{dep-hanson-wright} by defining $\tilde{A}$ as
	\[
	\tilde{A} = \begin{bmatrix} I & 0 & 0 & \hdots & 0 \\
	A & I & 0 & \hdots & 0 \\
	\vdots & \vdots & \ddots & \vdots &\vdots\\
	\vdots & \vdots & \vdots & \ddots&\vdots\\
	A^{T-1} & A^{T-2} & A^{T-3} & \hdots & I
	\end{bmatrix} (I_{n \times n} \otimes P^{\prime})
	\]
	observe that 
	\[
	\tilde{A} \tilde{\eta} =  \begin{bmatrix} X_1 \\ X_2 \\ \vdots \\ X_T\end{bmatrix}.
	\]
	Since
	\[
	||X_t X_t^{\prime}|| = X_t^{\prime} X_t,
	\]
	we have that 
	$$||\sum_{t=1}^T X_t X_t^{\prime}|| \leq \sum_{t=1}^T  X_t^{\prime} X_t = \tilde{\eta}^{\prime} \tilde{A}^{\prime} \tilde{A}\tilde{\eta} = \text{tr}(\tilde{A}\tilde{\eta}\tilde{\eta}^{\prime} \tilde{A}^{\prime}).$$
	The assertion of proposition follows by applying Markov's Inequality to $\text{tr}(\tilde{A}\tilde{\eta} {\eta}^{\prime} \tilde{A}^{\prime})$. For the second part observe that each block matrix of $\tilde{A}$ is scaled by $A$, but the proof remains the same. 
	Then in the notation of Theorem~\ref{hanson-wright} $B=\tilde{A}^{\prime} \tilde{A}, X=\tilde{\eta}$
	\begin{align*}
	||B||_S &= \text{tr}(\tilde{A}^{\prime}\tilde{A}) \\
	&= \sum_{t=0}^{T-1}\text{tr}(\Gamma_t(A)) \\
	||B||_F^2 &\leq ||B||_S ||B||_2
	\end{align*}
	Define $c^{*} = \min{(c, 1)}$. Set $t = \frac{||B||_F^2}{c^{*}||B||}  {\log{(\frac{1}{\delta})}}$ and assume $\delta \in (0, e^{-1})$ then
	\begin{align*}
	\frac{t}{c^{*}||B||} \leq \frac{t^2}{c^{*}||B||_F^2}
	\end{align*}
	we get from Theorem~\ref{hanson-wright} that 
	\begin{align*}
	\tilde{\eta}^{\prime}\tilde{A}^{\prime}\tilde{A}\tilde{\eta} &\leq \text{tr}(\sum_{t=0}^{T-1}\Gamma_t(A)) + \frac{||B||_F^2}{c^{*}||B||} \log{\Big(\frac{1}{\delta}\Big)} \\
	&\leq \text{tr}(\sum_{t=0}^{T-1}\Gamma_t(A)) + \frac{||B||_s}{c^{*}} \log{\Big(\frac{1}{\delta}\Big)} \\
	&\leq  \text{tr}(\sum_{t=0}^{T-1}\Gamma_t(A))\Big(1 + \frac{1}{c^{*}}\log{\Big(\frac{1}{\delta}\Big)}\Big)
	\end{align*}
	with probability at least $1 - \exp{\Big(- \frac{c||B||_F^2}{c^{*}||B||_2^2}\log{\frac{1}{\delta}}\Big)}$. Since 
	\[
	\frac{c||B||_F^2}{c^{*}||B||_2^2} \geq 1
	\]
	it follows that 
	\[
	\exp{\Big(- \frac{c||B||_F^2}{c^{*}||B||_2^2}\log{\frac{1}{\delta}}\Big)} \leq \delta
	\]
	and we can conclude that with probability at least $1-\delta$
	\[
	\tilde{\eta}^{\prime}\tilde{A}^{\prime}\tilde{A}\tilde{\eta} \leq	 \text{tr}(\sum_{t=0}^{T-1}\Gamma_t(A))\Big(1 + \frac{1}{c^{*}}\log{\Big(\frac{1}{\delta}\Big)}\Big)
	\]
\end{proof}
\begin{cor}
	\label{sub_sum}
Whenever $\delta \in (0, e^{-1})$, we have with probability at least $1-\delta$
	\[
	||\sum_{t=k+1}^T X_t X_t^{\prime}||_2 \leq \sigma^2_1(P)\text{tr}(\sum_{t=k}^{T-1}\Gamma_t(A))\Big(1 + \frac{1}{c}\log{\Big(\frac{1}{\delta}\Big)}\Big)
	\]
	for some universal constant $c$.
\end{cor}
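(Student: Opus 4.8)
The statement is the tail-sum analogue of the third bound in Proposition~\ref{energy_markov}, so the plan is to re-run that proof after discarding the first $k$ summands, the only change being a shortened trace. First I would start, exactly as there, from the elementary inequality $\|\sum_{t=k+1}^T X_t X_t^{\prime}\|_2 \le \sum_{t=k+1}^T X_t^{\prime} X_t$, valid because each $X_t X_t^{\prime}$ is rank one with $\|X_t X_t^{\prime}\|_2 = X_t^{\prime} X_t$. This again reduces the operator-norm control of a matrix to the concentration of a single scalar quadratic form in the noise.

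Next I would realize that quadratic form. Following the construction in the proof of Proposition~\ref{energy_markov}, I let $\tilde{A}_{(k)}$ be the block matrix $\tilde{A}$ of that proposition restricted to the block rows indexed $k+1,\ldots,T$, so that $\tilde{A}_{(k)}\tilde{\eta} = (X_{k+1}^{\prime},\ldots,X_T^{\prime})^{\prime}$ and $\sum_{t=k+1}^T X_t^{\prime} X_t = \tilde{\eta}^{\prime} B \tilde{\eta}$ with $B = \tilde{A}_{(k)}^{\prime}\tilde{A}_{(k)}$. Since the coordinates of $\tilde{\eta}$ are independent subgaussian and the $P$ blocks are carried inside $\tilde{A}_{(k)}$, this is precisely the setting of the dependent Hanson--Wright inequality (Corollary~\ref{dep-hanson-wright}), which produces the factor $\sigma_1(P)^2$.

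The only quantities that change relative to Proposition~\ref{energy_markov} are the norms of $B$. Here $\|B\|_S = \text{tr}(B) = \sum_{t=k+1}^T \sum_{j=0}^{t-1}\text{tr}(A^j P P^{\prime} A^{j\prime}) \le \sigma_1(P)^2 \sum_{t=k+1}^T \text{tr}(\Gamma_{t-1}(A)) = \sigma_1(P)^2\,\text{tr}\big(\sum_{t=k}^{T-1}\Gamma_t(A)\big)$, which is exactly the leading term claimed. As before $\|B\|_F^2 \le \|B\|_S\|B\|_2$, so choosing $t = \frac{\|B\|_F^2}{c^{*}\|B\|_2}\log\frac{1}{\delta}$ with $c^{*} = \min(c,1)$ lands on the linear branch of the Hanson--Wright minimum, and the tail exponent $\frac{c\|B\|_F^2}{c^{*}\|B\|_2^2}\ge 1$, together with $\delta \in (0,e^{-1})$, forces the failure probability below $\delta$. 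Substituting $\|B\|_S$ yields the stated bound.

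I expect no genuine obstacle beyond bookkeeping, since the truncation merely deletes block rows of $\tilde{A}$: every inequality used in Proposition~\ref{energy_markov} (the bound $\|B\|_F^2 \le \|B\|_S\|B\|_2$, the branch selection, and the exponent lower bound) is preserved verbatim, and the sole effect is to shorten the trace sum. The one point worth checking carefully is the reindexing $\sum_{t=k+1}^T \text{tr}(\Gamma_{t-1}(A)) = \sum_{t=k}^{T-1}\text{tr}(\Gamma_t(A))$, which confirms that the dropped summands correspond exactly to the missing Gramian terms.
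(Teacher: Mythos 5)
Your proposal is correct and takes essentially the same route as the paper: the paper realizes the tail sum as the quadratic form $\tilde{\eta}^{\prime}(\tilde{A}^{\prime}\tilde{A} - \tilde{A}_k^{\prime}\tilde{A}_k)\tilde{\eta}$, where $\tilde{A}_k$ is $\tilde{A}$ with the block rows beyond $k$ zeroed out, which is exactly your matrix $B = \tilde{A}_{(k)}^{\prime}\tilde{A}_{(k)}$, and then applies the identical Hanson--Wright argument with the same choice of deviation, the same bound $\|B\|_F^2 \leq \|B\|_S\|B\|_2$, and the same exponent estimate. The only (cosmetic) difference is that your direct Gram-matrix formulation makes $B \succeq 0$ immediate, whereas the paper's difference formulation has to note this separately.
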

\begin{proof}
	The proof follows the same steps as Proposition~\ref{energy_markov}. Define
	\[
		\tilde{A} = \begin{bmatrix} I & 0 & 0 & \hdots & 0 \\
			A & I & 0 & \hdots & 0 \\
			\vdots & \vdots & \ddots & \vdots &\vdots\\
			\vdots & \vdots & \vdots & \ddots&\vdots\\
			A^{T-1} & A^{T-2} & A^{T-3} & \hdots & I
		\end{bmatrix}(I_{n \times n} \otimes P^{\prime})
	\]
	Define $\tilde{A}_k$ as the matrix formed by zeroing out all the rows of $\tilde{A}$ from $k+1$ row onwards. Then observe that 
	\begin{align*}
	||\sum_{t=k+1}^T X_t X_t^{\prime}|| &\leq \text{tr}(\sum_{t=k+1}^T X_t X_t^{\prime}) = \text{tr}(\sum_{t=1}^T X_t X_t^{\prime} - \sum_{t=1}^k X_t X_t^{\prime}) \\
	&= \tilde{\eta}^{\prime}(\tilde{A}^{\prime}\tilde{A} - \tilde{A}^{\prime}_k\tilde{A}_k)\tilde{\eta}
	\end{align*}
	Since $ \text{tr}(\sum_{t=1}^T X_t X_t^{\prime} - \sum_{t=1}^k X_t X_t^{\prime}) \geq 0$ for any $\tilde{\eta}$ it implies $B = (\tilde{A}^{\prime}\tilde{A} - \tilde{A}^{\prime}_k\tilde{A}_k) \succeq 0$. 
	\begin{align*}
	||B||_S &= \text{tr}(\tilde{A}^{\prime}\tilde{A}) = \sum_{t=k}^{T-1}\text{tr}(\Gamma_t(A)) \\
	||B||_F^2 &\leq ||B||_S ||B||_2
	\end{align*}
	Define $c^{*} = \min{(c, 1)}$. Set $t = \frac{||B||_F^2}{c^{*}||B||}  {\log{(\frac{1}{\delta})}}$ and assume $\delta \in (0, e^{-1})$ then
	\begin{align*}
	\frac{t}{c^{*}||B||} \leq \frac{t^2}{c^{*}||B||_F^2}
	\end{align*}
	we get from Theorem~\ref{hanson-wright} that 
	\begin{align*}
	\tilde{\eta}^{\prime}\tilde{A}^{\prime}\tilde{A}\tilde{\eta} &\leq ||B||_S + \frac{||B||_F^2}{c^{*}||B||} \log{\Big(\frac{1}{\delta}\Big)} \leq ||B||_S + \frac{||B||_S}{c^{*}} \log{\Big(\frac{1}{\delta}\Big)} \leq  ||B||_S\Big(1 + \frac{1}{c^{*}}\log{\Big(\frac{1}{\delta}\Big)}\Big)
	\end{align*}
	with probability at least $1 - \exp{\Big(- \frac{c||B||_F^2}{c^{*}||B||_2^2}\log{\frac{1}{\delta}}\Big)}$. Since 
	\[
	\frac{c||B||_F^2}{c^{*}||B||_2^2} \geq 1
	\]
	it follows that 
	\[
	\exp{\Big(- \frac{c||B||_F^2}{c^{*}||B||_2^2}\log{\frac{1}{\delta}}\Big)} \leq \delta
	\]
	and we can conclude that with probability at least $1-\delta$
	\[
	\tilde{\eta}^{\prime}\tilde{A}^{\prime}\tilde{A}\tilde{\eta} \leq	 \text{tr}(\sum_{t=k}^{T-1}\Gamma_t(A))\Big(1 + \frac{1}{c^{*}}\log{\Big(\frac{1}{\delta}\Big)}\Big)
	\]	
\end{proof}
\begin{prop}
	\label{cont_rand}
	Whenever the pdf of $X$, $f(\cdot)$, satisfies  $\text{ess sup}_x f(x) = C_X < \infty$ we have
	\[
	\Pb(|X| \leq \delta) \leq 2C_X \delta
	\]
\end{prop}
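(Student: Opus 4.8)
The plan is to express the probability as an integral of the density over a symmetric interval and then bound the integrand by its essential supremum. First I would write
\[
\Pb(|X| \leq \delta) = \int_{-\delta}^{\delta} f(x)\,dx,
\]
which holds because $X$ admits a pdf $f$, so its distribution is absolutely continuous and the probability of the event $\{|X| \leq \delta\} = \{x : -\delta \leq x \leq \delta\}$ equals the Lebesgue integral of $f$ over that interval.

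Next I would use the hypothesis $\text{ess sup}_x f(x) = C_X < \infty$. By definition of the essential supremum, the inequality $f(x) \leq C_X$ holds for Lebesgue--almost every $x$. Since the Lebesgue integral is insensitive to the behaviour of the integrand on a null set, this almost-everywhere bound suffices to conclude
\[
\int_{-\delta}^{\delta} f(x)\,dx \leq \int_{-\delta}^{\delta} C_X\,dx = C_X \cdot (2\delta) = 2 C_X \delta.
\]
Combining the two displays gives the claimed bound $\Pb(|X| \leq \delta) \leq 2 C_X \delta$.

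There is essentially no obstacle here; the only subtlety worth flagging is that the dominating bound $f \leq C_X$ is an almost-everywhere statement rather than a pointwise one, but this is exactly what is needed for the integral comparison, so no further care is required. The argument is entirely elementary and does not depend on any structural properties of $X$ beyond the existence of a density with bounded essential supremum.
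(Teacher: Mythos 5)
Your proof is correct and follows essentially the same route as the paper: write $\Pb(|X| \leq \delta)$ as the integral of $f$ over $[-\delta,\delta]$ and bound the integrand by its essential supremum, which yields $2C_X\delta$. The only difference is that you spell out the almost-everywhere nature of the bound $f \leq C_X$, which the paper leaves implicit.
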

\begin{proof}
	Since the essential supremum of $f(\cdot)$ is bounded. Then 
	\[
	\Pb(|X| \leq \delta) = \int_{x=-\delta}^{\delta}f(x)dx \leq 2C_X \delta
	\]
\end{proof}
\begin{prop}[Proposition 2 in~\cite{faradonbeh2017finite}]
	\label{anti_conc}
	Let $P^{-1} \Lambda P = A$ be the Jordan decomposition of $A$ and define $z_T = A^{-T}\sum_{i=1}^TA^{T-i}\eta_i$. Further assume that $\eta_t$ is continuous, subGaussian with variance proxy $=1$ then
	$$\psi(A, \delta) = \sup \Bigg\{y \in \Rb : \Pb\Bigg(\min_{1 \leq i \leq d}|P_i^{'}z_T| < y \Bigg) \leq \delta \Bigg\}$$
	where $P = [P_1, P_2, \ldots, P_d]^{'}$. If $\rho_{\min}(A) > 1$, then 
	$$\psi(A, \delta) \geq \psi(A) \delta > 0$$
where $\psi(A)$ depend only on $A$.
\end{prop}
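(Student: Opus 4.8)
The plan is to derive the lower bound $\psi(A,\delta)\ge\psi(A)\delta$ by combining a union bound over the $d$ linear functionals $P_j'z_T$ with the elementary density-based anti-concentration estimate of Proposition~\ref{cont_rand}. First I would simplify the target vector: since
\[
z_T = A^{-T}\sum_{i=1}^T A^{T-i}\eta_i = \sum_{i=1}^T A^{-i}\eta_i,
\]
each scalar $P_j'z_T = \sum_{i=1}^T (P_j'A^{-i})\eta_i$ is a fixed deterministic linear combination of the i.i.d.\ continuous noise vectors $\eta_1,\dots,\eta_T$. A union bound immediately gives
\[
\Pb\Big(\min_{1\le j\le d}|P_j'z_T| < y\Big) \le \sum_{j=1}^d \Pb\big(|P_j'z_T| < y\big),
\]
so it suffices to produce a linear-in-$y$ anti-concentration bound for each individual functional.

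The key step, and the one requiring care, is to show that $P_j'z_T$ admits a density whose essential supremum $C_{|P_j'z_T|}$ is finite, so that Proposition~\ref{cont_rand} applies. I would isolate the $\eta_1$ contribution and write $P_j'z_T = (P_j'A^{-1})\eta_1 + R_j$, where $R_j=\sum_{i=2}^T (P_j'A^{-i})\eta_i$ is independent of $\eta_1$. The coefficient vector $P_j'A^{-1}$ is nonzero, because $P$ (whose rows are the $P_j'$) and $A^{-1}$ are both invertible (here $\rho_{\min}(A)>1$ guarantees $A^{-1}$ exists), so $(P_j'A^{-1})\eta_1$ is a non-degenerate linear combination of the continuous coordinates of $\eta_1$, each of whose density has essential supremum at most the constant $C$ of Assumption~\ref{subgaussian_noise}. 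Consequently $(P_j'A^{-1})\eta_1$ has a bounded density, and since its density convolves against the law of the independent $R_j$,
\[
f_{P_j'z_T}(x) = \int f_{(P_j'A^{-1})\eta_1}(x-r)\,dF_{R_j}(r) \le \big\|f_{(P_j'A^{-1})\eta_1}\big\|_\infty < \infty,
\]
so $C_{|P_j'z_T|}<\infty$. Applying Proposition~\ref{cont_rand} to the nonnegative variable $|P_j'z_T|$ then yields $\Pb(|P_j'z_T| < y) \le 2\,C_{|P_j'z_T|}\,y$.

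Combining the two steps gives $\Pb(\min_j |P_j'z_T| < y) \le 2y\sum_{j=1}^d C_{|P_j'z_T|} \le 2dy\sup_{1\le j\le d} C_{|P_j'z_T|}$. Setting $\psi(A) = \big(2d\sup_{1\le j\le d} C_{|P_j'z_T|}\big)^{-1}$ and choosing $y=\psi(A)\delta$ makes the right-hand side at most $\delta$; hence $y=\psi(A)\delta$ belongs to the set over which the supremum defining $\psi(A,\delta)$ is taken, giving $\psi(A,\delta)\ge\psi(A)\delta$. Strict positivity $\psi(A)>0$ is exactly the finiteness $\sup_j C_{|P_j'z_T|}<\infty$ established above, and $\psi(A)$ depends only on $A$ (through $P$ and the coefficients $P_j'A^{-i}$) and the fixed noise law, as claimed in Proposition~\ref{anti_conc1}.

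The main obstacle I expect is precisely the density-boundedness claim $C_{|P_j'z_T|}<\infty$; the union bound and the final substitution are routine. The convolution argument resolves it cleanly because only a single nonzero coefficient (the one attached to $\eta_1$, or indeed to any single $\eta_i$) is needed, and the continuity plus bounded-density hypothesis on the noise coordinates then propagates through the independent sum without any control on $\rho_{\min}(A)$ beyond the invertibility of $A$.
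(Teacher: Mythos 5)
Your proposal is correct and follows essentially the same route as the paper's proof: a union bound over the $d$ events $\{|P_j'z_T| < y\}$, the density-based anti-concentration estimate of Proposition~\ref{cont_rand} applied to each scalar $P_j'z_T$, and the choice $\psi(A) = \bigl(2d\sup_{1\le j\le d} C_{|P_j'z_T|}\bigr)^{-1}$. The one difference is in your favor: the paper simply cites Assumption~\ref{subgaussian_noise} and leaves the finiteness of $C_{|P_j'z_T|}$ implicit, whereas your convolution argument (isolating the nonzero coefficient block $P_j'A^{-1}\eta_1$ and convolving against the independent remainder $R_j$) actually proves it, and moreover yields a bound independent of $T$, which is needed for $\psi(A)$ to genuinely ``depend only on $A$'' as both statements claim.
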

\begin{proof}
Define the event $\Ec = \{\min_{1 \leq i \leq d}|P_i^{'}z_T| < y\}, \Ec_i = \{|P_i^{'}z_T| < y\}$. Clearly $\Ec \implies \cup_{i=1}^d \Ec_i$, then
\begin{align*}
    \Pb(\Ec) &\leq \Pb( \cup_{i=1}^d \Ec_i) \leq \sum_{i=1}^d \Pb( \Ec_i) 
\end{align*}
From Proposition~\ref{cont_rand} and Assumption~\ref{subgaussian_noise}, we have $\Pb( \Ec_i) \leq 2 C_{|P_i^{'}z_T|} y$. Then we get 
\begin{align*}
\Pb(\Ec) &\leq (2\sum_{i=1}^d  C_{|P_i^{'}z_T|}) y   \leq 2 d \sup_{1 \leq i \leq d}C_{|P_i^{'}z_T|} y
\end{align*}
where $C_{|P_i^{'}z_T|}$ is the essential supremum of the pdf of $|P_i^{'}z_T|$. Then $\psi(A) = \frac{1}{2 d \sup_{1 \leq i \leq d}C_{|P_i^{'}z_T|}}$.
\end{proof}

	\section{Lower Bound for $Y_T$ when $A \in \Sc_0 \cup \Sc_1$}
\label{short_proof}
Here we will prove our results when $\rho(A) \leq 1+ C/T$. Assume for this case that $\eta_{t} = L\bar{\eta}_t$ where $\{\bar{\eta}_t\}_{t=1}^T$ are i.i.d and all elements of $\bar{\eta}_t$ are independent. Further $L$ is full row rank. Define $\sigma_{\min}(LL^{\prime}) = R^2 > 0$. Let $\sigma_{\max}(LL^{\prime})=1$ (this does not affect our result: $R$ is just the inverse of the condition number).
Define
\begin{align*}
P &= A Y_{T-1} A^{\prime} \\
Q &= \sum_{\tau=0}^{T-1}{Ax_t \eta_{t+1}^{\prime} }\\
V &= TI \\
T_{\eta} &= C\Big(\log{\frac{2}{\delta}} + d \log{5}\Big) \\
\Ec_{1}(\delta) &= \Bigg\{||Q||^2_{(P+V)^{-1}} \leq 8 \log{\Bigg(\dfrac{5^d\text{det}(P+V)^{1/2} \text{det}(V)^{-1/2}}{\delta}\Bigg)}\Bigg\} \\
\Ec_{2}(\delta) &= \Bigg\{||\sum_{\tau=0}^{T-1} Ax_{\tau} x_{\tau}^{\prime}A^{\prime}|| \leq \frac{T \text{tr}(\Gamma_{T}(A) - I)}{\delta}\Bigg\}\\
\Ec_{\eta}(\delta) &= \{T > T_{\eta}(\delta), \frac{3R^2}{4}I \preceq \dfrac{1}{T}\sum_{t=1}^T \eta_t \eta_t^{'} \preceq \frac{5}{4}I\} \\
\Ec(\delta) &= \Ec_{\eta}(\delta) \cap \Ec_{1}(\delta) \cap \Ec_{2}(\delta) \\
\end{align*}
Recall that
\begin{equation}
\label{lb_step1}
{Y}_T \succeq A {Y}_{T-1} A^{\prime} + \sum_{t=0}^{T-1} {Ax_t \eta_{t+1}^{\prime} + \eta_{t+1} x_t^{\prime}A^{\prime}} + \sum_{t=1}^T \eta_t \eta_t^{\prime}
\end{equation}

Our goal here will be to control
\begin{equation}
\label{cross_terms}
||Q||_2
\end{equation}
Following Proposition~\ref{selfnorm_bnd}, Proposition~\ref{energy_markov}, it is true that $\Pb(\Ec_{1}(\delta) \cap \Ec_{2}(\delta)) \geq 1-2\delta$. We will show that 
$$\Ec(\delta) = \Ec_{\eta}(\delta) \cap \Ec_{1}(\delta) \cap \Ec_{2}(\delta) \implies \sigma_{\min}(\hat{Y}_T) \geq 1/4$$
Under $\Ec_{\eta}(\delta)$, we get 
\begin{align}
{Y}_T &\succeq A {Y}_{T-1} A^{\prime} + \sum_{t=0}^{T-1} {Ax_t \eta_{t+1}^{\prime} + \eta_{t+1} x_t^{\prime}A^{\prime}} + \sum_{t=1}^T \eta_t \eta_t^{\prime} \nonumber \\
{Y}_T &\succeq A {Y}_{T-1} A^{\prime} + \sum_{t=0}^{T-1} {Ax_t \eta_{t+1}^{\prime} + \eta_{t+1} x_t^{\prime}A^{\prime}} + \frac{3}{4}R^2TI  \nonumber \\
U^{\prime} {Y}_T U &\geq U^{\prime} A Y_{T-1} A^{\prime} U  + U^{\prime} \sum_{t=0}^{T-1} \Bigg({Ax_t \eta_{t+1}^{\prime} + \eta_{t+1} x_t^{\prime}A^{\prime}} \Bigg) U  + \frac{3}{4}TR^2 \hspace{3mm} \forall U\in \Sc^{d-1} \label{contra_eq}
\end{align} 
Intersecting Eq.~\eqref{contra_eq} with $\Ec_1(\delta) \cap \Ec_2(\delta)$, we find under $\Ec(\delta)$ 
\begin{align*}
&||Q||^2_{(P+V)^{-1}} \leq 8 \log{\Bigg(\dfrac{5^d\text{det}(P+V)^{1/2} \text{det}(V)^{-1/2}}{\delta}\Bigg)} \\
&\leq 8 \log{\Bigg(\dfrac{5^d \text{det}(\frac{T \text{tr}(\Gamma_{T}(A) - I)}{\delta} + TI)^{1/2}\text{det}(TI)^{-1/2}}{\delta}\Bigg)} \\
&\leq  8 \log{\Bigg(\dfrac{5^d \text{det}({ \text{tr}(\Gamma_{T}(A) - I)} + I)^{1/2}}{\delta^d}\Bigg)} 
\end{align*}
Using Proposition~\ref{psd_result_2} and letting $\kappa^2 = U^{\prime} P U$ then 
\begin{align*}
&||QU||_2 \\
&\leq \sqrt{\kappa^2 + T}\sqrt{8 \log{\Bigg(\dfrac{5^d \text{det}({ \text{tr}(\Gamma_{T}(A) - I)} + I)^{1/2}}{\delta^d}\Bigg)}}
\end{align*}
So Eq.~\eqref{contra_eq} implies 
\begin{align}
U^{\prime} {Y}_T U &\geq \kappa^2 \nonumber - \sqrt{(\kappa^2 + T)}{ \sqrt{ 16d \log{( \text{tr}(\Gamma_T - I)+1)} + 32d \log{\frac{5}{\delta}}}} + \frac{3}{4}TR^2 
\end{align}
which gives us
\begin{align}
U^{\prime} \frac{{Y}_T}{T} U &\geq \frac{\kappa^2}{T} - \sqrt{(\frac{\kappa^2}{T} + 1)}\underbrace{ \sqrt{ \frac{16d}{T} \log{( \text{tr}(\Gamma_T - I)+1)} + \frac{32d}{T}\log{\frac{5}{\delta}}}}_{=\beta} + \frac{3}{4}R^2 \label{contra_eq2}
\end{align}
If we can ensure
\begin{equation}
\label{t_req}
\frac{TR^4}{128} \geq  { \frac{d}{2} \log{( \text{tr}(\Gamma_T - I)+1)} + d \log{\frac{5}{\delta}} }
\end{equation}
then $\beta \leq R^2/2$, \textit{i.e.}, 
\[
 \sqrt{ \frac{16d}{T} \log{( \text{tr}(\Gamma_T - I)+1)} + \frac{32d}{T}\log{\frac{5}{\delta}}} \leq \frac{R^2}{2}
\]
Let $T$ be large enough that Eq.~\eqref{t_req} is satisfied then Eq.~\eqref{contra_eq2} implies
\begin{equation}
\label{final_eq}
U^{\prime} \frac{{Y}_T}{T} U \geq \frac{\kappa^2}{T} - \frac{\sqrt{(\frac{\kappa^2}{T}  + 1)}R^2}{2} + \frac{3R^2}{4} \geq \frac{R^2}{4} + \frac{\kappa^2}{2T}
\end{equation}

Since $U$ is arbitrarily chosen Eq.~\eqref{final_eq} implies
\begin{align}
Y_T \succeq \frac{TR^2}{4}I \label{lower_bnd}
\end{align}
with probability at least $1 - 3\delta$ whenever
\begin{align}
\rho_i(A) &\leq 1 + \frac{c}{T} \nonumber\\
T &\geq \max{\Big(C\Big(\log{\frac{2}{\delta}} + d \log{5}\Big), CR^2\Big({ \frac{d}{2} \log{( \text{tr}(\Gamma_T - I)+1)} + d \log{\frac{5}{\delta}} }\Big)\Big)} \label{t_req_comb}
\end{align}
\begin{remark}
Eq.~\eqref{t_req} is satisfied whenever $\text{tr}(\Gamma_T - I)$ grows at most polynomially in $T$. This is true whenever $\rho(A) \leq 1 +\frac{c}{T}$. 
\end{remark}

	\section{Sharpened bounds when $1 - \frac{c}{T}\leq \rho_i(A)  \leq 1 + \frac{c}{T}$}
\label{sharp_bounds}
Here we show that the bound for $Y_T$ in Eq.~\eqref{lower_bnd} can be sharpened to have quadratic growth in $T$. The key idea towards sharpening will be that we want Eq.~\eqref{lower_bnd} satisfied for every $t \geq \frac{T}{2}$ simultaneously, \textit{i.e.}, we need 
\begin{align}
Y_t \succeq \frac{tR^2}{4}I \label{lower_bnd_t}
\end{align}
simultaneously for $t \geq \frac{T}{2}$ with high probability. By similar arguments as before as long as we have
\begin{align}
\rho_i(A) &\leq 1 \nonumber\\
t &\geq \max{\Big(C\Big(\log{\frac{2}{\delta}} + d \log{5}\Big), CR^2\Big({ \frac{d}{2} \log{( \text{tr}(\Gamma_t - I)+1)} + d \log{\frac{5}{\delta}} }\Big)\Big)} \label{t_req_comb_t}
\end{align}
we can conclude with probability at least $1 - 2\delta$ that $Y_t \succeq \frac{tR^2}{4}I$. This means that with probability at least $1 - 3\delta \frac{T}{2}$ we have for $t \geq \frac{T}{2}$ simultaneously
\[
Y_t \succeq \frac{tR^2}{4}I
\]
when Eq.~\eqref{t_req_comb_t} is satisfied for each $t$. Since the LHS of Eq.~\eqref{t_req_comb_t} is least at $t = T/2$ and RHS is greatest at $t=T$, a sufficient condition for every $t \geq \frac{T}{2}$ satisfying Eq.~\eqref{t_req_comb_t} is the following 
\[
T \geq \max{\Big(C\Big(\log{\frac{2}{\delta}} + d \log{5}\Big), C\Big({ \frac{d}{2} \log{( \text{tr}(\Gamma_T - I)+1)} + d \log{\frac{5}{\delta}} }\Big)\Big)} 
\]
Then by substituting $\delta \rightarrow \frac{2\delta}{3T}$ we can conclude with probability at least $1-\delta$ that 
\[
Y_t \succeq \frac{tR^2}{4}I
\]
simultaneously for every $t \geq \frac{T}{2}$ whenever 
\begin{equation}
T \geq \max{\Big(C\Big(\log{\frac{3T}{2\delta}} + d \log{5}\Big), CR^2\Big({ \frac{d}{2} \log{( \text{tr}(\Gamma_T - I)+1)} + d \log{\frac{15T}{2\delta}} }\Big)\Big)} \label{T_sharp_cond}
\end{equation}
Define $\gamma_{t-1} = \sqrt{U^{\prime} A^{\prime} Y_{t-1} A U}$ and Eq.~\eqref{final_eq} becomes
\begin{align}
U^{\prime} Y_t U &\geq \gamma_{t-1}^2 - \sqrt{(\gamma_{t-1}^2 + t)}\underbrace{ \sqrt{ {16d} \log{( \text{tr}(\Gamma_t - I)+1)} + {32d}\log{\frac{15T}{2\delta}}}}_{\text{Under Eq.~\eqref{T_sharp_cond} is}\leq \frac{R^2\sqrt{t}}{2}} + \frac{3}{4}tR^2 \nonumber \\
&\geq \gamma_{t-1}^2  -(\gamma_{t-1} + \sqrt{t})\sqrt{ {16d} \log{( \text{tr}(\Gamma_t - I)+1)} + {32d}\log{\frac{15T}{2\delta}}} + \frac{3t}{4}R^2 \nonumber \\
&\geq \gamma_{t-1}^2  -\gamma_{t-1} \sqrt{ {16d} \log{( \text{tr}(\Gamma_t - I)+1)} + {32d}\log{\frac{15T}{2\delta}}} + \frac{3tR^2}{4} -  \sqrt{t} \underbrace{\sqrt{ {16d} \log{( \text{tr}(\Gamma_t - I)+1)} + {32d}\log{\frac{15T}{2\delta}}}}_{\leq R^2\frac{\sqrt{t}}{2}} \nonumber \\
&\geq \gamma_{t-1}^2 \Big(1 - \sqrt{\frac{{ {16d} \log{( \text{tr}(\Gamma_t - I)+1)} + {32d}\log{\frac{15T}{2\delta}}}}{\gamma_{t-1}^2}}\Big) + \frac{tR^2}{4} \nonumber \\
&\geq \gamma_{t-1}^2 \Big(1 - \underbrace{\sqrt{\frac{{ {16d} \log{( \text{tr}(\Gamma_T - I)+1)} + {32d}\log{\frac{15T}{2\delta}}}}{\gamma_{t-1}^2}}}_{=\sqrt{\frac{c(A, \delta)}{\gamma_{t-1}^2}}}\Big) + \frac{TR^2}{8} \label{contra_eq3}
\end{align}
Observe that
\begin{equation}
\gamma_{t-1} = \sqrt{U^{\prime} A^{\prime} Y_{t-1} A U}\geq \sigma_{\min}(A)\sqrt{\frac{TR^2}{8e}} \label{gamma_t}
\end{equation}
Eq.~\eqref{contra_eq3} will give us a non--trivial bound only when $\frac{c(A, \delta)}{\gamma_{t-1}^2} \leq 1/4$ which is true whenever
\begin{equation}
T \geq \frac{64ec(A, \delta)}{R^2\sigma_{\min}^2(A)} \label{bet_cond}
\end{equation}
The scaling $1 - \sqrt{\frac{c(A,\delta)}{\gamma_{t-1}^2 }}$ in Eq.~\eqref{contra_eq3} depends on $\gamma_{t-1}$ itself. We will show that 
\begin{align*}
&\gamma^2_{t-1} = T\Omega(1) \implies \gamma^2_{t-1} = T\Omega\Big(\sqrt{\frac{T}{c(A, \delta)}}\Big) \\ 
&\gamma^2_{t-1} = T\Omega\Big(\Big(\frac{T}{c(A, \delta)}\Big)^{1/2}\Big) \implies  \gamma^2_{t-1} = T\Omega\Big(\Big(\frac{T}{c(A, \delta)}\Big)^{3/4}\Big) \\
&\gamma^2_{t-1} = T\Omega\Big(\Big(\frac{T}{c(A, \delta)}\Big)^{\frac{2^k-1}{2^k}}\Big) \implies  \gamma^2_{t-1} = T\Omega\Big(\Big(\frac{T}{c(A, \delta)}\Big)^{\frac{2^{k+1} - 1}{2^{k+1}}}\Big)\\
&\implies \hdots \implies \gamma^2_{t-1} = T\Omega\Big(\frac{T}{c(A, \delta)}\Big) 
\end{align*}
From Eq.~\eqref{contra_eq3},\eqref{gamma_t} since 
\[
\sqrt{\frac{c(A,\delta)}{\gamma_{t-1}^2 }} \leq \sqrt{\frac{16ec(A,\delta)}{\sigma_{\min}(AA^{\prime}) T}} = \beta_1
\]
it follows that
\begin{align}
{Y_t} &{\succeq} \Bigg(1 - \underbrace{\sqrt{\frac{16ec(A,\delta)}{\sigma_{\min}(AA^{\prime}) TR^2}}}_{=\beta_1}\Bigg)A {Y_{t-1}} A^{\prime} + \frac{R^2TI}{8}\label{recur_eq}
\end{align}
The goal here is to refine the upper bound for $\sqrt{\frac{c(A,\delta)}{\gamma_{t-1}^2 }}$ such that 
\[
\sqrt{\frac{c(A,\delta)}{\gamma_{t-1}^2 }} \leq \frac{C}{T}
\]
Eq.~\eqref{recur_eq} implies that 
\begin{align*}
Y_t &\overset{(a)}{\succeq} \frac{TR^2}{8}\sum_{k=1}^{\min{(\lfloor \frac{1}{\beta_1}\rfloor, \frac{T}{4})}} (1 - \beta_1)^k A^k A^{k \prime} + \frac{R^2TI}{16} \\
&\overset{(b)}{\succeq} \frac{TR^2}{16e}\sum_{k=1}^{\min{(\lfloor \frac{1}{\beta_1}\rfloor, \frac{T}{4})}}  A^k A^{k \prime} + \frac{R^2TI}{16} \\
&{\succeq} \frac{R^2T}{16e} \Gamma_{\lfloor \frac{1}{\beta_1}\rfloor}(A) + \frac{R^2TI}{16}
\end{align*}
Here 
\begin{equation}
\label{beta}
\beta_1 = \sqrt{\frac{16ec(A,\delta)}{\sigma_{\min}(AA^{\prime}) R^2T}}
\end{equation}
Due to the choice of $T, d$ we will usually have $\lfloor \frac{1}{\beta_1}\rfloor^2 \leq \frac{T}{4}$. $(a)$ follows by successively expanding Eq.~\eqref{recur_eq}, $(b)$ follows because $(1 - \beta_1)^{\lfloor \frac{1}{\beta_1}\rfloor} \geq \frac{e^{-1}}{2}$ since $\beta_1 \leq 1/2$ by Eq.~\eqref{bet_cond}. Then we can conclude that 
\begin{align}
\gamma_{t-1}^2 &\geq {\sigma_{\min}(AY_tA^{\prime})} \nonumber \\
&\geq \frac{R^2T\sigma_{\min}(A A^{\prime})\sigma_{\min}(\Gamma_{\lfloor \frac{1}{\beta_1}\rfloor}(A)) }{16e} \label{beta_inter} 
\end{align}
which gives us
\begin{align}
\sqrt{\frac{c(A, \delta)}{\gamma_{t-1}^2}} &\leq \Big(\frac{ 16ec(A, \delta)}{ R^2T\sigma_{\min}(A A^{\prime})\sigma_{\min}(\Gamma_{\lfloor \frac{1}{\beta_1}\rfloor}(A))}\Big)^{1/2}=\beta_2  \label{recursion}
\end{align}
It is clear from Eq.~\eqref{recursion} that we get a recursion during the refinement process. Specifically at the $k^{th}$ repetition of Eq.~\eqref{recur_eq} up to Eq.~\eqref{recursion} we get,
\begin{equation}
\label{betak_rec}
\beta_k =   \Big(\frac{ 16ec(A, \delta)}{ R^2T\sigma_{\min}(A A^{\prime})\sigma_{\min}(\Gamma_{\lfloor \frac{1}{\beta_{k-1}}\rfloor}(A))}\Big)^{1/2}
\end{equation}
Now $\beta_k$ is a non-increasing sequence. We show this by induction. Since $\sigma_{\min}(\Gamma_t(A)) \geq 1$ and 
\[
\sqrt{\frac{16ec(A,\delta)}{\sigma_{\min}(AA^{\prime}) R^2T}} \leq 1
\]
it follows trivially that $\beta_2 \leq \beta_1$. Assume our hypothesis holds for all $k \leq m$. Then since $\Gamma_{t_1}(A) \succeq \Gamma_{t_2}(A)$ whenever $t_1 \geq t_2$ we have
\begin{align*}
\Big(\frac{ 16ec(A, \delta)}{ R^2T\sigma_{\min}(A A^{\prime})\sigma_{\min}(\Gamma_{\lfloor \frac{1}{\beta_{m}}\rfloor}(A))}\Big)^{1/2} &\leq \Big(\frac{ 16ec(A, \delta)}{R^2 T\sigma_{\min}(A A^{\prime})\sigma_{\min}(\Gamma_{\lfloor \frac{1}{\beta_{m-1}}\rfloor}(A))}\Big)^{1/2} \\
\beta_{m+1} &\leq \beta_{m}
\end{align*}
and we have proven our hypothesis. To now find the best upper bound for $\sqrt{\frac{c(A, \delta)}{\gamma_{t-1}^2}}$ we find the steady state solution for Eq.~\eqref{betak_rec}, \textit{i.e.}
\begin{equation}
\beta_0^2\sigma_{\min}(\Gamma_{\lfloor \frac{1}{\beta_0}\rfloor}(A))  =   \Big(\frac{ 16ec(A, \delta)}{ R^2T\sigma_{\min}(A A^{\prime})}\Big) \label{final_sol}
\end{equation}
Now a solution for $\beta_0 \in (\frac{2C}{\sigma_{\min}(A A^{\prime})TR^2}, 1)$. To see this set $\beta_0 = 1$, then LHS $>$ RHS. Next set $\beta_0 = \frac{2C}{\sigma_{\min}(A A^{\prime})TR^2}$ then since $\rho_{\min}(A^t) \geq \sigma_{\min}(A^{t})$ and $\rho_i \leq 1+C/T$ we see that 
\begin{align*}
\frac{4C^2\sigma_{\min}(\Gamma_{\lfloor \frac{1}{\beta_0}\rfloor}(A))}{\sigma_{\min}(A A^{\prime})^2 T^2} &\leq \frac{4\sum_{t=0}^{\sigma_{\min}(A)^2R^2T/2C} \rho_{\min}(A)^{2t}}{R^4\sigma_{\min}(A A^{\prime})^2T^2/C^2} \\
&\leq \frac{2eC}{\sigma_{\min}(A)^2T}\leq \Big(\frac{ 16ec(A, \delta)}{ R^2T\sigma_{\min}(A A^{\prime})}\Big)
\end{align*}
and LHS $<$ RHS because $C$ is a constant but $c(A, \delta)$ is growing logarithmically with $T$ (and we can pick $T$ accordingly). By ensuring that 
\begin{align*}
T \geq \frac{64ec(A, \delta)}{R^2\sigma_{\min}(A)^2}
\end{align*}
we also ensure that $\beta_1 < 1/2$ and as a result all subsequent $\beta_k < 1/2$. Now we can conclude that whenever $T \geq \frac{64ec(A, \delta)}{\sigma_{\min}(A)^2}$ we get  Eq.~\eqref{recur_eq} 
\begin{equation}
\label{recur_eq2}
{Y_t} {\succeq} (1 - \beta_0 )A {Y_{t-1}} A^{\prime} + \frac{TR^2I}{8}
\end{equation}
and following as before we get with probability at least $1-\delta$
\begin{align}
Y_T &\succeq \frac{TR^2}{16e} \Gamma_{\lfloor \frac{1}{\beta_0}\rfloor}(A) + \frac{TR^2I}{16} \label{stable_yt}
\end{align}
where $\beta_0$ is solution to 
\[
\beta_0^2\sigma_{\min}(\Gamma_{\lfloor \frac{1}{\beta_0}\rfloor}(A))  =   \Big(\frac{ 16ec(A, \delta)}{ TR^2\sigma_{\min}(A A^{\prime})}\Big)
\]
and 
\[
c(A, \delta) = { {16d} \log{( \text{tr}(\Gamma_T - I)+1)} + {32d}\log{\frac{15T}{2\delta}}}
\]

It should be noted that $\frac{1}{\beta_0}$ will equal $\frac{\sqrt{\alpha(d)}TR^2\sigma_{\min}(A A^{\prime})}{16ec(A, \delta)}$, \textit{i.e.}, grow linearly with $T$, as shown in Proposition~\ref{gramian_lb}. Then it can be seen from Eq.~\eqref{stable_yt} that 
\begin{align}
Y_T &\succeq \frac{TR^2}{16e} \Gamma_{\lfloor \frac{1}{\beta_0}\rfloor}(A) + \frac{TR^2I}{16}  \nonumber\\
Y_T &\succeq \frac{TR^2}{16e} \sigma_{\min}(\Gamma_{\lfloor \frac{1}{\beta_0}\rfloor}(A)) + \frac{TR^2I}{16}\nonumber \\
&\succeq \frac{TR^2}{16e} \frac{TR^2\sqrt{\alpha(d)}\sigma_{\min}(A A^{\prime})}{16ec(A, \delta)C(d)} I = \frac{\sqrt{\alpha(d)} T^2R^4\sigma_{\min}(A A^{\prime})}{256e^2c(A, \delta)} \label{quad}
\end{align}
	
\section{Invertibility of $Y_T$ in explosive systems}	
\label{explosive}
Assume for this case that $\eta_{t} = L \bar{\eta}_t$ where $\{\bar{\eta}_t\}_{t=1}^T$ are i.i.d and all elements of $\bar{\eta}_t$ are independent. Further $L$ is full row rank. Define $\sigma_{\min}(LL^{\prime}) = R^2 > 0$. Let $\sigma_{\max}(LL^{\prime})=1$. Recall that 
\begin{align*}
z_t &= A^{-t}x_t\\
&= x_0 + \sum_{\tau=1}^{t} A^{-\tau} \eta_{\tau}
\end{align*}
Define
\begin{align*}
z(T, t) &= \Bigg(\sum_{s=0}^{t-1}A^{-s} \eta_{T+1-t+s}\Bigg)
\end{align*}
where $z(T, t) = 0$ for $t \leq 0, t \geq T+1$. An observation that will be useful is that $z(t)$ is statistically independent of $z(T) - z(t)$. Recall that $U_T = A^{-T}\sum_{t=1}^T x_t x_t^{\prime}A^{-T \prime}, F_T = \sum_{t=1}^T A^{-t+1} z_T z_T^{\prime}A^{-t+1 \prime}$
\subsection*{Bounding $||F_T - U_T||_{\text{op}}$}
	Observe that 
	\begin{align}
	z(T) - z(T-t) &= A^{-T+t-1}\Bigg(\sum_{s=0}^{t-1}A^{-s} \eta_{T+1-t+s}\Bigg) =A^{-T+t-1} z(T, t) \label{zt_diff}
	\end{align}
	Then
	\begin{align*}
	|| U_T - F_T||_{\text{op}} = &||\sum_{t=1}^{T} A^{-t}(z(T-t)z(T-t)^{'} - z(T)z(T)^{'})(A^{-t})^{'}||_2
	\end{align*}
	Let $u = z(T-t), v=z(T)$ and since $uu^{\prime} - vv^{\prime}= (u-v)u^{\prime} + u(u-v)^{\prime} - (u-v)(u-v)^{\prime}$ we have
	\begin{align}
	|| U_T - F_T||_{\text{op}} &\leq ||\sum_{t=1}^{T} A^{-t}(z(T-t) - z(T))(z(T-t) - z(T))^{'}A^{-t'}||_2 \nonumber\\
	&+||\sum_{t=1}^{T} A^{-t} ((z(T-t) - z(T))z(T-t)^{'} + z(T-t)(z(T-t)^{'} - z(T)^{'}) A^{-t '}||_2 \label{ft_ut}
	\end{align}
	The reason we decompose it in such a way is so that we can represent the cross terms $(z(T-t) - z(T))z(T-t)^{\prime}$ as the product of independent terms. This will be useful in using Hanson--Wright bounds as we show later.
	
	First we bound 
	$$ ||\sum_{t=1}^{T} A^{-t}(z(T-t) - z(T))(z(T-t) - z(T))^{'}A^{-t'}||_2$$ 
	
	From Eq.~\eqref{zt_diff} we see that $A^{-t}(z(T-t) - z(T)) = -A^{-T-1}z(T, t)$, then 
	\begin{align*}
	A^{-T-1}z(T, t) &= A^{-T-1}[0, 0, \ldots, \underbrace{I}_{T-t+1 \text{ term}}, A^{-1}, A^{-2} , \ldots, A^{-t+1}] \begin{bmatrix}
	\eta_1 \\ 
	\eta_2 \\
	\vdots \\
	\eta_T
	\end{bmatrix}
	\end{align*}
	Since $\sum_{t=1}^{T} (z(T-t) - z(T))(z(T-t) - z(T))^{'} \preceq \sum_{t=1}^{T} \text{trace}((z(T-t) - z(T))(z(T-t) - z(T))^{'}) I$. Based on these observations we have
	\begin{align*}
	&||\sum_{t=1}^{T} A^{-t}(z(T-t) - z(T))(z(T-t) - z(T))^{'}A^{-t'}||_2	= ||\sum_{t=1}^{T} A^{-T-1} z(T, t)z(T, t)^{'} A^{-T-1 '}||_2 \\
	&\leq  \text{trace}(A^{-T-1}\sum_{t=1}^{T} z(T, t)z(T, t)^{'}  A^{-T-1 '}) = \sum_{t=1}^{T} z(T, t)^{'}  A^{-T-1 '} A^{-T-1} z(T, t) = \tilde{\eta}^{'} \tilde{A}^{'} \tilde{A} \tilde{\eta} \\
	\end{align*}
	where $\tilde{\eta} = \begin{bmatrix}
	\eta_1 \\ 
	\eta_2 \\
	\vdots \\
	\eta_T
	\end{bmatrix}$ and 
	\[
	\tilde{A} = \begin{bmatrix}
	0 & 0 &\ldots & 0 & A^{-T-1} \\
	0 & 0 & \ldots & A^{-T-1} & A^{-T-2} \\
	\vdots & \vdots & \vdots & \vdots & \vdots \\
	A^{-T-1} & A^{-T-2} & \ldots & A^{-2T+1} & A^{-2T}
	\end{bmatrix}
	\]
	Since $\text{tr}(\tilde{A}\tilde{A}^{\prime}) =T\text{tr}(A^{-T-1}\Gamma_T(A^{-1})A^{-T-1 \prime})$. Applying Markov's Inequality (See Proposition~\ref{energy_markov}), we have with probability at least $1-\delta$ that 
	\begin{align}
	\tilde{\eta}^{'} \tilde{A}^{'} \tilde{A} \tilde{\eta} &\leq \frac{\text{tr}(\Ex[\tilde{A} \tilde{\eta}\tilde{\eta}^{'} \tilde{A}^{'}])}{\delta}  \leq \frac{\sigma_1(L)^2T\text{tr}(A^{-T-1}\Gamma_T(A^{-1})A^{-T-1 \prime})}{\delta} \label{final_diff}
	\end{align}
	Although this bound can be tightened by dependent Hanson--Wright (See Corollary~\ref{dep-hanson-wright}), there is no reason to do so as $\delta$ depends only logarithmically on $T$. In fact we get with probability at least $1 -\delta$ that 
	\begin{equation}
	    \tilde{\eta}^{'} \tilde{A}^{'} \tilde{A} \tilde{\eta}  \leq \Big(1 + \frac{1}{c} \log{\frac{1}{\delta}}\Big)(\sigma_1(L)^2T\text{tr}(A^{-T-1}\Gamma_T(A^{-1})A^{-T-1 \prime})) \label{tight_final_diff}
	\end{equation}
	
	Next we analyze the second term
	\begin{align*}
	&||\sum_{t=1}^{T} A^{-t} ((z(T-t) - z(T))z(T-t)^{'} + z(T-t)(z(T-t)^{'} - z(T)^{'}) A^{-t '}||_2
	\end{align*}
	Consider the summand $\sum_{t=1}^{T} A^{-t} ((z(T-t) - z(T))z(T-t)^{'} A^{-t \prime}$, then
	\begin{align}
	\label{cross_summand}
	\sum_{t=1}^{T} A^{-t} ((z(T-t) - z(T))z(T-t)^{'}A^{-t \prime} &= A^{-T-1}\sum_{t=1}^{T} z(T, t) z(T-t)^{'}A^{-t \prime}
	\end{align}
	We define scaled version of $z(T, t), z(T-t)$.
	\begin{align*}
	\tilde{z}(T, t)&= A^{-T-1}z(T, t) = A^{-T-1}\underbrace{[0, 0, \ldots, \underbrace{I}_{T-t+1 \text{ term}}, A^{-1}, A^{-2} , \ldots, A^{-t+1}]}_{A(T, t)} \begin{bmatrix}
	\eta_1 \\ 
	\eta_2 \\
	\vdots \\
	\eta_T
	\end{bmatrix} \\
 \tilde{z}(T-t)^{\prime} &= z(T-t)^{\prime}A^{-t \prime} = \underbrace{[\eta_1^{\prime}, \eta_2^{\prime}, \ldots, \eta_T^{\prime}]}_{\tilde{\eta}^{\prime}}\underbrace{\begin{bmatrix}
		A^{-t-1 \prime} \\ 
		A^{-t-2 \prime} \\
		\vdots \\
		A^{-T \prime} \\
		0 \\
		\vdots \\
		0
		\end{bmatrix}}_{A(T-t)^{\prime}} + x_0
	\end{align*}
	Then the probability of the second term can be written as 
	\begin{align}
	&\Pb(||\sum_{t=1}^{T} (\tilde{z}(T, t)\tilde{z}(T-t)^{'} + \tilde{z}(T-t)\tilde{z}(T, t)^{'})||_2 \geq z) \underbrace{\leq}_{\frac{1}{2}-\text{net}} 2 \times 5^{2d} \times\Pb(\bl \sum_{t=1}^{T} 2u^{'} \tilde{z}(T, t)\tilde{z}(T-t)^{'}v \bl ) \geq z/4) \nonumber \\
	&\leq 2 \times 5^{2d} \times \Pb\Bigg(\bl \tilde{\eta}^{\prime}{\Big(\sum_{t=1}^{T} A(T, t)^{\prime}A^{-T-1 \prime} uv^{\prime} A(T-t) + A(T-t)^{\prime} vu^{\prime}A^{-T-1} A(T, t)\Big)} \tilde{\eta} \bl \leq z/4 \Bigg) \label{cross}
	\end{align}
	To Eq.~\eqref{cross} apply Hanson-Wright inequality. For any $u, v$, due to the statistical independence of $z(T-t), z(T, t)$ we have
	$$\Ex[\sum_{t=1}^{T} 2u^{'} \tilde{z}(T, t)\tilde{z}(T-t)^{'}v] = 0$$
	
	We now need an upper bound on $||S||_2, ||S||_F$. Since $CD^{\prime} + DC^{\prime} \preceq CC^{\prime} + DD^{\prime}$
	
	\begin{align*}
	S &= \sum_{t=1}^{T} A(T, t)^{\prime}A^{-T-1 \prime} uv^{\prime} A(T-t) + A(T-t)^{\prime} vu^{\prime}A^{-T-1} A(T, t) \\	
	&=\sum_{t=1}^{T} \underbrace{A(T, t)^{\prime}A^{-(T+1)\epsilon \prime}}_{=C} \underbrace{A^{-(T+1)(1-\epsilon) \prime} uv^{\prime} A(T-t)}_{=D^{\prime}} + A(T-t)^{\prime} vu^{\prime}A^{-(T+1)(1-\epsilon)}A^{-(T+1)\epsilon} A(T, t)\\
	&\preceq \sum_{t=1}^{T} \underbrace{A(T, t)^{\prime}A^{-(T+1)\epsilon \prime}A^{-(T+1)\epsilon}A(T, t) }_{=CC^{\prime}}
	+\sum_{t=1}^{T} \underbrace{A(T-t)^{\prime} v u^{\prime} A^{-(T+1)(1-\epsilon)} A^{-(T+1)(1-\epsilon) \prime} uv^{\prime}  A(T-t)}_{=DD^{\prime}}   \\
	&\preceq \sigma_1^2(A^{-(T+1)\epsilon}) \sum_{t=1}^{T} A(T, t)^{\prime}A(T, t) + u^{\prime} A^{-(T+1)(1-\epsilon)} A^{-(T+1)(1-\epsilon) \prime} u \sum_{t=1}^{T} A(T-t)^{\prime} v v^{\prime}  A(T-t) \\
	&\preceq \sigma_1^2(A^{-(T+1)\epsilon}) \text{tr}\Big(\sum_{t=1}^{T} A(T, t)^{\prime}A(T, t)\Big)I + \sigma_1^{2}(A^{-(T+1)(1-\epsilon)}) \text{tr}\Big(\sum_{t=1}^{T} A(T-t)^{\prime} v v^{\prime}  A(T-t)\Big)I \\
	&\overset{(a)}{\preceq} 2T\sigma_1^2(A^{-(T+1)\epsilon}) \text{tr}(\Gamma_T(A^{-1})) I 
	\end{align*}
	Here $(a)$ follows because 
	$$A(T, t) A(T, t)^{\prime} = \Gamma_{t-1}(A), A(T-t)A(T-t)^{\prime} = \Gamma_{T-t}(A)$$

	Then whenever 
	\begin{equation}
	\label{T_req}
	T \geq T_0=\frac{2}{c}\Big(\log{\frac{1}{\delta} + \log{2} + 2d \log{5}}\Big)
	\end{equation}	
	Eq.~\eqref{cross} becomes with probability at least $1-\delta$ that
	\begin{equation}
	||\sum_{t=1}^{T} ((z(T-t) - z(T))z(T-t)^{'} + z(T-t)(z(T-t)^{'} - z(T)^{'})||_2 \leq 4T^2 \sigma_1^2(A^{-(T+1)\epsilon}) \text{tr}(\Gamma_T(A^{-1})) \label{cross_bound}
	\end{equation} 
	Then combining Eq.~\eqref{final_diff},\eqref{cross_bound} we get for $T \geq T_0$ given in Eq.~\eqref{T_req},  
	\begin{equation}
	||U_T - F_T||_{2} \leq \Bigg(4T^2 \sigma_1^2(A^{-(T+1)\epsilon}) \text{tr}(\Gamma_T(A^{-1}))+ \frac{T\text{tr}(A^{-T-1}\Gamma_T(A^{-1})A^{-T-1 \prime})}{\delta}\Bigg) \label{error_cum}
	\end{equation}
	with probability at least $1-2\delta$. We pick $\epsilon$ such that $(T+1)\epsilon = \lfloor \frac{T+1}{2} \rfloor$. In fact using Eq.~\eqref{tight_final_diff} instead of Eq.~\eqref{final_diff} we get 
	\begin{equation}
	||U_T - F_T||_{2} \leq \Bigg(4T^2 \sigma_1^2(A^{-(T+1)\epsilon}) \text{tr}(\Gamma_T(A^{-1}))+ \Big(1 + \frac{1}{c}\log{\frac{1}{\delta}}\Big)T\text{tr}(A^{-T-1}\Gamma_T(A^{-1})A^{-T-1 \prime})\Bigg) \label{tight_error_cum}
	\end{equation}
\subsection*{Bounding $U_T$}
	To give lower and upper bounds on $U_T$, we need to bound $F_T$. The steps involve
	\begin{align*}
	||U_T - F_T||_2 &\leq \Delta \\
	F_T &\succeq  V_{dn} \succ 0\\
	\implies U_T &\geq V_{dn} - \Delta I \\
	F_T &\preceq V_{up} \\
	\implies U_T &\preceq V_{up} + \Delta I \\
 	\end{align*}

	From Proposition~\ref{ft_inv} we get, with probability at least $1-2\delta$, 
	\begin{align*}
		F_T &\succeq \phi_{\min}(A)^2 \psi(A)^2 \delta^2 \sigma_{\min}(P^{-1})^2I \\
		F_T &\preceq \frac{\phi_{\max}(A)^2}{\sigma_{\min}(P)^2}(1+\frac{1}{c}\log{\frac{1}{\delta}})\text{tr}(P(\Gamma_T(A^{-1})-I)P^{\prime}) I 
	\end{align*}
	
	Define 
	\begin{align*}
	\Delta &= \frac{1}{2}\min{ \Bigg(\frac{\phi_{\max}(A)^2}{\sigma_{\min}(P)^2}(1+\frac{1}{c}\log{\frac{1}{\delta}})\text{tr}(P(\Gamma_T(A^{-1})-I)P^{\prime}), \phi_{\min}(A)^2 \psi(A)^2 \delta^2 \sigma_{\min}(P^{-1})^2\Bigg)} \\
	&= \frac{\phi_{\min}(A)^2 \psi(A)^2 \delta^2 \sigma_{\min}(P^{-1})^2}{2}
	\end{align*}
	Then in Eq.~\eqref{error_cum} by ensuring that
	\begin{align}
	&\Bigg(4T^2 \sigma_1^2(A^{-(T+1)\epsilon}) \text{tr}(\Gamma_T(A^{-1})) + \frac{T\text{tr}(A^{-T-1}\Gamma_T(A^{-1})A^{-T-1 \prime})}{\delta}\Bigg) \leq \frac{\phi_{\min}(A)^2 \psi(A)^2 \delta^2}{2\sigma_{\max}(P)^2}  \nonumber
	\end{align}
	we get with probability at least $1-4\delta$ (since this is the intersection of events governed by Eq.~\eqref{error_cum},\eqref{lb_ft},\eqref{ub_ft})
	\begin{align}
	U_T &\succeq \phi_{\min}(A)^2 \psi(A)^2 \delta^2 \sigma_{\min}(P^{-1})^2I - \frac{\phi_{\min}(A)^2 \psi(A)^2 \delta^2}{2\sigma_{\max}(P)^2}  I \succeq \frac{\phi_{\min}(A)^2 \psi(A)^2 \delta^2}{2\sigma_{\max}(P)^2}  I \label{ut-ft}	
	\end{align}
	Similarly, for the upper bound
	\begin{equation}
	U_T \preceq \frac{3\phi_{\max}(A)^2}{2\sigma_{\min}(P)^2}(1+\frac{1}{c}\log{\frac{1}{\delta}})\text{tr}(P(\Gamma_T(A^{-1})-I)P^{\prime}) I \label{ub-ft}
	\end{equation}
	
Thus with probability at least $1-4\delta$ we have 
	\begin{align}
Y_T&\succeq \frac{\phi_{\min}(A)^2 \psi(A)^2 \delta^2}{2\sigma_{\max}(P)^2}A^T A^{T \prime}\nonumber\\  
Y_T&\preceq\frac{3\phi_{\max}(A)^2}{2\sigma_{\min}(P)^2}(1+\frac{1}{c}\log{\frac{1}{\delta}})\text{tr}(P(\Gamma_T(A^{-1})-I)P^{\prime}) A^T A^{T \prime} \label{exp_bnds}
	\end{align}
whenever 
\begin{align}
&\Bigg(4T^2 \sigma_1^2(A^{-(T+1)\epsilon}) \text{tr}(\Gamma_T(A^{-1})) + \frac{T\text{tr}(A^{-T-1}\Gamma_T(A^{-1})A^{-T-1 \prime})}{\delta}\Bigg) \leq \frac{\phi_{\min}(A)^2 \psi(A)^2 \delta^2}{2\sigma_{\max}(P)^2}\label{t_exp_req}
\end{align}
	\section{Regularity and Invertibility}
\label{regularity_inv}
Through a counterexample in~\cite{nielsen2008singular}, Remark 4 in~\cite{phillips2013inconsistent} it is shown that unless a matrix is regular, the estimation of the parameters maybe asymptotically inconsistent. 

Recall $F_T$ from Eq.~\eqref{ut_ft}. Assume again that $\eta_{t} = L \bar{\eta}_t$ where $\{\bar{\eta}_t\}_{t=1}^T$ are i.i.d isotropic subGaussian and all elements of $\bar{\eta}_t$ are independent. Further $L$ is full row rank. Define $\sigma_{\min}(LL^{\prime}) = R^2 > 0$. Let $\sigma_{\max}(LL^{\prime})=1$ (this does not affect the main result as it appears only as a scaling). For the invertibility of $Y_T$ in explosive systems, it will be important that $F_T$ is invertible with high probability. It will turn out that invertibility of $F_T$ can be ensured by assuming regularity of $A$. This is Proposition 1 in~\cite{faradonbeh2017finite} and has been presented here for completeness. It will be useful to recall the definitions of $\phi_{\min}(A), \phi_{\max}(A)$ from Definition~\ref{outbox}.

\vspace{2mm}
We will show $F_{T}$ indeed has rank $d$ with probability $1$. Formally,
\begin{prop}
	\label{ft_inv}
	Let $A$ be regular, then we have with probability at least $1 - 2\delta$  
	\begin{align*}
	\sigma_{\min}(F_T) &\geq \frac{\phi_{\min}(A)^2}{\sigma_{\max}(P)^2} \psi(A)^2 \delta^2 \\
	\sigma_{\max}(F_T) &\leq \frac{\phi_{\max}(A)^2}{\sigma_{\min}(P)^2}(1+\frac{1}{c}\log{\frac{1}{\delta}})\text{tr}(P(\Gamma_T(A^{-1})-I)P^{\prime})
	\end{align*}
	where $A = P^{-1}\Lambda P$ is the Jordan decomposition of $A$.
\end{prop}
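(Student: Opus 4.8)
The plan is to pass to the Jordan coordinates of $A$, where $F_T$ from \eqref{ut_ft} decouples into the deterministic outbox functionals $\phi_{\min}(A),\phi_{\max}(A)$ of Definition~\ref{outbox} and a single random vector whose coordinates are controlled by the anti--concentration bound of Proposition~\ref{anti_conc1}. First I would substitute $A^{-t}=P^{-1}\Lambda^{-t}P$ and set $w=Pz_T$, so that
\begin{equation*}
F_T=P^{-1}\Big(\sum_{t=0}^{T-1}\Lambda^{-t}ww^{\prime}\Lambda^{-t\prime}\Big)P^{-1\prime}=:P^{-1}G(w)P^{-1\prime}.
\end{equation*}
Since $G(w)\succeq 0$, $\sigma_{\min}(P^{-1})=1/\sigma_{\max}(P)$ and $\sigma_{\max}(P^{-1})=1/\sigma_{\min}(P)$, this yields the sandwich $\sigma_{\min}(F_T)\geq \sigma_{\min}(G(w))/\sigma_{\max}(P)^2$ and $\sigma_{\max}(F_T)\leq \sigma_{\max}(G(w))/\sigma_{\min}(P)^2$. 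Reindexing by $i=t+1$, the matrix $G(w)$ is exactly $\sum_{i=1}^T\Lambda^{-i+1}ww^{\prime}\Lambda^{-i+1\prime}$ appearing in Definition~\ref{outbox}, so the whole proof reduces to bounding $\sigma_{\min}(G(w))$ and $\sigma_{\max}(G(w))$ for the random direction $w=Pz_T$.

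For the lower bound, the crucial observation is that $v\mapsto\sigma_{\min}(G(v))$ is homogeneous of degree two. Writing $a=\min_{1\le i\le d}|w_i|$, the rescaled vector $w/a$ lies in the outbox $S_d(1)$, hence
\begin{equation*}
\sigma_{\min}(G(w))=a^2\,\sigma_{\min}(G(w/a))\geq a^2\inf_{v\in S_d(1)}\sigma_{\min}(G(v))=a^2\,\phi_{\min}(A)^2.
\end{equation*}
Now $a=\min_{1\le i\le d}|P_i' z_T|$ is precisely the quantity of Proposition~\ref{anti_conc1}; since $A\in\Sc_2$ has $\rho_{\min}(A)>1$, that proposition gives $a\geq\psi(A)\delta$ with probability at least $1-\delta$, so $\sigma_{\min}(F_T)\geq\phi_{\min}(A)^2\psi(A)^2\delta^2/\sigma_{\max}(P)^2$. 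Regularity of $A$ enters here only through $\phi_{\min}(A)>0$ (Proposition~\ref{reg_invertible}); without it the bound is vacuous.

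For the upper bound, homogeneity of $v\mapsto\sigma_{\max}(G(v))$ together with the definition of $\phi_{\max}(A)$ gives $\sigma_{\max}(G(w))\leq\|w\|_2^2\,\phi_{\max}(A)^2$, so it remains to control $\|w\|_2^2=\|Pz_T\|_2^2$. Using $x_0=0$ I would write $z_T=\sum_{\tau=1}^T A^{-\tau}\eta_\tau$ and stack the noise as $\tilde{\eta}=[\eta_1';\dots;\eta_T']'$, so that $Pz_T=\tilde{A}\tilde{\eta}$ with $\tilde{A}=[PA^{-1},\dots,PA^{-T}]$ and $\|Pz_T\|_2^2=\tilde{\eta}'\tilde{A}'\tilde{A}\tilde{\eta}$. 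The mean satisfies $\Ex\|Pz_T\|_2^2\leq\text{tr}(\tilde{A}\tilde{A}')=\text{tr}\big(P(\Gamma_T(A^{-1})-I)P'\big)$, since $\sum_{\tau=1}^T A^{-\tau}A^{-\tau\prime}=\Gamma_T(A^{-1})-I$. Applying the (dependent) Hanson--Wright deviation in the concentrated form already used in Proposition~\ref{energy_markov} gives, with probability at least $1-\delta$, $\|Pz_T\|_2^2\leq(1+\tfrac1c\log\tfrac1\delta)\,\text{tr}(P(\Gamma_T(A^{-1})-I)P')$, which is the stated bound after dividing by $\sigma_{\min}(P)^2$.

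A union bound over the two events yields the simultaneous statement with probability at least $1-2\delta$. I expect the main obstacle to be the lower bound: justifying that the random direction $w=Pz_T$ may be rescaled into $S_d(1)$ and then invoking anti--concentration to keep $\min_i|P_i'z_T|$ bounded away from zero. This is exactly where regularity is indispensable---otherwise $\phi_{\min}(A)=0$ and $G(w)$ can be singular---and it is also the source of the $\delta^2$ (rather than $\log\frac{1}{\delta}$) dependence characteristic of the explosive regime.
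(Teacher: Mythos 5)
Your proposal is correct and follows essentially the same route as the paper's proof: pass to Jordan coordinates so that $F_T = P^{-1}\big(\sum_{i=1}^T\Lambda^{-i+1}vv^{\prime}\Lambda^{-i+1\prime}\big)P^{-1\prime}$ with $v=Pz_T$, get the lower bound from Proposition~\ref{anti_conc1} together with the homogeneity/outbox definition of $\phi_{\min}(A)$ (with regularity entering only via $\phi_{\min}(A)>0$ from Proposition~\ref{reg_invertible}), get the upper bound from $\phi_{\max}(A)$ and a dependent Hanson--Wright bound on $z_T^{\prime}P^{\prime}Pz_T$, and finish with a union bound. The only cosmetic difference is that you make the rescaling-by-$\min_i|v_i|$ and $\|Pz_T\|^2$ homogeneity steps explicit, which the paper leaves implicit.
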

\begin{proof}
	Let $S_k = [z_T, A^{-1}z_T, \ldots, A^{-k}z_T]$ where $z_T = A^{-T}x_T = A^{-T}(\sum_{k=0}^{T-1} A^{k} L \bar{\eta}_{T-k})$. Note that $L \bar{\eta}_t$ is continuous whenever $L$ is full row rank. Then $F_T = S_T S_T^{\prime}$. Observe that 
	\[
	A^{-t}z_{T} = P^{-1} \Lambda^{-t} P z_{T}
	\]
	Define the event 
	$$\Ec_{+}(\delta) = \{\min_{1 \leq i \leq d} |P_i^{\prime}z_T| > \psi(A) \delta\}$$ 
	where $\psi(A) $ is the lower bound shown in Proposition~\ref{anti_conc1} (which we can use due to the continuity of $L\bar{\eta}_t$) and $v = Pz_T$. Under $\Ec_{+}(\delta)$, $|v_i| > 0$. Now we need a lower bound for $\sigma_{\min}(F_T)$ under $\Ec_{+}(\delta)$
	\begin{align}
	F_T  &= P^{-1} \sum_{i=1}^T \Lambda^{-i+1} Pz_T z_T^{\prime}P^{\prime} \Lambda^{-i+1 \prime} P^{-1 \prime} = P^{-1} \sum_{i=1}^T \Lambda^{-i+1} vv^{\prime} \Lambda^{-i+1 \prime} P^{-1 \prime} \label{ft_eq} \\
	&\succeq \phi_{\min}(A)^2 \psi(A)^2 \delta^2  P^{-1} P^{-1 \prime} {\succeq} \frac{\phi_{\min}(A)^2}{\sigma_{\max}(P)^2} \psi(A)^2 \delta^2  I\label{lb_ft}
	\end{align}
	Further, since $A$ is regular we have that $\phi_{\min}(A) > 0$ from Proposition~\ref{reg_invertible}. Then with probability at least $1-\delta$ we have 
	\[
		\sigma_{\min}(F_T) \geq \frac{\phi_{\min}(A)^2}{\sigma_{\max}(P)^2} \psi(A)^2 \delta^2 > 0
	\]

For the upper bound, observe that $Pz_T$ is a sub-Gaussian random variable. Since 
$$||Pz_T z_T^{\prime}P^{\prime}|| \leq z_T^{\prime} P^{\prime} P z_T$$
and recalling that 
\[
z_T = \underbrace{[A^{-1}, A^{-2}, \ldots, A^{-T}]}_{\tilde{A}} \begin{bmatrix}
\eta_1 \\\eta_2 \\
\vdots \\
\eta_T
\end{bmatrix}
\]
we can use dependent Hanson Wright inequality (Corollary~\ref{dep-hanson-wright}) to bound $z_T^{\prime} P^{\prime} P z_T$. In Theorem~\ref{hanson-wright}, 
\begin{align*}
B &= \tilde{A}^{\prime} P^{\prime} P \tilde{A} \\
\Ex[z_T^{\prime} P^{\prime} P z_T] &= \text{tr}(P(\Gamma_T(A^{-1})-I)P^{\prime}) \sigma_1(L)^2 =   \text{tr}(P(\Gamma_T(A^{-1})-I)P^{\prime})\\
||B||_2, ||B||_F \leq \text{tr}(\tilde{A}^{\prime} P^{\prime} P \tilde{A}) &=  \text{tr}(P(\Gamma_T(A^{-1})-I)P^{\prime})
\end{align*}
Then with probability at least $1 -\delta$ we have 
\[
z_T^{\prime} P^{\prime} P z_T \leq (1 + \frac{1}{c}\log{\frac{1}{\delta}})\text{tr}(P(\Gamma_T(A^{-1})-I)P^{\prime})
\]
and we get from Eq.~\eqref{ft_eq}
\begin{align}
F_T  &\preceq P^{-1} \sum_{i=1}^T \Lambda^{-i+1} Pz_T z_T^{\prime}P^{\prime} \Lambda^{-i+1 \prime} P^{-1 \prime} \nonumber \\
&\preceq (z_T^{\prime}P^{\prime} Pz_T) \sup_{||v||_2=1}\sigma_{\max}\Big(P^{-1} \sum_{i=1}^T \Lambda^{-i+1}  vv^{\prime}  \Lambda^{-i+1 \prime} P^{-1 \prime}\Big)I \nonumber \\
 &\preceq\frac{\phi_{\max}(A)^2}{\sigma_{\min}(P)^2}(1+\frac{1}{c}\log{\frac{1}{\delta}})\text{tr}(P(\Gamma_T(A^{-1})-I)P^{\prime}) I \label{ub_ft}
\end{align}
Then we have with probability at least $1-2\delta$ 
\begin{align}
F_T &\succeq \frac{\phi_{\min}(A)^2}{\sigma_{\max}(P)^2} \psi(A)^2 \delta^2 I \\
F_T &\preceq \frac{\phi_{\max}(A)^2}{\sigma_{\min}(P)^2}(1+\frac{1}{c}\log{\frac{1}{\delta}})\text{tr}(P(\Gamma_T(A^{-1})-I)P^{\prime}) I \label{bnds_fT}
\end{align}
\end{proof}

	\section{Composite Result}
\label{composite_result_proof}
In this section we discuss error rates for regular matrices which may have eigenvalues anywhere in the complex plane. The key step is to recall that for every matrix $A$ it is possible to find $\tilde{P}$ such that 
\begin{align}
A = \tilde{P}^{-1} \underbrace{\begin{bmatrix}
A_{e} & 0  & 0 \\
0 & A_{ms} & 0 \\
0 & 0 & A_s 
\end{bmatrix}}_{=\tilde{A}}\tilde{P} \label{partition}
\end{align}
Here $A_{e}, A_{ms}, A_s$ are the purely explosive, marginally stable and stable portions of $A$. This follows because any matrix $A$ has a Jordan normal form $A = P^{-1} \Lambda P$, where $\Lambda$ is a block diagonal matrix and each block corresponds to an eigenvalue. We can always find $Q$ (a rearrangement matrix) such that $\Lambda$ is partitioned into two diagonal parts: explosive, marginally stable and stable, \textit{i.e.},
\begin{align}
A = P^{-1}Q^T \begin{bmatrix}
\Lambda_{e} & 0  & 0 \\
0 & \Lambda_{ms} & 0 \\
0 & 0 & \Lambda_s 
\end{bmatrix}QP
\end{align}
Clearly, $\tilde{P} = QP$. 
Since 
\begin{align}
X_t &= \sum_{\tau=1}^t A^{\tau-1}\eta_{t -\tau+1} \nonumber \\
\tilde{X}_t = \tilde{P}X_t &= \sum_{\tau=1}^t \tilde{A}^{\tau-1}\underbrace{\tilde{P}\eta_{t -\tau+1}}_{\tilde{\eta}_{t-\tau+1}}
\end{align}
Now, the transformed dynamics are as follows:
\begin{align*}
\tilde{X}_{t+1} &= \tilde{A}\tilde{X}_t + \tilde{\eta}_{t+1}
\end{align*}
where $\tilde{A}$ has been partitioned into explosive and stable components as Eq.~\eqref{partition}. Corresponding to $\tilde{A}$ partition $\tX_t, \tn_t$
\begin{align}
\tilde{X}_t = \begin{bmatrix}
\xe_t \\
\xms_t \\
\xs_t
\end{bmatrix}&, \tilde{\eta}_t = \begin{bmatrix}
\nee_t \\
\nms_t \\
\ns_t
\end{bmatrix}
\end{align}
\begin{align}
\tilde{Y}_T = \sum_{t=1}^T \tX_t \tX_t^{\prime} &= \sum_{t=1}^T\begin{bmatrix}
\xe_t (\xe_t)^{\prime} & \xe_t (\xms_t)^{\prime} & \xe_t (\xs_t)^{\prime}\\
\xms_t (\xe_t)^{\prime} & \xms_t (\xms_t)^{\prime} & \xms_t (\xs_t)^{\prime} \\
\xs_t (\xe_t)^{\prime} & \xs_t (\xms_t)^{\prime} & \xs_t (\xs_t)^{\prime}
\end{bmatrix}
\end{align}
We analyze the error of identification in the transformed system instead and show how it relates to the actual error. Note that $\tilde{P}$ is unknown, the transformation is done for ease of analysis. The invertibility of submatrix corresponding to stable and marginally stable components, \textit{i.e.},
\begin{align*}
X^{mss}_t = \begin{bmatrix}
X^{ms}_{t} \\
X^{s}_{t}
\end{bmatrix}
\end{align*}
follows from Theorem~\ref{main_result}. To see this let $A_e$ be a $d_e \times d_e$ matrix. Define 
\[
P_{mss} = \tilde{P}[d_e+1:d, :]
\]
\textit{i.e.}, $P_{mss}$ is the rectangular matrix formed by removing the rows of $\tilde{P}$ corresponding to the explosive part. Then, by definition, we have that 
\[
\begin{bmatrix}
\nms_t \\
\ns_t
\end{bmatrix} = P_{mss} \eta_t
\]
and 
\[
X_{t+1}^{mss}= \underbrace{\begin{bmatrix}
A_{ms} &0 \\
0 & A_s
\end{bmatrix}}_{A_{mss}}X_{t}^{mss} +  \begin{bmatrix}
\nms_{t+1} \\
\ns_{t+1}
\end{bmatrix}
\]
Further 
\[
\Ex[P_{mss}\eta_t \eta_t^{\prime} P_{mss}^{\prime}] = P_{mss}P_{mss}^{\prime} \succ 0
\]

Since all rows of $\tilde{P}$ are independent then $P_{mss}P_{mss}^{\prime}$ is invertible and $\{P_{mss}\eta_t\}_{t=1}^T$ are independent subGaussian vectors. Now this is the same set up as the general version of Theorem~\ref{main_result} discussed in Section~\ref{short_proof}. Since $A_{mss} \in \Sc_0 \cup \Sc_1$ only has stable and marginally stable components, it follows from the Eq.~\eqref{lower_bnd} that 
$$\sum_{t=1}^T X^{mss}_t  (X^{mss}_t)^{\prime} \succeq \frac{T}{4} \sigma_{\min}(P_{mss}P_{mss}^{\prime})I$$ 
with high probability. Then since $\sigma_{\min}(P_{mss}P_{mss}^{\prime}) \geq \sigma_{\min}(\tilde{P})^2 = R^2$, we have that $\sum_{t=1}^T X^{mss}_t  (X^{mss}_t)^{\prime} \succeq \frac{TR^2}{4}I$. Let $\sigma_{\max}(\tilde{P}) =1$. (this makes no difference to the results and $R$ can be interpreted as the inverse condition number)

Recall the definition of $\beta_0(\delta)$
$$\beta_0(\delta) = \inf{\Big\{\beta|\beta^2\sigma_{\min}(\Gamma_{\lfloor \frac{1}{\beta}\rfloor}(A))  \geq   \Big(\frac{ 8ec(A, \delta)}{ TR^2\sigma_{\min}(A A^{\prime})}\Big)\Big\}}$$
we refer to $\beta_0(\delta)$ as $\beta_0$. Following our discussion in Proposition~\ref{gramian_lb} we see that $\beta_0 > 0$ and since $\sigma_{\min}(\Gamma_t(A)) \geq \alpha(d)t$ we have that 
\[
\beta_0 \leq \frac{8ec(A, \delta)}{TR^2 \sigma_{\min}^2(A)C(d)} \implies \frac{1}{\beta_0} \geq \frac{T R^2\sigma_{\min}^2(A)C(d)}{8ec(A, \delta)}
\]
Define 
$$
V_e = (\sum_{t=1}^T \xe_t (\xe_t)^{\prime}), V_{s} = \frac{TR^2}{4}I, V_{ms} = \Big(\frac{TR^2}{8e} \Gamma_{\lfloor \frac{1}{\beta_0}\rfloor}(A_{ms}) \Big)
$$
where the invertibility in $V_e$ holds with high probability. Observe that $V_{ms} \preceq (\sum_{t=1}^T \xms_t (\xms_t)^{\prime}), V_{s} \preceq (\sum_{t=1}^T \xs_t (\xs_t)^{\prime})$ with high probability (follows from Eq.~\eqref{lower_bnd},\eqref{stable_yt}). This observation will be useful in proving the composite invertibility.

Although the technique to prove the invertibility of $\sum_{t=1}^T \tX_t \tX_t^{\prime}$ is similar in spirit to that of~\cite{faradonbeh2017finite}, it addresses additional difficulties arising due to the presence of a marginally stable block.
\begin{align}
B_{d \times d} &= \begin{bmatrix}
V_e^{-1/2} & 0 & 0   \\
0 & V_{ms}^{-1/2} & 0\\
0 & 0 & V_s^{-1/2}
\end{bmatrix}
\end{align}
We will show that $B \sum_{t=1}^T \tX_t \tX_t^{\prime} B^{\prime}$ is positive definite with high probability, \textit{i.e.},
\begin{align}
\sum_{t=1}^T B \tX_t \tX_t^{\prime}B^{\prime} &= \begin{bmatrix}
I  & \sum_{t=1}^T V_{e}^{-1/2}  \xe_t (\xms_t)^{\prime} V_{ms}^{-1/2 \prime} & \sum_{t=1}^T V_{e}^{-1/2}  \xe_t (\xs_t)^{\prime} V_{s}^{-1/2 \prime}\\
\sum_{t=1}^T V_{ms}^{-1/2} \xms_t (\xe_t)^{\prime} V_{e}^{-1/2 \prime}  & \sum_{t=1}^T V_{ms}^{-1/2} \xms_t (\xms_t)^{\prime} V_{ms}^{-1/2 \prime} & \sum_{t=1}^T V_{ms}^{-1/2} \xms_t (\xs_t)^{\prime} V_{s}^{-1/2 \prime} \\
\sum_{t=1}^T V_{s}^{-1/2} \xs_t (\xe_t)^{\prime} V_{e}^{-1/2 \prime}  & \sum_{t=1}^T V_{s}^{-1/2} \xs_t (\xms_t)^{\prime} V_{ms}^{-1/2 \prime} & \sum_{t=1}^T V_{s}^{-1/2} \xs_t (\xs_t)^{\prime} V_{ms}^{-1/2 \prime}
\end{bmatrix}
\end{align}
We already showed that lower submatrix is invertible. To show that the entire matrix is invertible we need to show  
\[
||V_{e}^{-1/2}  \sum_{t=1}^T \xe_t (\xms_t)^{\prime} V_{ms}^{-1/2 \prime}||, ||V_{e}^{-1/2}  \sum_{t=1}^T \xe_t (\xs_t)^{\prime} V_{s}^{-1/2 \prime}|| < \gamma/8
\]
with high probability for some appropriate $\gamma$ and 
\[
\sigma_{\min}\Bigg(\begin{bmatrix}
V_{ms}^{-1/2} & 0\\
 0 & V_s^{-1/2}
\end{bmatrix} \sum_{t=1}^T \xmss_t (\xmss_t)^{\prime} \begin{bmatrix}
V_{ms}^{-1/2} & 0\\
0 & V_s^{-1/2}
\end{bmatrix}\Bigg) \geq \gamma > 0
\]
\subsection{Cross Terms have low norm}
\label{cross_low}
Define the following quantities:
\begin{align}
\alpha(A_e, \delta) &= \frac{3\phi_{\max}(A_e)^2 \sigma_{\max}^2(A_e)}{\phi_{\min}(A_e)^2 \sigma_{\min}(A_e)^2}\frac{\Big(1 + \frac{1}{c} \log{\frac{1}{\delta}}\Big)\text{tr}(P_e(\Gamma_T(A_e^{-1} - I))P_e^{\prime})}{\psi(A_e)^2 \delta^2} \label{alpha_exp} \\ 
T_{mc}(\delta) &= {\Bigg\{T \bl \alpha(A_e, \delta)\text{tr}(A_{e}^{-T + k_{mc}(T)}(A_{e}^{-T + k_{mc}(T)})^{\prime}) \leq \frac{\gamma^2}{256} \Bigg\}} \label{ms_exp} \\
k_{mc} &= k_{mc}(T) = T \Bigg(1 - \frac{R^2\gamma^2}{2048 de \lambda_1\Big(\Gamma_T(A_{ms})\Gamma^{-1}_{\lfloor \frac{1}{\beta_0(\delta)} \rfloor}(A_{ms}){\Big(1 + \frac{1}{c}\log{\frac{1}{\delta}}\Big)}\Big)}\Bigg) \label{k_mc} \\
T_{sc}(\delta) &= {\Bigg\{T \bl \alpha(A_e, \delta)\text{tr}(A_{e}^{-T + k_{sc}(T)}(A_{e}^{-T + k_{sc}(T)})^{\prime}) \leq \frac{\gamma^2}{256} \Bigg\}} \label{s_exp} \\ 
k_{sc} &= k_{sc}(T) = T \Bigg(1 - \frac{R^2\gamma^2}{1024 d\lambda_1\Big(\Gamma_T(A_{s}){\Big(1 + \frac{1}{c}\log{\frac{1}{\delta}}\Big)}\Big)}\Bigg) \label{k_s}
\end{align}
\begin{remark}
Note that $T_{mc}(\delta)$ (and $T_{sc}(\delta)$) is a set where there exists a minimum $T_{*} < \infty$ such that $T \in T_{mc}(\delta)$ whenever $T \geq T_{*} $. However, there might be $T < T_{*}$ for which the inequality of $T_{mc}(\delta)$ holds. Whenever we write $T \in T_{mc}(\delta)$ we mean $T \geq T_{*}$. 
\end{remark}
Second note that for every $T$, since $R, \gamma < 1$ we have 
\[
k_{sc}(T), k_{mc}(T) \geq \frac{T}{2}
\]
These quantities will be useful in stating the error bounds. We have
\begin{align*}
|| V_{e}^{-1/2}  \sum_{t=1}^T \xe_t (\xms_t)^{\prime} V_{ms}^{-1/2 \prime}|| &\leq ||V_{e}^{-1/2}  \sum_{t=1}^k \xe_t (\xms_t)^{\prime} V_{ms}^{-1/2 \prime}|| + ||V_{e}^{-1/2}  \sum_{t= k+1}^T \xe_t (\xms_t)^{\prime} V_{ms}^{-1/2\prime}||
\end{align*} 
We will need a more nuanced argument to upper bound Eq.~\eqref{err0} than that provided in~\cite{faradonbeh2017finite} (although it will be similar in flavor).
\begin{align}
\Pb(||V_{e}^{-1/2}  \sum_{t= 1}^T \xe_t (\xms_t)^{\prime} V_{ms}^{-1/2} || ) \label{err0}
\end{align}
For any $v_1, v_2$ we break $|v_1^{\prime}V_{e}^{-1/2}  \sum_{t= 1}^T \xe_t (\xms_t)^{\prime} V_{ms}^{-1/2}v_2|$ into two parts $$|v_1^{\prime}V_{e}^{-1/2}  \sum_{t= 1}^k \xe_t (\xms_t)^{\prime} V_{ms}^{-1/2}v_2|$$ 
and 
$$|v_1^{\prime}V_{e}^{-1/2}  \sum_{t= k+1}^T \xe_t (\xms_t)^{\prime} V_{ms}^{-1/2}v_2|$$.
For $|v_1^{\prime}V_{e}^{-1/2}  \sum_{t= k+1}^T \xe_t (\xms_t)^{\prime} V_{ms}^{-1/2}v_2|$ we have

\begin{align}
|v_1^{\prime} V_{e}^{-1/2}  \sum_{t=k+1}^T \xe_t (\xms_t)^{\prime} V_{ms}^{-1/2}  v_2| &\leq \underbrace{\sqrt{v_1^{\prime} V_{e}^{-1/2}  \sum_{t=k+1}^T \xe_t (\xe_t)^{\prime} V_{e}^{-1/2}  v_1}}_{\leq 1} \sqrt{v_2^{\prime} V_{ms}^{-1/2} \sum_{t=k+1}^T \xms_t (\xms_t)^{\prime}V_{ms}^{-1/2} v_2} \nonumber\\
&\leq \sqrt{v_2^{\prime} V_{ms}^{-1/2}\sum_{t=k+1}^T \xms_t (\xms_t)^{\prime} V_{ms}^{-1/2} v_2} \leq \sqrt{\sigma_1( V_{ms}^{-1/2}\sum_{t=k+1}^T \xms_t (\xms_t)^{\prime} V_{ms}^{-1/2} )} \nonumber\\
&\leq \sqrt{\lambda_1( \sum_{t=k+1}^T \xms_t (\xms_t)^{\prime} V_{ms}^{-1} )}\label{err_2}
\end{align}
To upper bound Eq.~\eqref{err_2} we simply need to upper bound $V_{ms}^{-1/2}\sum_{t=k+1}^T \xms_t (\xms_t)^{\prime}V_{ms}^{1/2}$. We can use dependent Hanson--Wright inequality (Corollary~\ref{dep-hanson-wright}) and Corollary~\ref{sub_sum}. 
Then from Corollary~\ref{sub_sum} and since $V_{ms}$ is deterministic we can conclude that with probability at least $1-\delta$ we get 
\begin{equation}
V_{ms}^{-1/2}  \sum_{t=k+1}^T \xms_t (\xms_t)^{\prime} V_{ms}^{-1/2}   \preceq \sum_{t=k+1}^T\text{tr}( V_{ms}^{-1/2} \Gamma_t(A_{ms}) V_{ms}^{-1/2} )\Big(1 + \frac{1}{c}\log{\frac{1}{\delta}}\Big)I \label{errf_1}
\end{equation}
We can upper bound the deterministic quantity in Eq.~\eqref{errf_1} as
\begin{align}
\sum_{t=k+1}^T\text{tr}( V_{ms}^{-1/2}  \Gamma_t(A)  V_{ms}^{-1/2}) &\leq d\lambda_1(\sum_{t=k+1}^T\Gamma_t(A_{ms}) V_{ms}^{-1}) \nonumber\\
&= d\lambda_1\Big(\frac{8e}{TR^2}\sum_{t=k+1}^T\Gamma_t(A_{ms}) \Gamma_{\lfloor \frac{1}{\beta_0(\delta)} \rfloor}(A_{ms})^{-1}\Big) \nonumber \\
&\leq d\lambda_1\Big(\frac{8e(T-k)}{TR^2}\Gamma_T(A_{ms}) \Gamma_{\lfloor \frac{1}{\beta_0(\delta)} \rfloor}(A_{ms})^{-1}\Big)\label{errf_0}
\end{align} 
The last inequality holds because the eigenvalues of $P^{-1/2} Q P^{-1/2}$ are the same as $QP^{-1}$ and non--negative whenever $P, Q$ are psd matrices. The normalized gramian term, $\Gamma_t(A_{ms}) \Gamma_{\lfloor \frac{1}{\beta_0(\delta)} \rfloor}(A_{ms})^{-1}$, appears in Eq.~\eqref{errf_0} only because $V_{ms}$ is deterministic. This will help us in getting non--trivial upper bounds for the cross terms of explosive and marginally stable pair. The key is the choice of $k$. In Proposition~\ref{gramian_ratio} we showed that $\lambda_1(\Gamma_{t_1} \Gamma_{t_2}^{-1})$ only depends on the ratio of $t_1/t_2$ and $A_{ms}$ and not on the specific values of $t_1, t_2$. Note that due to Proposition~\ref{gramian_ratio} the normalized gramian term $\Gamma_T(A_{ms})\Gamma^{-1}_{\lfloor \frac{1}{\beta_0(\delta)} \rfloor}(A_{ms})$ has spectral radius that is at most polynomial in $T\beta_0(\delta)$. Since $\beta_0(\delta) \approx \frac{\log{T}}{T} \times \log{\frac{1}{\delta}}$, we get that $$\lambda_1(\Gamma_T(A_{ms})\Gamma^{-1}_{\lfloor \frac{1}{\beta_0(\delta)} \rfloor}(A_{ms})) = \text{poly}\Big(\log{T}, \log{\frac{1}{\delta}}\Big)$$
Our choices of $T_{mc}(\delta), k_{mc}(T)$ in Eq.~\eqref{ms_exp},\eqref{k_mc} are motivated by the preceding discussion. We set $k = k_{mc}(T)$ and we have that $d\lambda_1\Big(\frac{8e(T-k)}{TR^2}\Gamma_T(A_{ms}) \Gamma_{\lfloor \frac{1}{\beta_0(\delta)} \rfloor}(A_{ms})^{-1}\Big) \leq \frac{\gamma^2}{256}$  (check by directly substituting $k=k_{mc}(T)$ in Eq.~\eqref{errf_0}) and as a result from Eq.~\eqref{err_2} 
\[
|v_1^{\prime} V_{e}^{-1/2}  \sum_{t=k+1}^T \xe_t (\xms_t)^{\prime} V_{ms}^{-1/2}  v_2|  \leq \frac{\gamma}{16}
\]
for arbitrary $v_1, v_2$. Similarly for the second part
\begin{align}
|v_1^{\prime} V_{e}^{-1/2}  \sum_{t=1}^k \xe_t (\xms_t)^{\prime} V_{ms}^{-1/2}  v_2| &\leq \underbrace{\sqrt{v_1^{\prime} V_{e}^{-1/2}  \sum_{t=1}^k \xe_t (\xe_t)^{\prime} V_{e}^{-1/2}  v_1}}_{a_1} \underbrace{\sqrt{v_2^{\prime}V_{ms}^{-1/2}  \sum_{t=1}^k \xms_t (\xms_t)^{\prime} V_{ms}^{-1/2}  v_2}}_{\leq 1} \label{err_3}
\end{align}

For the choice of $k=k_{mc}$ the other term can be simplified as
\begin{align}
a_1 &= \sqrt{v_1^{\prime} V_{e}^{-1/2}  \sum_{t=1}^k \xe_t (\xe_t)^{\prime} V_{e}^{-1/2}  v_1} \leq \sqrt{\sigma_1(V_{e}^{-1/2}  \sum_{t=1}^k \xe_t (\xe_t)^{\prime} V_{e}^{-1/2} )} \leq \sqrt{\lambda_1(\sum_{t=1}^k \xe_t (\xe_t)^{\prime} V_{e}^{-1})} \nonumber\\
&\leq \sqrt{\text{tr}(\sum_{t=1}^k \xe_t (\xe_t)^{\prime} V_{e}^{-1})} \label{err_1}
\end{align}
 By ensuring that both $T, k = k_{mc} (\text{which is }\geq T/2) \in T_u(\delta)$ (from Table~\ref{notation}) we have from Eq.~\eqref{exp_bnds} that
 \begin{align*}
\sum_{l=1}^k  \xe_t (\xe_t)^{\prime} &\preceq\frac{3\phi_{\max}(A_e)^2}{2\sigma_{\min}(P_e)^2}(1+\frac{1}{c}\log{\frac{1}{\delta}})\text{tr}(P_e(\Gamma_T(A_e^{-1})-I)P_e^{\prime}) A_e^k A_e^{k \prime}  \nonumber \\
 V_e &\succeq \frac{\phi_{\min}(A_e)^2 \psi(A_e)^2 \delta^2}{2\sigma_{\max}(P_e)^2}A_e^T A_e^{T \prime}\nonumber\\  
 \end{align*}
 Define 
\[
\alpha(A_e, \delta) = \frac{3\phi_{\max}(A_e)^2 \sigma_{\max}^2(A_e)}{\phi_{\min}(A_e)^2 \sigma_{\min}(A_e)^2}\frac{\Big(1 + \frac{1}{c} \log{\frac{1}{\delta}}\Big)\text{tr}(P_e(\Gamma_T(A_e^{-1}) - I)P_e^{\prime})}{\psi(A_e)^2 \delta^2}
\]
and we can conclude 
\[
 \sqrt{\text{tr}(\sum_{t=1}^k \xe_t (\xe_t)^{\prime} V_{e}^{-1})} \leq  \sqrt{\alpha(A_e, \delta)\text{tr}(A_e^{-T + k}(A_e^{-T + k})^{\prime})}
\]
with probability at least $1-2\delta$. Since $T \in T_{mc}(\delta)$ we have 
\begin{equation}
a_1 \leq \sqrt{\alpha(A_e, \delta)\text{tr}(A_e^{-T + k}(A_e^{-T + k})^{\prime})} \leq  \frac{\gamma}{16} \label{errf_22}
\end{equation}
with probability at least $1-2\delta$. Then combining Eq.~\eqref{err_2},\eqref{errf_1},\eqref{err_3},\eqref{errf_22} we get with probability at least $1-4\delta$ that
\begin{align} 
|v_1^{\prime}V_{e}^{-1/2}  \sum_{t= 1}^T \xe_t (\xms_t)^{\prime} V_{ms}^{-1/2} v_2| &\leq \frac{\gamma}{8}
\end{align}
This implies with probability at $1-4 \delta$ we have 
\begin{align}
||V_{e}^{-1/2}  \sum_{t= 1}^T \xe_t (\xms_t)^{\prime} V_{ms}^{-1/2} || &\leq \frac{\gamma}{8}\label{fin_err}
\end{align}
We have a similar assertion for the stable--explosive block but with $T \in T_{sc}(\delta)$ and $k= k_{sc}(T)$. 
\begin{align}
||V_{e}^{-1/2}  \sum_{t= 1}^T \xe_t (\xs_t)^{\prime} V_{s}^{-1/2} || &\leq \frac{\gamma}{8}\label{fin_err2}
\end{align}
It should be noted that  $T \in T_{sc}(\delta), T_{mc}(\delta)$ are both poly logarithmic in $\delta$ because of $A^{-T + k_{mc}}$ (or $A^{-T + k_{sc}}$) term which is exponentially decaying.
\begin{remark}
Whenever $T \in T_{sc}(\delta) , T_{mc}(\delta)$, the other conditions on $T$ such as $T/2 \in T_{u}(\delta)$ or $T \geq T_{s}(\delta) \vee T_{ms}(\frac{\delta}{2T})$ for the invertibility of the individual stable, marginally stable blocks are satisfied simultaneously (or are trivial to satisfy) and we do not state them explicitly. 
\end{remark}
\subsection{Norm of scaled $\sum_{t=1}^T \xmss_t(\xmss_t)^{\prime}$ is high}
\label{highnorm_gen}
Now we need to check 
\[
\sigma_{\min}\Bigg(\begin{bmatrix}
V_{ms}^{-1/2} & 0\\
0 & V_s^{-1/2}
\end{bmatrix} \sum_{t=1}^T \xmss_t (\xmss_t)^{\prime} \begin{bmatrix}
V_{ms}^{-1/2} & 0\\
0 & V_s^{-1/2}
\end{bmatrix}\Bigg) \geq \gamma > 0
\]
Since from Theorem~\ref{main_result} and its extension in Section~\ref{short_proof} it is known that with probability at least $1-\delta$ we have $\sum_{t=1}^T \xmss_t (\xmss_t)^{\prime} \succeq R^{2} \frac{TI}{4}$ for some fixed $R =\sigma_{\min}(\tilde{P}) > 0$, then we know that the Schur complement of $\sum_{t=1}^T \xmss_t (\xmss_t)^{\prime}$ is invertible too. For shorthand let 
\[
M = \sum_{t=1}^T \xmss_t (\xmss_t)^{\prime} = \begin{bmatrix} 
M_{11} & Q^{\prime} \\
Q & M_{22}
\end{bmatrix}
\]
Then the Schur complement is 
\[
M/M_{11} = M_{22} - Q M_{11}^{-1} Q^{\prime}
\]
Since $\sigma_{\min}(M) \geq R^2 \frac{TI}{4}$ then from Corollary 2.3 in~\cite{Liu2005} we have that 
\[
\sigma_{\min}(M/M_{11}) \geq R^2 \frac{T}{4}
\]
Since $M_{22} \preceq \sum_{t=0}^{T-1}\text{tr}(\Gamma_t(A_s))\Big(1 + \frac{1}{c}\log{\frac{1}{\delta}}\Big)I$ with probability at least $1-\delta$. We see that with probability at least $1-\delta$
\begin{equation}
M_{22}^{-1/2}(M/M_{11})M_{22}^{-1/2} = I - M_{22}^{-1/2} Q M_{11}^{-1/2} M_{11}^{-1/2} Q^{\prime} M_{22}^{-1/2}\succeq \frac{R^2}{4 \text{tr}(\Gamma_T(A_s))(1 + \frac{1}{c}\log{\frac{1}{\delta}})}I \label{lb_schur}
\end{equation}
Since $A_s$ is stable $\text{tr}(\Gamma_T(A_s)) \leq \text{tr}(\Gamma_{\infty}(A_s)) < \infty$. Define 
\begin{equation}
\omega(\delta) = \frac{R^2}{4 \text{tr}(\Gamma_T(A_s))(1 + \frac{1}{c}\log{\frac{1}{\delta}})} > 0 \label{omega}
\end{equation}
Then this implies that
\[
\begin{bmatrix} 
M_{11}^{-1/2} & 0 \\
0 & M_{22}^{-1/2}
\end{bmatrix} M \begin{bmatrix} 
M_{11}^{-1/2} & 0 \\
0 & M_{22}^{-1/2}
\end{bmatrix}= \begin{bmatrix} 
I &M_{11}^{-1/2}Q^{\prime}M_{22}^{-1/2} \\
M_{22}^{-1/2}QM_{11}^{-1/2} & I
\end{bmatrix} \succeq \frac{\omega(\delta)}{4} I
\]
because for any $v = \begin{bmatrix}
v_1 \\
v_2
\end{bmatrix}$ we have 
\begin{align*}
v^{\prime}\begin{bmatrix} 
I &\underbrace{M_{11}^{-1/2}QM_{22}^{-1/2}}_{=D^{\prime}} \\
M_{22}^{-1/2}Q^{\prime}M_{11}^{-1/2} & I
\end{bmatrix}v &= v_1^{\prime} v_1 + v_1^{\prime} D v_2 + v_2^{\prime} D^{\prime} v_1 + v_2^{\prime}v_2 \\
&= v_1^{\prime}v_{1}-2 \sqrt{1 - \omega(\delta)} ||v_2|| ||v_1|| + v_2^{\prime} v_2 \\
&\geq v_1^{\prime}v_{1}-2 \Big(1 - \frac{\omega(\delta)}{2}\Big) ||v_2|| ||v_1|| + v_2^{\prime} v_2 
\end{align*}
Since from Eq.~\eqref{lb_schur} it follows that $||D||^2 \leq 1 - \omega(\delta)$ we obtain 
\begin{align*}
v_1^{\prime}v_{1}-2 \sqrt{1 - \omega(\delta)} ||v_2|| ||v_1|| + v_2^{\prime} v_2 &=v_1^{\prime}v_{1}-2 \Big({1 - \frac{\omega(\delta)}{2}}\Big) ||v_2|| ||v_1|| + v_2^{\prime} v_2 \\
&= \Big({1 - \frac{\omega(\delta)}{2}}\Big)(||v_1|| -||v_2||)^2 + \Big(1- \sqrt{{1 - \frac{\omega(\delta)}{2}}}\Big)(||v_1||^2 + ||v_2||^2)\\
&\geq \Big(\frac{\omega(\delta)}{4}\Big)(||v_1||^2 + ||v_2||^2)
\end{align*}
Combining these observations we get 
\begin{align*}
v^{\prime}\begin{bmatrix} 
	I &\underbrace{M_{11}^{-1/2}QM_{22}^{-1/2}}_{=D} \\
	M_{22}^{-1/2}Q^{\prime}M_{11}^{-1/2} & I
\end{bmatrix}v &\geq  \Big(\frac{\omega(\delta)}{4}\Big)
\end{align*}

We have that 
\[
\sigma_{\min}\Big(\begin{bmatrix} 
M_{11}^{-1/2} & 0 \\
0 & M_{22}^{-1/2}
\end{bmatrix} M \begin{bmatrix} 
M_{11}^{-1/2} & 0 \\
0 & M_{22}^{-1/2}
\end{bmatrix} \Big) \geq \Big(\frac{\omega(\delta)}{4}\Big)
\]

Since $M_{22} \succeq V_s, M_{11} \succeq V_{ms}$ we have with probability at least $1-\delta$
\begin{equation}
\sigma_{\min}\Bigg(\begin{bmatrix}
V_{ms}^{-1/2} & 0\\
0 & V_s^{-1/2}
\end{bmatrix} \sum_{t=1}^T \xmss_t (\xmss_t)^{\prime} \begin{bmatrix}
V_{ms}^{-1/2} & 0\\
0 & V_s^{-1/2}
\end{bmatrix}\Bigg) \geq \Big(\frac{\omega(\delta)}{4}\Big) > 0 \label{fin_term3}
\end{equation}

Now we replace in Eq.~\eqref{fin_err},\eqref{fin_err2} $\gamma \rightarrow \frac{\sqrt{{\omega(\delta)}}}{32}$. Then that implies 
\begin{align*}
||\ V_{e}^{-1/2}  \sum_{t= 1}^T \xe_t (\xs_t)^{\prime}  V_{s}^{-1/2}|| &\geq \frac{\sqrt{{\omega(\delta)}}}{64} \\
||\ V_{e}^{-1/2}  \sum_{t= 1}^T \xe_t (\xms_t)^{\prime}  V_{ms}^{-1/2}|| &\geq \frac{\sqrt{{\omega(\delta)}}}{64} 
\end{align*}
\subsection{Lower Bound on $\sum_{t=1}^T \tX_t \tX_t^{\prime}$}
\label{lower_gen}
Recalling that 
\begin{align*}
\sum_{t=1}^T B \tX_t \tX_t^{\prime}B^{\prime} &= \begin{bmatrix}
I  & \sum_{t=1}^T V_{e}^{-1/2}  \xe_t (\xms_t)^{\prime} V_{ms}^{-1/2 \prime} & \sum_{t=1}^T V_{e}^{-1/2}  \xe_t (\xs_t)^{\prime} V_{s}^{-1/2 \prime}\\
\sum_{t=1}^T V_{ms}^{-1/2} \xms_t (\xe_t)^{\prime} V_{e}^{-1/2 \prime}  & \sum_{t=1}^T V_{ms}^{-1/2} \xms_t (\xms_t)^{\prime} V_{ms}^{-1/2 \prime} & \sum_{t=1}^T V_{ms}^{-1/2} \xms_t (\xs_t)^{\prime} V_{s}^{-1/2 \prime} \\
\sum_{t=1}^T V_{s}^{-1/2} \xs_t (\xe_t)^{\prime} V_{e}^{-1/2 \prime}  & \sum_{t=1}^T V_{s}^{-1/2} \xs_t (\xms_t)^{\prime} V_{ms}^{-1/2 \prime} & \sum_{t=1}^T V_{s}^{-1/2} \xs_t (\xs_t)^{\prime} V_{ms}^{-1/2 \prime}
\end{bmatrix}
\end{align*}
then it follows from Eq.~\eqref{fin_term3} that
\begin{align*}
\sum_{t=1}^T B \tX_t \tX_t^{\prime}B^{\prime} &\succeq \begin{bmatrix}
I  & \sum_{t=1}^T V_{e}^{-1/2}  \xe_t (\xms_t)^{\prime} V_{ms}^{-1/2 \prime} & \sum_{t=1}^T V_{e}^{-1/2}  \xe_t (\xs_t)^{\prime} V_{s}^{-1/2 \prime}\\
\sum_{t=1}^T V_{ms}^{-1/2} \xms_t (\xe_t)^{\prime} V_{e}^{-1/2 \prime}  & \frac{\omega(\delta)}{4} I & 0 \\
\sum_{t=1}^T V_{s}^{-1/2} \xs_t (\xe_t)^{\prime} V_{e}^{-1/2 \prime}  & 0 & \frac{\omega(\delta)}{4} I
\end{bmatrix}
\end{align*}
Let $v = \begin{bmatrix}
v_1 \\
v_2 \\
v_3
\end{bmatrix}$
Then $v^{\prime} \sum_{t=1}^T B \tX_t \tX_t^{\prime}B^{\prime} v = ||v_1||^2 + \frac{\omega(\delta)}{4} (||v_2||_2^2 + ||v_3||_2^2) + 2 v_1^{\prime}\sum_{t=1}^T V_{e}^{-1/2}  \xe_t (\xms_t)^{\prime} V_{ms}^{-1/2 \prime} v_2 + 2 v_1^{\prime} \sum_{t=1}^T V_{e}^{-1/2}  \xe_t (\xs_t)^{\prime} V_{s}^{-1/2 \prime}v_3 \geq ||v_1||^2 + \frac{\omega(\delta)}{4} (||v_2||_2^2 + ||v_3||_2^2) - \frac{\sqrt{\omega(\delta)}}{32} ||v_1|| ||v_2|| - \frac{\sqrt{\omega(\delta)}}{32} ||v_1|| ||v_3|| $. Then we get 
\[
v^{\prime} \sum_{t=1}^T B \tX_t \tX_t^{\prime}B^{\prime} v  \geq ||v_1||^2 + \frac{\omega(\delta)}{4} (||v_2||_2^2 + ||v_3||_2^2) - \frac{\omega(\delta)}{64} (||v_1||^2 + ||v_2||^2) - \frac{\omega(\delta)}{64} (||v_1||^2 + ||v_3||^2)
\]
Thus $\sigma_{\min}(\sum_{t=1}^T B \tX_t \tX_t^{\prime}B^{\prime}) \geq  \frac{\omega(\delta)}{8}$. Summarizing we have with probability at least $1 - C\delta$. The $C \delta$ comes because we are considering the intersection of invertibility of $\sum_{t=1}^T X^{mss}_t (X^{mss}_t)^{\prime}$ and $\sum_{t=1}^T \xe_t (\xe_t)^{\prime}, \sum_{t=1}^T \xs_t (\xs_t)^{\prime}, \sum_{t=1}^T \xms_t (\xms_t)^{\prime}$. 
\[
\sigma_{\min}(\sum_{t=1}^T B \tX_t \tX_t^{\prime}B^{\prime}) \geq \frac{\omega(\delta)}{8}
\]
whenever 
\begin{align}
T \in T_{mc}(\delta) \cap T_{sc}(\delta) \label{t_mc} 
\end{align}
Replacing $\delta \rightarrow \frac{\delta}{C}$ we get with probability at least $1-\delta$ that
\[
\sigma_{\min}(\sum_{t=1}^T B \tX_t \tX_t^{\prime}B^{\prime}) \geq \frac{\omega( \frac{\delta}{C})}{8}
\]
Define 
\begin{align*}
V^e_{dn}(\delta) &= \frac{\phi_{\min}(A_e)^2 \psi(A_e)^2 \delta^2}{2\sigma_{\max}(P)^2}A_e^T A_e^{T \prime}, V^s_{dn}(\delta) = \frac{TR^2}{4}I, V^{ms}_{dn}(\delta) = \Big(\frac{TR^2}{8e} \Gamma_{\lfloor \frac{1}{\beta_0(\delta)}\rfloor}(A_{ms})\Big)
\end{align*}
This implies that with probability at least $1-2\delta$ we have that 
\begin{align}
\sum_{t=1}^T B \tX_t \tX_t^{\prime}B^{\prime} &\succeq \frac{\omega(  \frac{\delta}{C})}{8} I \implies \sum_{t=1}^T \tX_t \tX_t^{\prime}\succeq \frac{\omega(  \frac{\delta}{C})}{8}B^{-2} \nonumber \\ 
\sum_{t=1}^T \tX_t \tX_t^{\prime} &\succeq \underbrace{\frac{\omega(  \frac{\delta}{ C})}{8} \begin{bmatrix}
 V^e_{dn}(\delta) & 0 & 0   \\
 0 & V^{ms}_{dn}( \frac{\delta}{ C}) & 0\\
 0 & 0 & V^{s}_{dn}(  \frac{\delta}{ C})
 \end{bmatrix}}_{=V_{dn}}
\end{align}
 $V^e_{dn}$ depends differently than the rest because $V_e$ was chosen to be data dependent and we only apply the lower bound on $\sum_{t=1}^T \xe_t (\xe_t)^{\prime}$ at the very end. 

\subsection{Finding the Upper Bound $\sum_{t=1}^T \tX_t \tX_t^{\prime}$}
\label{ub_general}
For the upper bound on $\sum_{t=1}^T \tX_t \tX_t^{\prime}$. We use Lemma A.5 of~\cite{simchowitz2018learning}. Consider an arbitrary matrix $M = \begin{bmatrix} M_1 \\
M_2 \\
M_3 \end{bmatrix}$. Then $ \begin{bmatrix} 3M_1M_1^{\prime} & 0 & 0 \\
0 & 3M_2M_2^{\prime}& 0\\
0 & 0 & 3M_3M_3^{\prime} \end{bmatrix} \succeq MM^{\prime}$. This is because
\begin{align*}
\begin{bmatrix} 2M_1M_1^{\prime} & -M_1 M_2^{\prime} & -M_1 M_3^{\prime} \\
-M_2 M_1^{\prime} & 2M_2M_2^{\prime}& -M_2 M_3^{\prime}\\
-M_3 M_1^{\prime} & -M_3 M_2^{\prime} & 2M_3M_3^{\prime} \end{bmatrix} &= (\begin{bmatrix} M_1 \\
0 \\
0 \end{bmatrix}- \begin{bmatrix} 0 \\
M_2 \\
0 \end{bmatrix})(\begin{bmatrix} M_1 \\
0 \\
0 \end{bmatrix}- \begin{bmatrix} 0 \\
M_2 \\
0 \end{bmatrix})^{\prime} \\
&+ (\begin{bmatrix} M_1 \\
0 \\
0 \end{bmatrix}- \begin{bmatrix} 0 \\
0 \\
M_3 \end{bmatrix})(\begin{bmatrix} M_1 \\
0 \\
0 \end{bmatrix}- \begin{bmatrix} 0 \\
0 \\
M_3 \end{bmatrix})^{\prime} + (\begin{bmatrix} 0 \\
0 \\
M_3 \end{bmatrix}- \begin{bmatrix} 0 \\
M_2 \\
0 \end{bmatrix})(\begin{bmatrix} 0 \\
0 \\
M_3 \end{bmatrix}- \begin{bmatrix} 0 \\
M_2 \\
0 \end{bmatrix})^{\prime}
\end{align*}
Define  
\begin{align*}
V^e_{up}(\delta) &= \frac{3\phi_{\max}(A)^2\sigma_{\max}(\tilde{P})^4}{\sigma_{\min}(\tilde{P})^2}(1+\frac{1}{c}\log{\frac{1}{\delta}})\text{tr}(\Gamma_T(A_e^{-1})) A_e^T A_e^{T \prime} \\
V^s_{up}(\delta) &= 3\sigma_{\max}(\tilde{P})^2T\text{tr}(\Gamma_T(A_{s}))\Big(1 + \frac{1}{c}\log{\Big(\frac{1}{\delta}\Big)}\Big)I \\
V^{ms}_{up}(\delta) &= 3\sigma_{\max}(\tilde{P})^2T\text{tr}(\Gamma_T(A_{ms}))\Big(1 + \frac{1}{c}\log{\Big(\frac{1}{\delta}\Big)}\Big) I
\end{align*}
Then with probability at least $1-4\delta$ we have 
\begin{align*}
\begin{bmatrix}
\sum_{t=1}^T \xe(\xe_t)^{\prime} & 0 & 0   \\
0 & \sum_{t=1}^T \xms(\xms_t)^{\prime} & 0\\
0 & 0 & \sum_{t=1}^T \xs(\xs_t)^{\prime}
\end{bmatrix} \preceq \begin{bmatrix}
V^e_{up}(\delta) & 0 & 0   \\
0 & V^{ms}_{up}(\delta) & 0\\
0 & 0 & V^s_{up}(\delta)
\end{bmatrix}
\end{align*}
We get these upper bounds for stable and marginally stable matrices from Proposition~\eqref{energy_markov} and Eq.~\eqref{exp_bnds} for explosive matrices. Then with probability at least $1-4\delta$ we have 
\begin{align}
\sum_{t=1}^T \tX_t \tX_t^{\prime} \preceq \underbrace{\begin{bmatrix}
3V^e_{up}(\delta) & 0 & 0   \\
0 & 3V^{ms}_{up}(\delta) & 0\\
0 & 0 & 3V^s_{up}(\delta)
\end{bmatrix}}_{{=V_{up}}} \label{ub_comp}
\end{align}
Note that the time requirement in Eq.~\eqref{t_mc} is sufficient to ensure the upper bounds with high probability and we do not state them explicitly. 

\subsection{Getting Error Bounds}
\label{error}
We recall the discussion for Theorem~\ref{main_result}. We have $V_{up}, V_{dn}$, so we compute $V_{up}V_{dn}^{-1}$ which gives us 
\begin{align*}
V_{up}V_{dn}^{-1} &= \frac{8}{\omega(\frac{\delta}{C})}\begin{bmatrix}
3V^e_{up}(\delta)(V^e_{dn}(\delta))^{-1} & 0 & 0   \\
0 & 3V^{ms}_{up}(\delta)(V^{ms}_{dn}( \frac{\delta}{ C}))^{-1} & 0\\
0 & 0 & 3V^s_{up}(\delta)(V^{s}_{dn})^{-1}( \frac{\delta}{ C})
\end{bmatrix}\\
\text{det}(V_{up}V_{dn}^{-1}) &= \Big(\frac{24}{\omega(\frac{\delta}{C})}\Big)^d \text{det}(V^e_{up}(\delta)(V^e_{dn}(\delta))^{-1}) \text{det}(V^{ms}_{up}(\delta)(V^{ms}_{dn}( \frac{\delta}{ C}))^{-1})\text{det}(V^s_{up}(\delta)(V^{s}_{dn}( \frac{\delta}{ C}))^{-1})
\end{align*}
Further $V^{s}_{dn}( \frac{\delta}{ C}) = V^{s}_{dn}(\delta)$ (only the time required to be greater than this with high probability changes). Then 
\begin{align*}
\log{(\text{det}(V_{up}V_{dn}^{-1}))} &=d(\log{24} - \log{\omega(\frac{\delta}{C})}) + \log{\text{det}(V^e_{up}(\delta)(V^e_{dn}(\delta))^{-1})} \\
&+ \log{\text{det}(V^{ms}_{up}(\delta)(V^{ms}_{dn}( \frac{\delta}{ C}))^{-1})}  + \log{\text{det}(V^s_{up}(\delta)(V^{s}_{dn}( \frac{\delta}{ C}))^{-1})} 
\end{align*} 
Following this the bounds are straightforward and can be computed as shown in Eq.~\eqref{error_form}. It should be noted that Proposition~\ref{selfnorm_bnd} works for a general case of noise process which $\tilde{\eta}_t$ satisfies.

Now we only know the error of the transformed dynamics, \textit{i.e.},
\begin{align*}
 \sum_{t=1}^T(\sum_{t=1}^T \tX_t \tX_t)^{+}(\sum_{t=1}^T \tX_t \tilde{\eta}_{t+1})
\end{align*}
Since $(\sum_{t=1}^T \tX_t \tX_t)$ is invertible with high probability
\begin{align*}
 \sum_{t=1}^T(\sum_{t=1}^T \tX_t \tX_t)^{+}(\sum_{t=1}^T \tX_t \tilde{\eta}_{t+1})&= (\sum_{t=1}^T \tX_t \tX_t)^{-1}(\sum_{t=1}^T \tX_t \tilde{\eta}_{t+1}) \\
 &=  \sum_{t=1}^T\tilde{P}^{-1 \prime}(\sum_{t=1}^T X_t X_t)^{-1}\tilde{P}^{-1} \tilde{P}X_t \eta_{t+1}\tilde{P}^{\prime} \\
 &= \tilde{P}^{-1 \prime}\sum_{t=1}^T(\sum_{t=1}^T X_t X_t)^{-1}X_t \eta_{t+1}\tilde{P}^{\prime} 
\end{align*}
Then it is clear that 
\[
\bl\bl\sum_{t=1}^T(\sum_{t=1}^T \tX_t \tX_t)^{-1}(\sum_{t=1}^T \tX_t \tilde{\eta}_{t+1})\bl\bl \geq \sigma_{\min}(\tilde{P}^{-1}) \bl \bl \sum_{t=1}^T(\sum_{t=1}^T X_t X_t)^{-1}X_t \eta_{t+1}\bl \bl \sigma_{\min}(\tilde{P})
\]
and we have bounded the original error term in terms of the unknown $\sigma_{\min}(\tilde{P}), \sigma_{\min}(\tilde{P}^{-1})$. However this factor only depends on $d$ and not $T$.
	\section{Extension to presence of control input}
\label{extensions}
	Here we sketch how to extend our results to the general case when we also have a control input, \textit{i.e.},
	\begin{equation}
	\label{control_eq}
	X_{t+1} = AX_t + BU_{t} + \eta_{t+1}
	\end{equation}
	Here $A, B$ are unknown but we can choose $U_t$. Pick independent vectors $\{U_t \sim \Nc(0, I)\}_{t=1}^T$. We can represent this as a variant of Eq.~\eqref{lti} as follows
	\begin{align*}
	\underbrace{\begin{bmatrix}
		X_{t+1} \\
		U_{t+1}
		\end{bmatrix}}_{\bar{X}_{t+1}} &= \underbrace{\begin{bmatrix}
		A & B \\
		0 & 0
		\end{bmatrix}}_{\bar{A}}\begin{bmatrix}
	X_{t} \\
	U_{t}
	\end{bmatrix} + \underbrace{\begin{bmatrix}
		\eta_{t+1} \\
		U_{t+1}
		\end{bmatrix}}_{\bar{\eta}_{t+1}}
	\end{align*}
	Since 
	\begin{align*}
	\text{det}\Bigg(\begin{bmatrix}
	A -\lambda I & B \\
	0 & -\lambda I
	\end{bmatrix}\Bigg) = 0
	\end{align*}
	holds when $\lambda$ equals an eigenvalue of $A$ or $0$. The eigenvalues of $\bar{A}$ are the same as $A$ with some additional eigenvalues that are zero. Now we can simply use Theorem~\ref{composite_result}.
	
\section{Extension to heavy tailed noise}
\label{noise_ind}
It is claimed in~\cite{faradonbeh2017finite} that techniques involving inequalities for subgaussian distributions cannot be used for the class of sub-Weibull distributions they consider. However, by bounding the noise process, as even \cite{faradonbeh2017finite} does, we can convert the heavy tailed process into a zero mean independent subgaussian one. In such a case our techniques can still be applied, and they incur only an extra logarithmic factor. We consider the class of distributions introduced in~\cite{faradonbeh2017finite} called sub--Weibull distribution. Let $\eta_{t, i}$ be the $i^{th}$ element of $\eta_t$ then $\eta_{t, i}$ has sub--Weibull distribution if
\begin{equation}
\label{sub_weibull}
\Pb(|\eta_{t, i} > y|) \leq b \exp{\Big(\frac{-y^{\alpha}}{m}\Big)}
\end{equation}
When $\alpha = 2$ it is subGaussian, $\alpha =1$ it is subExponential and $\alpha < 1$ it is subWeibull. Assume for now that $\eta_{t, i}$ has symmetric distribution. The extension to asymmetric case needs some computation in finding and is not discussed here. Consider the event 
$$\Wc(\delta) = \Bigg\{\max_{1 \leq t \leq T} ||\eta_{t}||_{\infty} \leq \nu_T(\delta)\Bigg\}$$
where $\nu_T(\delta) = \Big(m \log{\Big(\frac{bTd}{\delta}\Big)^{1/\alpha}}\Big)$. Then Proposition 3 in~\cite{faradonbeh2017finite} shows that $\Pb(\Wc(\delta)) \geq 1 - \delta$. Clearly because each $\{\eta_{t, i}\}_{t=1, i=1}^{t=T, i=d}$ are i.i.d and have symmetric distribution
\begin{equation}
\label{zero_mean}
\Ex[\eta_{t, i} | \Wc(\delta)] = \Ex[\eta_{t, i} | \{|\eta_{t, i}| \leq \nu_T(\delta)\}] = 0
\end{equation}
Then under $\Wc(\delta)$, $\eta_{t, i}$ has mean zero and $\{\eta_{t, i}\}_{t=1, i=1}^{t=T, i=d}$ are independent under the event $\Wc(\delta)$. Further since under $\Wc(\delta)$ these are bounded, they are also subGaussian. The subGaussian parameter or variance proxy $R^2 \leq \nu_T(\delta)^2$ which is logarithmic in $T$. This appears as simply a scaling factor in Theorem~\ref{selfnorm_main}, Proposition~\ref{selfnorm_bnd}. We can now use all our techniques from before.

\section{Optimality of Bound}
\label{optimal_bnd}

Let $A = a$ be 1-D system. Assume that $T \in T_{u}(\delta)$ (as in Table~\ref{notation}). Then $X_t, \eta_{t}$ are just numbers. Then let $E$ be the error, \textit{i.e.}, 
\begin{align*}
E &= (\sum_{t=1}^T x_t^2)^{-1}(\sum_{t=1}^T x_t \eta_{t+1}) \\
&= a^{-T}(\sum_{t=1}^T a^{-2T}x_t^2)^{-1}(\sum_{t=1}^T a^{-T}x_t \eta_{t+1})
\end{align*} 
In this section, we will show that the bound obtained for explosive systems is optimal in terms of $\delta$. Assume $\eta_t \sim \Nc(0, 1)$ i.i.d Gaussian. Let $S_T = \sum_{t=1}^T a^{-T}x_t \eta_{t+1}, U_T=\sum_{t=1}^T a^{-2T}x_t^2$. Now $E = a^{-T}U_T^{-1} S_T$ and $S_T$ has the following form
\begin{equation}
    2S_T = [\eta_{T+1}, \ldots, \eta_1] \underbrace{\begin{bmatrix}
    0 & a^{-T} & a^{-T+1}&\hdots & a^{-1} \\
    a^{-T} & 0 & a^{-T} & \hdots & a^{-2} \\
    \vdots & \ddots & \ddots & \ddots & \vdots \\
    \vdots & \vdots & \ddots & \ddots & \ddots \\
    a^{-1} & a^{-2} & a^{-3} & \hdots & 0
    \end{bmatrix}}_{=M} \underbrace{\begin{bmatrix}
    \eta_{T+1} \\
    \vdots \\
    \eta_1
    \end{bmatrix}}_{=\tilde{\eta}} \label{ST_form_dist}
\end{equation}

Define $F_T = \sum_{i=1}^T a^{-2i+2} (a^{-2T} x_T^2) = \frac{1-a^{-2T}}{1-a^{-2}}a^{-2T}x_T^2$. and $\sigma^2 = \text{Var}(a^{-2T}x_T^2)$. It is clear that $a^{-T}x_T$ is a Gaussian random variable. Note that $F_T, U_T$ are the same as Eq.~\eqref{ut_ft} and Section~\ref{explosive} when $A = a$. We can easily calculate $\sigma^2$
\[
a^{-2} \leq \sigma^2 \leq \frac{1}{a^2 - 1}
\]

Consider four events 
\begin{align*}
\Ec_1(\delta) &= \Bigg\{|U_T -F_T| \leq \frac{\delta^2 \sigma^2}{C} \vee \Big(\frac{C T^2 a^{-T}}{1-a^{-2}} + \Big(1 + \frac{1}{c} \log{\frac{1}{\delta}}\Big)\frac{Ta^{-2T}}{(1-a^{-2})}\Big) \Bigg\} ,\Ec_2(\delta) = \Bigg\{|S_T| \geq \frac{\delta}{-Ca^2\log{\delta}} \Bigg\} \\
\Ec_3(\delta) &= \Bigg\{0 \leq F_T \leq C_2\delta^2 \sigma^2\Bigg\} , \Ec_4(\delta) = \Bigg\{ 0 \leq U_T \leq  \Big((C_2 + 1/C)\delta^2 \sigma^2\Big) \vee \Big(\frac{C T^2 a^{-T}}{1-a^{-2}} + \Big(1 + \frac{1}{c} \log{\frac{1}{\delta}}\Big)\frac{Ta^{-2T}}{(1-a^{-2})}\Big) \Bigg\}
\end{align*}
From Eq.~\eqref{tight_error_cum} we have with probability at least $1-\frac{\delta}{2}$ that 
\begin{align*}
||U_T - F_T||_{2} &\underbrace{\leq}_{\text{Eq.}~\eqref{tight_error_cum}} \Bigg(4T^2 \sigma_1^2(A^{-\frac{(T+1)}{2}}) \text{tr}(\Gamma_T(A^{-1}))+ \Big(T + \frac{T}{c}\log{\frac{1}{\delta}}\Big)\sigma^2_1(A^{-T-1})\text{tr}(\Gamma_T(A^{-1}))\Bigg) \\
&\leq \frac{4T^2 a^{-T}}{1-a^{-2}} + \Big(1 + \frac{1}{c} \log{\frac{1}{\delta}}\Big)\frac{Ta^{-2T}}{(1-a^{-2})}
\end{align*}

Assume $\delta^2 \in (0, \frac{1}{128}]$ then
\begin{align*}
\Pb(\Ec_3(\delta)) &= \frac{2}{\sqrt{2 \pi} \sigma}\int_{2 \delta \sigma}^{16\delta \sigma} e^{-\frac{x^2}{2\sigma^2}}dx \\
&\geq  \frac{14 \delta}{\sqrt{2 \pi}} e^{-\frac{256 \delta^2}{2}} \\
&\geq \frac{14 \delta}{\sqrt{2 \pi}e} \geq 2 \delta
\end{align*}
Recall $T_u(\delta)$ is the set of  $T$ that satisfies Eq.~\eqref{t_exp_req} when $A = a$.
\subsection{$T \in T_u({\delta})$}
\label{t_tu}
For $T \in T_u(\delta)$ and from Eq.~\eqref{error_cum}, we have with probability at least $1 -\frac{\delta}{2}$ that
\begin{align*}
||U_T - F_T||_{2} &\leq \frac{4T^2 a^{-T}}{1-a^{-2}} + \frac{Ta^{-2T}}{\delta(1-a^{-2})} \underbrace{\leq}_{T \in T_u(\delta), \text{Eq.}~\eqref{t_exp_req}}  \frac{\phi_{\min}(a)^2 \psi(a)^2 \delta^2}{2\sigma_{\max}(P)^2} \leq \frac{C\delta^2}{(a^2-1)}
\end{align*}
The last inequality follows because for $1$-D systems $\phi_{\min}(A), \psi(A), \sigma_{\max}(P)$ are just constants, for example $P = 1, \phi_{\min}(a) = 1, \psi(a)^2 = C\sigma^2 \leq \frac{C}{a^2 - 1}$ which follows by definition. Note $T \in T_{u}(\delta)$ if and only if we have 
\[
{\delta^2 \sigma^2} >  \frac{C T^2 a^{-T}}{1-a^{-2}}
\]
Thus, $ \Pb(\Ec_1(\delta)) \geq 1 -\frac{\delta}{2}$. Clearly $\Ec_1(\delta) \cap \Ec_3(\delta) \implies \Ec_1(\delta) \cap \Ec_4(\delta)$ and  
$$\Ec_2(\delta) \cap \Ec_4(\delta) \implies \Big\{ |S_T| U_T^{-1} \geq \frac{C}{-\sigma^2 a^2 \delta \log{\delta}} \Big\}$$ 
We bound $\Pb(\Ec_2(\delta))$ in Section~\ref{char_fn} and Eq.~\eqref{ST_xT_lb}, which gives $\Pb(\Ec_2(\delta)) \geq 1 - \frac{\delta}{2}$ and then
\begin{align*}
\Pb(\Ec_1(\delta) \cap \Ec_2(\delta) \cap \Ec_4(\delta)) &\geq \Pb(\Ec_1(\delta) \cap \Ec_2(\delta) \cap \Ec_3(\delta)) \\
&\geq \Pb(\Ec_1(\delta)) + \Pb(\Ec_2(\delta) \cap \Ec_3(\delta)) - 1 \\
&\geq \Pb(\Ec_1(\delta)) + \Pb(\Ec_2(\delta)) +\Pb( \Ec_3(\delta)) - 2 \\
&\geq \frac{\delta}{2}
\end{align*}
Since $\Ec_2(\delta) \cap \Ec_4(\delta) \implies \{ |S_T| U_T^{-1} \geq \frac{C}{-\sigma^2 a^2 \delta \log{\delta} }\}$ when $T \in T_{u}(\delta)$ then 
$$  \Pb(\{ |S_T| U_T^{-1} \geq \frac{C}{-\sigma^2 a^2 \delta \log{\delta}}\}) \geq \frac{\delta}{2}$$
we have proved our claim that with probability at least $\delta$ we have that 
\begin{equation}
|E_T| \geq \Big(\frac{C}{-\sigma^2 a^2 \delta \log{\delta} }\Big)a^{-T} \geq \frac{C(1-a^{-2})}{-\delta \log{\delta}}a^{-T} \label{final_err_dist_lb}
\end{equation}
whenever $Ca^2T^2 a^{-T} \leq \delta^2$. 
\subsection{$T \not \in T_u(\delta)$}
\label{t_not_tu}
If $Ca^2T^2 a^{-T} > \delta^2$, then with probability at least $1-\frac{\delta}{2}$
$$|U_T-F_T| \leq \frac{CT^2 a^{-T}}{1-a^{-2}} + \Big(1 + \frac{1}{c} \log{\frac{1}{\delta}}\Big)\frac{Ta^{-2T}}{(1-a^{-2})}\Big)$$
and we have with probability at least $\delta$ that 
$$\Big\{ |S_T| U_T^{-1} \geq \frac{C(1-a^{-2}) \delta a^{T}}{-T^2 a^2 \log{\delta} + \Big(1 - \frac{\log{\delta}}{c} \Big)T a^{-T}}\Big\}$$
and we can conclude with probability at least $\delta$
\[
|E_T| \geq \frac{C(1-a^{-2}) \delta}{ -a^2 (\log{\delta})^3}
\]
where $Ca^2T^2 a^{-T} \geq \delta^2 \implies T \leq -\log{\delta}$.

\subsection{Comparison to existing bounds}
\label{comparison}
\begin{thm}[Theorem B.2~\cite{simchowitz2018learning}]
\label{b2}
Fix an $a_* \in \Rb$ and define $\Gamma_T = \sum_{t=1}a_{*}^{2t}$. Fix an alternative $a^{\prime} \in \{a_* - 2\epsilon, a_* + 2\epsilon\}$ and $\delta \in (0, 1/4)$. Then for any estimator $\hat{a}$
\[
\sup_{a \in \{a_*, a^{\prime}\}} \Pb(|\hat{a}(T) - a_*| \geq \epsilon) \geq \delta
\]
for any $T$ such that $T \Gamma_T \leq \frac{\log{(1/2\delta)}}{8 \epsilon^2}$.
\end{thm}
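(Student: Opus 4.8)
The plan is to prove this minimax lower bound by Le Cam's two--point method, pitting the true value $a_*$ against the alternative $a'$, which are separated by $\abs{a_* - a'} = 2\epsilon$. As this is a lower bound over all estimators, I am free to fix a hard instance; I take $\eta_t \sim \Nc(0,1)$ and write $\Pb_a$ for the resulting law of the observed trajectory $(X_0,\dots,X_T)$ with $X_0 = 0$ under parameter $a$. The reduction is standard: the events $\{\abs{\hat a - a_*} < \epsilon\}$ and $\{\abs{\hat a - a'} < \epsilon\}$ are disjoint because the two parameters are $2\epsilon$ apart, so the first is contained in $\{\abs{\hat a - a'}\ge\epsilon\}$. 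Hence $\Pb_{a_*}(\abs{\hat a - a_*}\ge\epsilon) + \Pb_{a'}(\abs{\hat a - a'}\ge\epsilon) \ge 1 - \norm{\Pb_{a_*} - \Pb_{a'}}_{\text{TV}}$, and taking the max over the two hypotheses gives $\sup_{a\in\{a_*,a'\}}\Pb_a(\abs{\hat a - a}\ge\epsilon) \ge \tfrac12\big(1 - \norm{\Pb_{a_*} - \Pb_{a'}}_{\text{TV}}\big)$. It therefore suffices to show the two trajectory laws are statistically close.

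The central step is to bound $\infdiv{\Pb_{a_*}}{\Pb_{a'}}$. Although the process is temporally dependent, the chain rule for KL applied to the one--step conditionals $X_{t+1}\mid X_t \sim \Nc(aX_t,1)$ makes the computation clean: writing $X_{t+1} - a'X_t = \eta_{t+1} + (a_* - a')X_t$ and taking expectations under $\Pb_{a_*}$, the martingale cross term $\Ex_{a_*}[\eta_{t+1}X_t]$ vanishes, leaving $\infdiv{\Pb_{a_*}}{\Pb_{a'}} = \tfrac12(a_* - a')^2\sum_{t=0}^{T-1}\Ex_{a_*}[X_t^2]$. Since $\Ex_{a_*}[X_t^2] = \sum_{j=0}^{t-1} a_*^{2j} \le \Gamma_T$ for $t\le T$ and $\abs{a_* - a'} = 2\epsilon$, this yields $\infdiv{\Pb_{a_*}}{\Pb_{a'}} \le 2\epsilon^2\, T\Gamma_T$, which is exactly where the signal--to--noise quantity $T\Gamma_T$ enters. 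The hypothesis $T\Gamma_T \le \log(1/2\delta)/(8\epsilon^2)$ then gives $\infdiv{\Pb_{a_*}}{\Pb_{a'}} \le \tfrac14\log(1/2\delta)$.

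For the final step I would convert the divergence bound into total variation via the Bretagnolle--Huber inequality, $\norm{\Pb_{a_*} - \Pb_{a'}}_{\text{TV}} \le \sqrt{1 - e^{-\infdiv{\Pb_{a_*}}{\Pb_{a'}}}}$, which is the right tool in the small--$\delta$ regime (Pinsker's inequality is too weak here). Combining with the reduction of the first paragraph gives $\sup_{a}\Pb_a(\abs{\hat a - a}\ge\epsilon) \ge \tfrac12\big(1 - \sqrt{1 - (2\delta)^{1/4}}\big)$, and the desired conclusion $\ge\delta$ reduces to the elementary scalar inequality $(2\delta)^{1/4} \ge 4\delta(1-\delta)$, which I would check holds for every $\delta\in(0,1/4)$.

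I expect the only real obstacle to be this constant bookkeeping. The tempting relaxation $1 - \sqrt{1-x} \ge x/2$ — equivalently $1 - \norm{\Pb_{a_*}-\Pb_{a'}}_{\text{TV}} \ge \tfrac12 e^{-\infdiv{\Pb_{a_*}}{\Pb_{a'}}}$ — is lossy and only delivers $\ge\delta$ for small $\delta$, failing near the endpoint $\delta = 1/4$; keeping the square root intact is precisely what makes the bound valid on the whole interval. Finally, the two choices $a'\in\{a_*\pm 2\epsilon\}$ do not affect the divergence estimate, since $\infdiv{\Pb_{a_*}}{\Pb_{a'}}$ depends on the alternative only through $(a_*-a')^2$ and on $a_*$ through $\Ex_{a_*}[X_t^2]$, so I would simply note that the same bound holds for either sign.
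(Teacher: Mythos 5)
This statement is never proved in the paper: Theorem~\ref{b2} is a verbatim import of Theorem B.2 of \cite{simchowitz2018learning}, restated in Section~\ref{comparison} only so the paper's own lower bound (Proposition~\ref{minimax}) can be compared against it. There is therefore no in-paper proof to compare your argument with; judged on its own merits (and against the original source, whose proof is also a two-point testing argument), your proof is correct and is essentially the standard route: a Le Cam reduction for the two hypotheses $a_*$ and $a'=a_*\pm 2\epsilon$, the exact chain-rule identity $\infdiv{\Pb_{a_*}}{\Pb_{a'}}=\tfrac12(a_*-a')^2\sum_{t=0}^{T-1}\Ex_{a_*}[X_t^2]$ for the Gaussian AR(1) likelihood, and Bretagnolle--Huber to convert divergence into total variation. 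Your instinct about the constants is also right: the cruder bound $\sup_a\Pb_a\geq\tfrac14 e^{-\mathrm{KL}}\geq\tfrac14(2\delta)^{1/4}$ fails for $\delta$ near $1/4$ (it gives $\approx 0.21<0.25$ at the endpoint), so keeping the square root is genuinely necessary. The one step you deferred does hold: writing $u=2\delta\in(0,1/2)$, the inequality $(2\delta)^{1/4}\geq 4\delta(1-\delta)$ reads $u^{1/4}\geq 2u-u^2$; for $u\leq 2^{-4/3}$ one has $u^{1/4}\geq 2u\geq 2u-u^2$, while for $u\in[2^{-4/3},1/2]$ one has $u^{1/4}\geq 2^{-1/3}>3/4\geq 2u-u^2$, so it holds on all of $(0,1/4)$.

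Two defects of the paper's restatement interact with your write-up and deserve explicit mention. First, with the restated definition $\Gamma_T=\sum_{t=1}^{T}a_*^{2t}$, your step $\Ex_{a_*}[X_t^2]=\sum_{j=0}^{t-1}a_*^{2j}\leq\Gamma_T$ is valid only for $|a_*|\geq 1$ (shift indices term by term); for $|a_*|<1$ the restated theorem is actually false as written (take $a_*=0$: then $\Gamma_T=0$, the side condition holds for every $T$, yet OLS recovers $a$ consistently). The fix is the Gramian convention that includes the $k=0$ term, as in the paper's own Eq.~\eqref{gramian}, under which $\Ex_{a_*}[X_t^2]\leq\Gamma_T$ is trivial for every $a_*$ and every $t\leq T$; your argument should state which convention it is using. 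Second, the display in the statement should read $\Pb(|\hat a(T)-a|\geq\epsilon)$ under $\sup_a$, not $|\hat a(T)-a_*|$, since otherwise the constant estimator $\hat a\equiv a_*$ is a counterexample; you silently proved the corrected (intended) statement, which is the right mathematics, but the correction should be flagged rather than made tacitly.
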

Note $\Gamma_T = \frac{a^{2T+2}-1}{a^2 -1 }$. Theorem~\ref{b2} suggests that for a given $T, \delta$ if $\epsilon \leq a^{-T}\sqrt{\frac{-C\log{\delta}}{T}}$ then $\Pb(|a_* - \hat{a}(T)| \geq \epsilon) \geq \delta$. However we show that whenever $Ca^2 T^2 a^{-T} \leq \delta^2$, we have that 
\[
\Pb\Big(|a_* - \hat{a}(T)| \geq a^{-T} \frac{C(1-a^{-2})}{-\delta \log{\delta}} \Big) \geq \delta
\]
Since $a^{-T}\sqrt{\frac{-C\log{\delta}}{T}} \leq a^{-T} \frac{C(1-a^{-2})}{-\delta \log{\delta}}$ our lower bound is tighter.
\begin{thm}[Theorem B.1~\cite{simchowitz2018learning}]
\label{b1}
Let $\epsilon \in (0, 1)$ and $\delta \in (0, 1/2)$. Then $\Pb(|\hat{a}(T) - a_*| \leq \epsilon) \geq 1 -\delta$ as long as 
\[
T \geq \max \Big\{\frac{8}{(|a_* - \epsilon|)^2 - 1} \log{\frac{2}{\delta}}, \frac{4 \log{\frac{1}{\epsilon}}}{\log{(|a_*| - \epsilon)}} + 8 \log{\frac{2}{\delta}}\Big\}
\]
\end{thm}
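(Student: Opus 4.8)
The plan is to specialize the scalar OLS error to the self-normalized form used throughout the paper and then combine an upper bound on the noise term with an anti-concentration lower bound on the accumulated signal. Writing $S_T=\sum_{t=0}^{T-1}x_t\eta_{t+1}$ and $Y_T=\sum_{t=0}^{T-1}x_t^2$, the identity \eqref{error_lse} specializes in one dimension to $\hat a(T)-a_*=S_T/Y_T$, so it suffices to show $|S_T|\le\epsilon\,Y_T$ with probability at least $1-\delta$. I would split the target event into two events of probability at least $1-\delta/2$ each: a control of $|S_T|$ and a lower bound on $Y_T$. Throughout, the standing hypothesis $|a_*-\epsilon|>1$ (needed for the first term's denominator to be positive) keeps the entire $\epsilon$-neighbourhood inside the explosive class $\Sc_2$, which is what licenses the arguments below.

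For the noise term I would invoke Proposition~\ref{selfnorm_bnd} with $d=1$ and a deterministic scalar $V>0$, which yields with probability $1-\delta/2$ that
\[
\frac{S_T^2}{Y_T+V}\;\le\;2R^2\log\!\Big(\tfrac{\sqrt{(Y_T+V)/V}}{\delta}\Big),
\]
so that $|S_T|\le\sqrt{2R^2(Y_T+V)\log(\cdots)}$. Dividing by $Y_T$ shows $|\hat a-a_*|$ is of order $\sqrt{\log(Y_T/(V\delta^2))/Y_T}$ as soon as $Y_T\ge V$; in other words the error decays the moment the signal $Y_T$ becomes large relative to $V$ and to the logarithmic factor.

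The crux is the lower bound on $Y_T$. Since $a_*$ is explosive I would pass to $z_t=a_*^{-t}x_t=\sum_{\tau\le t}a_*^{-\tau}\eta_\tau$ as in \eqref{zt_form}, which converges, and use the anti-concentration estimate of Proposition~\ref{anti_conc1} (equivalently the scalar case of Proposition~\ref{ft_inv}, where $P=1$) to guarantee $|z_T|\ge\psi(a_*)\delta$ with probability $1-\delta/2$. Because $Y_T=a_*^{2T}U_T$ and $U_T$ is within an exponentially small error of $F_T\succeq\phi_{\min}(a_*)^2\psi(a_*)^2\delta^2$ (Section~\ref{explosive}), this forces $Y_T\gtrsim a_*^{2T}\psi(a_*)^2\delta^2$. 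Feeding this exponential lower bound into the previous display bounds the error by a quantity of order $a_*^{-T}\sqrt{(a_*^2-1)\log(\cdots)}/\delta$, and requiring this to be at most $\epsilon$ yields a condition of the form $T\gtrsim\frac{\log(1/\epsilon)}{\log|a_*|}+\log\frac1\delta$, which after tracking constants is exactly the second term of the stated maximum. The first term $\frac{8}{(|a_*-\epsilon|^2-1)}\log\frac2\delta$ is what guarantees that, before the exponential growth dominates, the self-normalized logarithmic factor and the comparison $Y_T\ge V$ are already in force; it is largest precisely when $|a_*|-\epsilon$ is close to $1$, i.e.\ when the system barely grows and the accumulation of signal is slow.

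The main obstacle is this signal lower bound: $Y_T$ is a sum of strongly correlated, exponentially growing squared terms and is dominated by $z_\infty^2\,a_*^{2T}$, where $z_\infty$ can be arbitrarily close to $0$, so no deterministic lower bound is available and one is forced through the anti-concentration step, which injects a polynomial dependence on $\delta$ into the error. The delicate bookkeeping point is reconciling this with the clean additive $\log\frac2\delta$ in the stated bound on $T$: inverting an error estimate that is exponentially small in $T$ but only polynomially large in $1/\delta$ converts the $\mathrm{poly}(1/\delta)$ factor into an additive $O(\log\frac1\delta)$ in $T$, which is why the high-probability \emph{sample complexity} appears logarithmic in $\delta$ even though, as Proposition~\ref{minimax} shows, the error itself must scale polynomially in $1/\delta$.
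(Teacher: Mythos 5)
First, a structural point: the paper never proves Theorem~\ref{b1}. It is quoted verbatim from Appendix B of \cite{simchowitz2018learning} and appears in Section~\ref{comparison} only as a foil against which the paper's \emph{lower} bound (Proposition~\ref{minimax}, Eq.~\eqref{final_err_dist_lb}) is measured. So there is no internal proof to compare you against. What you have actually written is a one-dimensional specialization of the paper's own upper-bound machinery for $A \in \Sc_2$: the decomposition $\hat a - a_* = S_T/Y_T$ from Eq.~\eqref{error_lse}, the self-normalized control of $S_T$ via Proposition~\ref{selfnorm_bnd}, the anti-concentration lower bound $Y_T \gtrsim a_*^{2T}\psi(a_*)^2\delta^2$ via Proposition~\ref{anti_conc1} and Section~\ref{explosive}, and then the inversion into a sample complexity that the paper itself carries out in the Discussion (Eq.~\eqref{ub}). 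Your closing reconciliation --- that an error which is polynomial in $1/\delta$ at fixed $T$ is compatible with a time threshold that is only additively logarithmic in $1/\delta$, because the error decays geometrically in $T$ --- is correct and is exactly the point the paper makes.

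The gap is that this route proves a statement of the same \emph{shape} as Theorem~\ref{b1}, not the stated theorem. Your claim that the condition ``after tracking constants is exactly the second term of the stated maximum'' is unjustified and will not survive the tracking. Every ingredient you invoke carries unspecified constants: the universal constant and the $5^d$ net factor in Proposition~\ref{selfnorm_bnd}, the Hanson--Wright constant $c$ entering the bound on $\|U_T-F_T\|$, the anti-concentration constant $\psi(a_*)$, and the entry condition $T \in T_{u}(\delta)$ needed before $U_T \approx F_T$ holds at all. What you obtain is a threshold of the form $T \geq C\big(\log(1/\epsilon)+\log(1/\delta)+\log c(a_*)\big)/\log(a_*-\epsilon)$ with $C$ untracked, and nothing in your argument shows this is dominated by the quoted maximum uniformly over $\epsilon\in(0,1)$, $\delta\in(0,1/2)$, and admissible $a_*$; already for, say, $a_*=2$, $\epsilon=1/2$, the quoted threshold is essentially $8\log(2/\delta)$ while yours carries the unknown multiplicative constant, so for $T$ between the two thresholds your proof certifies nothing although the theorem asserts a guarantee. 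There is also a technical elision: inside the self-normalized bound the $\log\det$ term is itself of order $T\log a_*$ (since $Y_T$ grows like $a_*^{2T}$), so your error bound has an extra $\sqrt{T\log a_*}$ factor that must be absorbed before inverting --- it costs only an iterated logarithm, but it is not free. Proving the literal statement, with its explicit $8/\big((|a_*-\epsilon|)^2-1\big)$ and $4\log(1/\epsilon)/\log(|a_*|-\epsilon)+8\log(2/\delta)$, requires the direct scalar Gaussian argument of \cite{simchowitz2018learning}; your argument instead re-derives the paper's own constant-unspecified explosive bound, which is a perfectly good (and for large $a_*$ in some respects sharper) result, but a different one.
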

We now compare Eq.~\eqref{final_err_dist_lb} to the upper bound in Theorem~\ref{b1}. Eq.~\eqref{final_err_dist_lb} gives us that if 
\[
\epsilon \leq  \frac{C(1-a^{-2})}{-\delta \log{\delta}}a^{-T}
\]
we have with probability at least $\delta$ that $|E_T| \geq \epsilon$. This reduces to whenever
\begin{equation}
    \label{t_lb_req}
    T_{-} \leq \frac{\log{\frac{1}{\epsilon}}}{\log{a}} + \frac{\log{\frac{C(1-a^{-2})}{\delta}}}{\log{a}}
\end{equation}
we have with probability at least $\delta$ that $|E_T| \geq \epsilon$. We focus on the case $a_{*} > 1+\epsilon$ of Theorem~\ref{b1}. Let $a_{*} = 1 + \epsilon + \gamma$, then the bounds in Theorem~\ref{b1} indicate that whenever 
\[
T_{+} \geq \frac{8}{2\gamma + \gamma^2} \log{\frac{2}{\delta}} + \frac{4 \log{\frac{1}{\epsilon}}}{\log{(\gamma + 1)}} + \log{\frac{2}{\delta}}
\]
we have with probability at least $1 -\delta$ $|E_T| \leq \epsilon$. If $\gamma = o(\epsilon)$, then the requirement on $T$ reduces to 
\[
T_{+} \geq \frac{8}{o(\epsilon)} \log{\frac{2}{\delta}} + \frac{4 \log{\frac{1}{\epsilon}}}{o(\epsilon)} + \text{ smaller terms}
\]
By substituting $\log{a} \approx \epsilon$ in $T_{-}$ we note that $ T_{-} \leq T_{+}$. For the case when $\gamma = \Omega(\epsilon)$ for $T_{+}$ we get 
\[
T_{+} \geq \Big(\frac{8}{\Omega(\epsilon)} \vee 1\Big) \log{\frac{2}{\delta}} + \frac{4 \log{\frac{1}{\epsilon}}}{\log{(1+\Omega(\epsilon))}}  \approx  \underbrace{\Big(\frac{8}{\Omega(\epsilon)} \vee 1\Big)}_{\geq (\log{a})^{-1}} \log{\frac{2}{\delta}} + \frac{2 \log{\frac{1}{\epsilon}}}{\log{a}} 
\]
In either cases $T_- \leq T_+$.

\section{Distribution of $S_T$}
\label{char_fn}
Recall $S_T$ from Eq.~\eqref{ST_form_dist}. Since $\sum_{i, j}|M|_{i, j} \geq ||M||_{*}$ (the nuclear norm), we have that $||M||_{*} \leq \frac{2a^{-1}}{1-a^{-1}}$ and it is obvious that $||M||_2 \geq a^{-1}$. Since $M = U^{\top} \Lambda U$ (because it is symmetric) and $\eta_t$ are i.i.d Gaussian then $U \tilde{\eta}$ is also Gaussian with each of its entries being i.i.d Gaussian. This implies that $2S_T = \sum_{j=1}^{T+1}\lambda_j g_j^2$ where $\lambda_j$ are eigenvalues of $M$ and $g_j$ are i.i.d Gaussian with $\sum_j \lambda_j = 0, \sum_j |\lambda_j| \leq \frac{2a^{-1}}{1-a^{-1}}$. The characteristic function of $S_T$ is 
\[
\phi_{S_T}(t) = \prod_{j=1}^{T+1} \Big(\frac{1}{1-2it \lambda_j}\Big)^{1/2} =  \Big(\frac{1}{1 - 4t^2 (\sum_{l \neq j} \lambda_l \lambda_j) - i 8t^3 (\sum_{l \neq j \neq k} \lambda_l \lambda_j \lambda_k) + 16t^4 (\sum_{l \neq j \neq k \neq p} \lambda_l \lambda_j \lambda_k \lambda_p) \hdots}\Big)^{1/2}
\]
where the coefficient of $t$ vanishes because $\sum_{j=1}^{T+1} \lambda_j = 0$. Further since $\sum_{l \neq j} 2\lambda_l \lambda_j = - \sum_{j} \lambda_j^2$ we have
and 
\begin{align*}
    (\sum_{l \neq j \neq k \neq m} \lambda_l \lambda_j \lambda_k \lambda_m) &= \sum_{l} \lambda_l (\sum_{l \neq j \neq k \neq m} \lambda_j \lambda_k \lambda_m) = \sum_{l} \lambda_l (\sum_{l \neq j \neq k \neq m} \lambda_j \lambda_k \lambda_m + \sum_{l \neq p \neq m} \lambda_l \lambda_p \lambda_m - \sum_{l \neq p \neq m} \lambda_l \lambda_p \lambda_m) \\
    &= \sum_{l} \lambda_l (\sum_{j \neq k \neq m} \lambda_j \lambda_k \lambda_m - \sum_{l \neq p \neq m} \lambda_l \lambda_p \lambda_m - \sum_{l \neq m } \lambda_l^2  \lambda_m +  \sum_{l \neq m } \lambda_l^2  \lambda_m) \\
    &= \sum_{l} \lambda_l (- \lambda_l \sum_{ p \neq m} \lambda_p \lambda_m +  \sum_{l \neq m } \lambda_l^2  \lambda_m) = \frac{(\sum_l \lambda_l^2)^2}{2} - \sum_l \lambda_l^4 = {\frac{\text{tr}(M^2)^2}{2}} - \text{tr}(M^4)
\end{align*}
The coefficients of even powers of $t$ can be obtained in a similar fashion. Then recall by Levy's theorem that 
\[
f_{S_T}(x) = \int_{-\infty}^{\infty} e^{-itx} \phi_{S_T}(t) dt  \implies \sup_x f_{S_T}(x) \leq \int_{-\infty}^{\infty} |\phi_{S_T}(t)| dt \leq \int_{-\infty}^{\infty} \frac{1}{\sqrt{1 + c_1 t^2 + c_2 t^4 + \hdots}} dt
\]
Now whenever $c_k > 0$ (and not decaying asymptotically to zero) for some $k \geq 2$, we get $\sup_x f_{S_T}(x) \leq C $ for some universal constant $C$ and we can use Proposition~\ref{cont_rand} to get $\Pb(|S_T| \leq \delta) \leq C \delta$. But since that may not be always be true we can explicitly calculate the integral 
\begin{align*}
    f_{S_T}(x) &= \int_{-\infty}^{\infty} e^{-itx} \phi_{S_T}(t) dt \approx \underbrace{\int_{-\infty}^{\infty} \frac{e^{itx}}{\sqrt{1 + 2a^{-2}t^2}} dt}_{\text{Modified Bessel Function of the Second Kind} } \\
    \int_{-\delta}^{\delta} f_{S_T}(x) dx &=  \int_{-\delta}^{\delta}\int_{-\infty}^{\infty} \frac{e^{itx}}{\sqrt{1 + 2a^{-2}t^2}} dt dx = 2\int_{-\infty}^{\infty} \int_{-\delta}^{\delta} \frac{\cos(tx)}{\sqrt{1 + 2a^{-2}t^2}} dx dt \\
    &= C\delta \int_{0}^{\infty} \frac{\sin(t\delta)}{\delta t\sqrt{1 + 2a^{-2}t^2}} dt = C\delta \int_{0}^{\delta} \frac{\sin(t\delta)}{\delta t\sqrt{1 + 2a^{-2}t^2}} dt + C\delta \int_{\delta}^{\infty} \frac{\sin(t\delta)}{\delta t\sqrt{1 + 2a^{-2}t^2}} dt \\
    &\leq C \delta^2 - C a\delta \log(\delta)
\end{align*}
Thus
\[
\Pb(|S_T| \leq \delta) \leq -Ca  \delta \log{\delta}
\]
and replacing $\delta \rightarrow \frac{-C\delta}{2a\log{\delta}}$ we get 
\begin{equation}
  \Pb\Big(|S_T| \leq \frac{-C\delta}{a\log{\delta}}\Big) \leq \frac{\delta}{2}  \label{ST_xT_lb}
\end{equation}
\section{Lemma B}
\label{lemmab}
Let the characteristic and minimal polynomial be $\chi(t), \mu(t)$ respectively. 
\[
\chi(t) = \prod_{i=1}^k (t-\lambda_i)^{a_i}, \mu(t) = \prod_{i=1}^k (t-\lambda_i)^{b_i}
\]
where $b_i \leq a_i$. $b_i$ is the size of the largest Jordan block corresponding to $\lambda_i$ in the Jordan normal form. $a_i$ sum of size of all Jordan blocks corresponding to $\lambda_i$. Now, if $\chi(t) = \mu(t)$ then $a_i=b_i$, \textit{i.e.}, there is only Jordan block corresponding to each $\lambda_i$. On the other if there is only one Jordan block (geometric multiplicity $=1$) corresponding to each eigenvalue $\implies a_i=b_i$ and $\chi(t) = \mu(t)$. 
\section{Inconsistency of explosive systems}
\label{inconsistent}
Recall that $A = a I$ where $a \geq 1.1$ and 
\[
\begin{bmatrix}
\x_{t+1} \\
\y_{t+1}
\end{bmatrix} = A \begin{bmatrix}
\x_{t} \\
\y_{t}
\end{bmatrix} + \begin{bmatrix}
\n_{t+1} \\
\w_{t+1}
\end{bmatrix} 
\]
Since $A$ is scaled identity we have that $\x_{t} = \sum_{t=1}^{T}a^{T-t} \n_t, \y_t = \sum_{t=1}^{T}a^{T-t} \w_t$. The scaled sample covariance matrix $a^{-2T}Y_T = a^{-2T}\sum_{t=1}^T X_t X_t^{\top}$ is of the following form
\begin{align}
   a^{-2T}Y_T &= \begin{bmatrix}
   a^{-2T}\sum_{t=1}^T (\x_t)^2 & a^{-2T} \sum_{t=1}^T \x_t \y_t \\
   a^{-2T} \sum_{t=1}^T \x_t \y_t & a^{-2T}\sum_{t=1}^T (\y_t)^2    \end{bmatrix} \label{sample_cov} 
\end{align}
Define $a^{-T}X_T = Z_T$ with $Z_T^{(i)}$ corresponding to appropriate coordinates, and recall that $Z^{(i)}_T$ is a Gaussian random variable with variance  in $(a^{-2}, \frac{a^{-2}}{1-a^{-2}})$ and each $a^{-T} X_t = \la a^{-T} X_t, Z_T\ra Z_T + \la a^{-T} X_t, \ztp \ra \ztp$. This implies 
\begin{align*}
    a^{-2T}\sum_{t=1}^T X_t X_t^{\top} &=  \sum_{t=1}^T  (\underbrace{a^{-T}\la X_t, Z_T\ra}_{=\alpha_t})^2 Z_T Z_T^{\top} + \sum_{t=1}^T a^{-2T} \la  X_t, Z_T \ra  \la X_t, \ztp \ra  Z_T (\ztp)^{\top} \\
    &+ \sum_{t=1}^T \underbrace{\la a^{-T}X_t, Z_T\ra}_{=\alpha_t} \underbrace{\la a^{-T}X_t, \ztp \ra}_{=\beta_t} \ztp  Z_T^{\top} + \sum_{t=1}^T  (\underbrace{a^{-T}\la X_t, \ztp \ra}_{=\beta_t})^2  \ztp (\ztp)^{\top} \\
    &= \underbrace{||\alpha||^2 Z_T Z_T^{\top} + ||\beta||^2  \ztp (\ztp)^{\top}}_{=M} + \la \alpha, \beta \ra (\ztp  Z_T^{\top} + Z_T (\ztp)^{\top}) \\
    &= M + \underbrace{\la \alpha, \beta \ra [Z_T \ztp]}_{=U} \underbrace{\begin{bmatrix}
    0 & 1 \\
    1 & 0
    \end{bmatrix}}_{=C}\underbrace{\begin{bmatrix}
    Z_T^{\top} \\
    (\ztp)^{\top}
    \end{bmatrix}}_{=V}
\end{align*}
By using Woodbury's matrix identity and since $M^{-1} = ||\alpha||^{-2} Z_T Z_T^{\top} + ||\beta||^{-2} \ztp (\ztp)^{\top}, C=C^{-1}$ we get 
\begin{align*}
   (a^{-2T}\sum_{t=1}^T X_t X_t^{\top})^{-1} &= M^{-1} - \la \alpha, \beta \ra M^{-1} U(C + \la \alpha, \beta \ra U^{\top} M^{-1} U)^{-1} U^{\top} M^{-1} \\
   &= M^{-1} - \la \alpha, \beta \ra [||\alpha||^{-2}Z_T \hspace{2mm} ||\beta||^{-2}\ztp] \Big(\begin{bmatrix}
   \la \alpha, \beta \ra ||\alpha||^{-2} & 1 \\
   1 & ||\beta||^{-2} \la \alpha, \beta \ra
   \end{bmatrix} \Big)^{-1}\begin{bmatrix}
    ||\alpha||^{-2}Z_T^{\top} \\
    ||\beta||^{-2}(\ztp)^{\top}
    \end{bmatrix}
\end{align*}
Then the error term is
\begin{align*}
    \hat{A}_o - A_o &= \Big(\sum_{t=1}^{T}a^{-2T} \eta_{t+1} X_t^{\prime}\Big) (a^{-2T}\sum_{t=1}^T X_t X_t^{\top})^{-1} \\
    &= \Big(\sum_{t=1}^T \la a^{-T} X_t, Z_T \ra a^{-T}\eta_{t+1} Z_T^{\prime} + \sum_{t=1}^T \la a^{-T} X_t, \ztp \ra a^{-T}\eta_{t+1} (\ztp)^{\prime}\Big) (a^{-2T}\sum_{t=1}^T X_t X_t^{\top})^{-1}
\end{align*}
We now check the projection of $Z_T, \ztp$ on $(a^{-2T}\sum_{t=1}^T X_t X_t^{\top})^{-1}$
\begin{align}
    Z_T^{\top} (a^{-2T}\sum_{t=1}^T X_t X_t^{\top})^{-1} &= ||\alpha||^{-2} Z_T^{\top} - \la \alpha, \beta \ra [||\alpha||^{-2} \hspace{2mm} 0] \Big(\begin{bmatrix}
   \la \alpha, \beta \ra ||\alpha||^{-2} & 1 \\
   1 &  \la \alpha, \beta \ra ||\beta||^{-2}
   \end{bmatrix} \Big)^{-1}\begin{bmatrix}
    ||\alpha||^{-2}Z_T^{\top} \\
    ||\beta||^{-2}(\ztp)^{\top}
    \end{bmatrix} \nonumber \\
    &= \frac{-||\alpha||^{-2}Z_T^{\top} + \la \alpha, \beta \ra ||\alpha||^{-2}||\beta||^{-2}(\ztp)^{\top}}{\la \alpha, \beta \ra^2 ||\alpha||^{-2} ||\beta||^{-2} - 1} \label{zt_proj}\\
    (\ztp)^{\top} (a^{-2T}\sum_{t=1}^T X_t X_t^{\top})^{-1} &= ||\beta||^{-2} (\ztp)^{\top} - \la \alpha, \beta \ra[0 \hspace{2mm} ||\beta||^{-2}] \Big(\begin{bmatrix}
   \la \alpha, \beta \ra ||\alpha||^{-2} & 1 \\
   1 & \la \alpha, \beta \ra ||\beta||^{-2}
   \end{bmatrix} \Big)^{-1}\begin{bmatrix}
    ||\alpha||^{-2}Z_T^{\top} \\
    ||\beta||^{-2}(\ztp)^{\top}
    \end{bmatrix} \nonumber\\
    &= \frac{-||\beta||^{-2}(\ztp)^{\top} +  \la \alpha, \beta \ra ||\alpha||^{-2}||\beta||^{-2}Z_T^{\top}}{\la \alpha, \beta \ra^2 ||\alpha||^{-2} ||\beta||^{-2} - 1} \label{ztp_proj}
\end{align}
We will show that with high probability $||\alpha||^{-2} = \Theta(1), ||\beta||^{-2} = \Omega(a^{2T}), \la \alpha, \beta \ra = O(a^{-T})$ as a result Eq.~\eqref{zt_proj} is $\Omega(a^{T})$ and Eq.~\eqref{ztp_proj} is $\Omega(a^{2T})$. Note that $\ztp = \begin{bmatrix}
\zy_T \\
-\zx_T
\end{bmatrix}$ where we have ignored the scaling (as these will be of constant order with high probability). First taking a closer look at $\alpha_t = a^{-2T} \x_t \zx_T + a^{-2T} \y_t \zy_T$ reveals the following behaviour
\begin{align*}
    a^{-2T} \x_{T-1} \zx_T &= a^{-1} (\zx_T)^2 - a^{-T-1} \zx_T \n_T \\
    \alpha_{T-1} &= a^{-1}( (\zx_T)^2 +(\zy_T)^2) - a^{-T-1} (\zx_T \n_T + \zy_T \w_T) \\
    a^{-2T} \x_{T-2} \zx_T &= a^{-2} (\zx_T)^2 - a^{-T-1} \zx_{T-1} \n_T - a^{-T-2} \zx_T \n_T \\
    \alpha_{T-2} &= a^{-2}( (\zx_T)^2 +(\zy_T)^2) - a^{-T-1} (\zx_{T-1} \n_T + \zy_{T-1} \w_T) - a^{-T-2} (\zx_T \n_T + \zy_T \w_T)
\end{align*}
Since $\zx_T$ is a Gaussian random variable with bounded variance, we see that $\alpha_t$ decays exponentially as $t$ decreases (up to some $a^{-T}$ additive terms). In a similar fashion one can show that $\sum_{t=1}^{T} \alpha_t^2 =  \frac{1-a^{-2T}}{1-a^{-2}}((\zx_T)^2 +(\zy_T)^2)^2 + O(T^2a^{-T})$ with high probability. Clearly $||\alpha||^{-2} = \Theta(1)$ with high probability. For $\beta$, note that $\zy_T$ is independent of $\x_t$ and observe that $\{a^T \beta_t \}_{t=1}^{T-1}$ are non--decaying and non--trivial random variables. Specifically these are subexponential random variables with $||\cdot||_{\psi_1}$ norm as $||a^{T} \beta_t ||_{\psi_1} = Ca^{-1}$. Here $||\cdot||_{\psi_1}$ norm is the same Definition 2.7.5 in~\cite{vershynin2018high}. To see this consider for example $t=T-1, T-2$, then 
\begin{align}
a^{T}\beta_{T-1} &= \la X_{T-1}, \ztp \ra = \x_{T-1} \zy_{T} - \y_{T-1} \zx_{T} = a^{-1}( \w_T \zx_T -\n_T \zy_T) \nonumber\\ 
a^{T} \beta_{T-2} &= \la X_{T-1}, \ztp \ra = \x_{T-1} \zy_{T} - \y_{T-1} \zx_{T} = a^{-1}(( \w_{T-1} + a^{-1} \w_{T})\zx_T - ( \n_{T-1} + a^{-1} \n_{T})\zy_T ) \label{scale_beta}
\end{align} 
Clearly, $a^{2T}|| \beta||_2^2 = \Omega(1)$ and $a^{2T}|| \beta||_2^2 = O(T)$ with high probability. Recall the error term 
\begin{align}
    \hat{A}_o - A_o &= \Big(\sum_{t=1}^{T}a^{-2T} \eta_{t+1} X_t^{\prime}\Big) (a^{-2T}\sum_{t=1}^T X_t X_t^{\top})^{-1} \nonumber\\
    &= \Big(\sum_{t=1}^T \la a^{-T} X_t, Z_T \ra a^{-T}\eta_{t+1} Z_T^{\prime} + \sum_{t=1}^T \la a^{-T} X_t, \ztp \ra a^{-T}\eta_{t+1} (\ztp)^{\prime}\Big) (a^{-2T}\sum_{t=1}^T X_t X_t^{\top})^{-1} \nonumber\\
    (\hat{A}_o - A_o)\ztp &= (\sum_{t=1}^T \la a^{-T} X_t, Z_T \ra a^{-T}\eta_{t+1} Z_T^{\prime}) (a^{-2T}\sum_{t=1}^T X_t X_t^{\top})^{-1} \ztp \nonumber \\
    &+ (\sum_{t=1}^T \la a^{-T} X_t, \ztp \ra a^{-T}\eta_{t+1} (\ztp)^{\prime}) (a^{-2T}\sum_{t=1}^T X_t X_t^{\top})^{-1} \ztp \nonumber \\
    &=  \frac{\la \alpha, \beta \ra ||\alpha||^{-2}||\beta||^{-2}}{\la \alpha, \beta \ra^2 ||\alpha||^{-2} ||\beta||^{-2} - 1}\sum_{t=1}^T \la a^{-T} X_t, Z_T \ra a^{-T}\eta_{t+1} -  \frac{-||\beta||^{-2}}{\la \alpha, \beta \ra^2 ||\alpha||^{-2} ||\beta||^{-2} - 1}\sum_{t=1}^T \la a^{-T} X_t, \ztp \ra a^{-T}\eta_{t+1}  \nonumber \\
    &= \frac{||\alpha||^{-2}||a^{T}\beta||^{-2}}{\la \alpha, a^T \beta \ra^2 ||\alpha||^{-2} ||a^T \beta||^{-2} - 1} \Big(\sum_{t=1}^T (\la \alpha, a^T\beta \ra \alpha_t - a^T \beta_t ||\alpha||^2) \eta_{t+1} \Big) = \gamma_T\label{error_dist}
\end{align}
Observe the term $a^T\beta_t||\alpha||^2 \eta_{t+1}$
\begin{align*}
 &a^T\beta_t||\alpha||^2 \eta_{t+1}  = ||\alpha||^2\begin{bmatrix}
 (a^{-1} (\w_{t+1} \zx_T - \n_{t+1} \zy_T) + a^{-2}(\w_{t+2} \zx_T - \n_{t+2} \zy_T) + \hdots) \n_{t+1} \\
(a^{-1} (\w_{t+1} \zx_T - \n_{t+1} \zy_T) + a^{-2}(\w_{t+2} \zx_T - \n_{t+2} \zy_T) + \hdots) \w_{t+1}
 \end{bmatrix} \\
 &= ||\alpha||^2\begin{bmatrix}
 a^{-1} (\w_{t+1}\n_{t+1} \zx_T - (\n_{t+1})^2 \zy_T) + (a^{-2}(\w_{t+2} \zx_T - \n_{t+2} \zy_T) + \hdots) \n_{t+1} \\
a^{-1} ((\w_{t+1})^2 \zx_T - \w_{t+1}\n_{t+1}\zy_T) + (a^{-2}(\w_{t+2} \zx_T - \n_{t+2} \zy_T) + \hdots) \w_{t+1}
 \end{bmatrix}\\
&\sum_{t=1}^T a^T\beta_t||\alpha||^2 \eta_{t+1} = a^{-1}||\alpha||^2 \Big(\underbrace{\begin{bmatrix}
-\sum_{t=1}^T (\n_{t+1})^2 \zy_T \\
\sum_{t=1}^T (\w_{t+1})^2 \zx_T
\end{bmatrix}}_{=\Theta(T)} + \sum_{t=1}^T \begin{bmatrix}
 \w_{t+1}\n_{t+1} \zx_T  + (a^{-1}(\w_{t+2} \zx_T - \n_{t+2} \zy_T) + \hdots) \n_{t+1} \\
 \w_{t+1}\n_{t+1}\zy_T + (a^{-1}(\w_{t+2} \zx_T - \n_{t+2} \zy_T) + \hdots) \w_{t+1}
 \end{bmatrix}\Big) \\
 &= a^{-1}||\alpha||^2 \Big(\Theta(T) \\
 &+\underbrace{\sum_{t=1}^T \begin{bmatrix}
 \w_t \n_t \zx_T + a^{-1} \w_{t+1} \n_t \zx_T + a^{-2} \w_{t+2} \n_t \zx_T + \ldots -  a^{-1} \n_{t+1} \n_t \zy_T - a^{-2} \n_{t+2} \n_t \zy_T - \ldots\\
 \w_t \n_t \zy_T + a^{-1} \w_{t+1} \w_t \zx_T + a^{-2} \w_{t+2} \w_t \zx_T + \ldots -  a^{-1} \n_{t+1} \w_t \zy_T - a^{-2} \n_{t+2} \w_t \zx_T - \ldots
 \end{bmatrix}}_{=O(\sqrt{T} \log{\frac{T}{\delta}})}\Big)
\end{align*}
The $O(\sqrt{T} \log{T})$ follows by applying Hanson-Wright inequality to each of $a^{-j}\sum_{t=1}^T \w_{t+j} \n_t$ terms where we get with probability at least $1-\delta/T$ that $a^{-j}\sum_{t=1}^T \w_{t+j} \n_t \leq c a^{-j} O(\sqrt{T} \log{\frac{T}{\delta}})$. Therefore simultaneously for all $j \leq T$ we have with probability at least $1 -\delta$ (using union bound) that $a^{-j}\sum_{t=1}^T \w_{t+j} \n_t \leq c a^{-j} O(\sqrt{T}\log{\frac{T}{\delta}}) \implies \sum_{j=1}^T a^{-j}\sum_{t=1}^T \w_{t+j} \n_t \leq O(\sqrt{T}\log{\frac{T}{\delta}})$. Plugging this in Eq.~\eqref{error_dist} we get that 
\begin{align*}
  \gamma_T &= \frac{||\alpha||^{-2}||a^{T}\beta||^{-2}}{\la \alpha, a^T \beta \ra^2 ||\alpha||^{-2} ||a^T \beta||^{-2} - 1} \Big(\underbrace{\sum_{t=1}^T (\la \alpha, a^T\beta \ra \alpha_t}_{=O(\sqrt{T})} - \underbrace{ a^T \beta_t ||\alpha||^2) \eta_{t+1}}_{=\Theta(T)} \Big)  
\end{align*}
Clearly then $\gamma_T$ in Eq.~\eqref{error_dist} satisfies a non--trivial pdf, \textit{i.e.}, error does not decay to zero.

Another interesting observation is that $\sum_{t=1}^T a^{-2T} \eta_{t+1} X_t^{\top}$ decays $O(a^{-T})$ with high probability, however the error is a non--decaying random variable. This immediately gives us that 
\begin{prop}
\label{condition_number}
The sample covariance matrix $\sum_{t=1}^T  X_t X_t^{\top}$ has the following singular values 
\[
\sigma_1(\sum_{t=1}^T  X_t X_t^{\top}) = \Theta(a^{2T}), \sigma_2(\sum_{t=1}^T  X_t X_t^{\top}) = O(\sqrt{T}a^{T})
\]
\end{prop}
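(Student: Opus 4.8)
The plan is to work with the scaled $2\times 2$ matrix $a^{-2T}\sum_{t=1}^T X_t X_t^{\top}$, whose Woodbury decomposition $M + UCV$ has already been assembled above, and then to recover the singular values of $\sum_{t=1}^T X_t X_t^{\top}$ through the scaling identity $\sigma_i\!\big(\sum_{t=1}^T X_t X_t^{\top}\big) = a^{2T}\,\sigma_i\!\big(a^{-2T}\sum_{t=1}^T X_t X_t^{\top}\big)$. The two ingredients are a description of the spectrum of the ``diagonal'' part $M = \norm{\alpha}^2 Z_T Z_T^{\top} + \norm{\beta}^2 \ztp (\ztp)^{\top}$ and an operator-norm bound on the rank-two coupling $UCV = \la \alpha, \beta\ra(\ztp Z_T^{\top} + Z_T (\ztp)^{\top})$; a single application of Weyl's inequality then separates the two singular values.

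First I would diagonalize $M$. By construction $Z_T \perp \ztp$, so $\{Z_T/\norm{Z_T}, \ztp/\norm{\ztp}\}$ is an orthonormal eigenbasis of $M$ with eigenvalues $\norm{\alpha}^2\norm{Z_T}^2$ and $\norm{\beta}^2\norm{\ztp}^2$. Since $\zx_T$ and $\zy_T$ are Gaussian with variance in $(a^{-2}, a^{-2}/(1-a^{-2}))$, a standard Gaussian tail bound gives $\norm{Z_T} = \norm{\ztp} = \Theta(1)$ with high probability. Feeding in the estimates already established in this section, $\norm{\alpha}^{-2} = \Theta(1)$ and $a^{2T}\norm{\beta}^2 = O(T)$, I obtain $\sigma_1(M) = \Theta(1)$ and $\sigma_2(M) = O(T a^{-2T})$.

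Next I would control the perturbation. Because $\norm{\ztp Z_T^{\top} + Z_T (\ztp)^{\top}} \le 2\norm{Z_T}\norm{\ztp} = \Theta(1)$, the rank-two term satisfies $\norm{UCV} = O(\abs{\la\alpha,\beta\ra}) = O(a^{-T})$, using the cross-term bound $\la\alpha,\beta\ra = O(a^{-T})$ derived above. Weyl's inequality then yields $\abs{\sigma_1(a^{-2T}\sum_t X_t X_t^{\top}) - \sigma_1(M)} \le \norm{UCV}$ and $\sigma_2(a^{-2T}\sum_t X_t X_t^{\top}) \le \sigma_2(M) + \norm{UCV}$. Since $\sigma_1(M) = \Theta(1)$ dominates the $o(1)$ perturbation, the leading singular value stays $\Theta(1)$; the second is at most $O(T a^{-2T}) + O(a^{-T})$, and both summands are $O(\sqrt T a^{-T})$. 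Rescaling by $a^{2T}$ gives exactly $\sigma_1 = \Theta(a^{2T})$ and $\sigma_2 = O(\sqrt T a^{T})$.

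The one genuinely delicate point is that the entire separation rests on the coupling $\la\alpha,\beta\ra$ being exponentially smaller than $\sigma_1(M)$; this is where the near-parallelism $Z_t \approx Z_\infty$ for large $t$ matters, so that $\beta_T = \la Z_T,\ztp\ra = 0$ exactly and the surviving $\beta_t$ are only $\Theta(a^{-T})$ per coordinate, and all the $\Theta(\cdot)$ statements must be made to hold simultaneously by a union bound over the relevant Gaussian and Hanson--Wright events. As an independent cross-check that bypasses the coupling term, one can use that for a $2\times 2$ PSD matrix $\sigma_1 = \Theta(\operatorname{tr})$ and $\sigma_2 = \det/\sigma_1$: here $\operatorname{tr}\big(\sum_t X_t X_t^{\top}\big) = \sum_t a^{2t}\norm{Z_t}^2 = \Theta(a^{2T})$, while the Lagrange identity $\det = \sum_{s<t}(\x_s\y_t - \x_t\y_s)^2$ combined with $Z_t - Z_s = \sum_{\tau=s+1}^t a^{-\tau}\eta_\tau = O(a^{-s})$ bounds each summand by $O(a^{2t})$, so $\det = O(T a^{2T})$ and $\sigma_2 = O(T)$, comfortably inside the claimed $O(\sqrt T a^T)$.
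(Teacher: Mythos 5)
Your proposal is correct (at the same informal, high--probability level of rigor as the paper's own argument), but it reaches the bound on $\sigma_2$ by a genuinely different route. For $\sigma_1$ you and the paper do essentially the same thing—a rank--one approximation plus a perturbation bound: the paper invokes $||a^{-2T}Y_T - \frac{1-a^{-2T}}{1-a^{-2}}Z_TZ_T^{\top}||_2 \leq O(a^{-T})$ from the $U_T$--$F_T$ machinery of Section~\ref{explosive}, while you split off the exact rank--one piece $||\alpha||^2 Z_TZ_T^{\top}$ from the $M+UCV$ decomposition. The real divergence is $\sigma_2$: the paper argues \emph{indirectly through the inconsistency result}, namely that the identification error $\gamma_T = \big(\sum_t a^{-2T}\eta_{t+1}X_t^{\prime}\big)\big(a^{-2T}Y_T\big)^{-1}$ is a non--decaying random variable while Cauchy--Schwarz gives $||\sum_t a^{-2T}\eta_{t+1}X_t^{\prime}|| = O(\sqrt{T}a^{-T})$, so that $\sigma_{\min}(Y_T)$ is forced to be $O(\sqrt{T}a^{T})$; you instead bound $\sigma_2$ directly by diagonalizing $M$ (using $Z_T \perp \ztp$ exactly), bounding the rank--two coupling by $O(|\la \alpha,\beta\ra|) = O(a^{-T})$, and applying Weyl. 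Your route buys a self--contained spectral argument that never passes through the OLS error, and it is quantitatively sharper: it yields $\sigma_2(Y_T)=O(a^{T})$, and your determinant/trace cross--check even gives $O(T)$, both of which imply the stated $O(\sqrt{T}a^{T})$. What the paper's detour buys is economy and emphasis: it reuses only facts already certified in the section ($\gamma_T$ non--trivial, the noise--covariate term small) and makes explicit the conceptual message that inconsistency and exponential ill--conditioning of $Y_T$ are two sides of the same coin. Two caveats on your version: it leans on the estimate $\la\alpha,\beta\ra = O(a^{-T})$, which the section asserts but only sketches (via $\alpha_t \approx a^{-(T-t)}||Z_T||^2$ and $a^{T}\beta_t = O(1)$), so your proof inherits the burden of making that bound—and all the $\Theta(1)$ statements—hold on one union event; and the expansion $a^{-T}X_t = \alpha_t Z_T + \beta_t \ztp$ implicitly treats $Z_T,\ztp$ as orthonormal when they are merely orthogonal with $||Z_T||=\Theta(1)$, a normalization you carry correctly through the eigenvalues of $M$ but which should be stated once to keep the Weyl step exact.
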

\begin{proof}
The largest singular values of $\sum_{t=1}^T X_t X_t^{\top} = \Theta(a^{2T})$ this follows because $$||\sum_{t=1}^T a^{-2T} X_t X_t^{\top} - \frac{1-a^{-2T}}{1-a^{-2}} Z_T Z_T^{\top}||_2 \leq O(a^{-T})$$ with high probability, which follows from the claims of Eq.~\eqref{zt_form}, \eqref{ut_ft} in Theorem~\ref{main_result} and discussion in Section~\ref{explosive}. The second claim follows because $\sum_{t=1}^T a^{-2T} \eta_{t+1} X_t^{\top}$ decays $\Omega(a^{-T})$ with high probability. To see this $$\sum_{t=1}^T a^{-2T} \eta_{t+1} X_t^{\top} \leq a^{-T} \sqrt{\sum_{t=1}^T \eta_t^{\prime} \eta_t}\sqrt{\sum_{t=1}^T a^{-2T}X_{t}^{\prime} X_t } \approx \sqrt{T} a^{-T} $$
The $\sqrt{T}$ factor can be removed by similar arguments as above. However the identification error is a random variable which implies that $\sigma_2(\sum_{t=1}^T a^{-2T} X_t X_t^{\top}) = O(\sqrt{T} a^{-T})$.
\end{proof}

\end{document}